\ifdefined\XeTeXversion\else\pdfoutput=1\fi
\documentclass[12pt]{article}

\usepackage{graphicx}
\usepackage{multirow}
\usepackage{amsmath,amssymb,amsfonts,amsthm,mathtools}
\usepackage{mathrsfs}
\usepackage{xcolor}
\usepackage{manyfoot,natbib}
\usepackage{booktabs,threeparttable,tabularx}
\usepackage{enumitem}
\usepackage{float}
\usepackage{lmodern,anyfontsize}
\usepackage{bm}
\usepackage{fullpage}
\usepackage[hidelinks]{hyperref}
\newcommand{\E}{\mathbb E}
\newcommand{\Pp}{\mathbb P}
\newcommand{\R}{\mathbb R}
\newcommand{\Var}{\operatorname{Var}}
\newcommand{\Cov}{\operatorname{Cov}}
\newcommand{\tr}{\operatorname{tr}}
\newcommand{\Diag}{\operatorname{Diag}}
\newcommand{\op}{\operatorname{op}}

\newcommand{\Kcal}{\mathcal K}
\newcommand{\norm}[1]{\left\lVert#1\right\rVert}
\newcommand{\spSign}{\mathcal{S}}
\newcommand{\bx}{\mathbf{x}}
\newcommand{\by}{\mathbf{y}}
\newcommand{\bz}{\mathbf{z}}
\newcommand{\bg}{\mathbf{g}}
\newcommand{\bs}{\mathbf{s}}
\newcommand{\bu}{\mathbf{u}}
\newcommand{\bv}{\mathbf{v}}
\newcommand{\bw}{\mathbf{w}}
\newcommand{\ba}{\mathbf{a}}
\newcommand{\bb}{\mathbf{b}}
\newcommand{\bl}{\mathbf{l}}
\newcommand{\bq}{\mathbf{q}}
\newcommand{\br}{\mathbf{r}}
\newcommand{\bt}{\mathbf{t}}
\newcommand{\bh}{\mathbf{h}}
\newcommand{\bmu}{\boldsymbol{\mu}}
\newcommand{\bdelta}{\boldsymbol{\delta}}
\newcommand{\bDelta}{\boldsymbol{\Delta}}
\newcommand{\bPi}{\boldsymbol{\Pi}}
\newcommand{\bSigma}{\boldsymbol{\Sigma}}
\newcommand{\bOmega}{\boldsymbol{\Omega}}
\newcommand{\bXi}{\boldsymbol{\Xi}}
\newcommand{\bA}{\mathbf{A}}
\newcommand{\bB}{\mathbf{B}}
\newcommand{\bK}{\mathbf{K}}
\newcommand{\bL}{\mathbf{L}}
\newcommand{\bD}{\mathbf{D}}
\newcommand{\bE}{\mathbf{E}}
\newcommand{\bQ}{\mathbf{Q}}
\newcommand{\bGamma}{\boldsymbol{\Gamma}}
\newcommand{\bLambda}{\boldsymbol{\Lambda}}
\newcommand{\bP}{\mathbf{P}}
\newcommand{\bS}{\mathbf{S}}
\newcommand{\bM}{\mathbf{M}}
\newcommand{\bN}{\mathbf{N}}
\newcommand{\bC}{\mathbf{C}}
\newcommand{\bH}{\mathbf{H}}
\newcommand{\bR}{\mathbf{R}}
\newcommand{\bI}{\mathbf{I}}
\newtheorem{theorem}{Theorem}
\newtheorem{assumption}{Assumption}
\newtheorem{remark}{Remark}
\newtheorem{lemma}{Lemma}
\newtheorem{proposition}{Proposition}
\newcommand{\ind}{\mathbb{I}}
\newcommand{\bG}{\mathbf{G}}
\newcommand{\bPhi}{\boldsymbol{\Phi}}
\newcommand{\bPsi}{\boldsymbol{\Psi}}
\newcommand{\papertitle}{Two-sample tests for principal eigenvalues and eigenvectors in high-dimensional elliptical factor models}

\begin{document}

\title{\bfseries\papertitle}

\author{Xinyue Xu$^1$, Mengtao Wen$^1$, and Long Feng$^1$\\
School of Statistics and Data Science,\\
LEBPS, KLMDASR and LPMC, Nankai University$^1$}

\date{}
\maketitle

\begin{abstract}
Changes in principal eigenvalues and eigendirections provide complementary
diagnostics of structural instability in factor models, but covariance-based
inference can be unreliable under heavy tails. We develop two-sample tests
for equality of these features of trace-normalized shape matrices under
high-dimensional elliptical factor models. Our approach combines Tyler's
identity with orthogonalization and cross-fitting to accommodate general
location and precision pilots. We derive a spectral limit theory that yields
asymptotically valid calibration without upper-tail moment assumptions on the
radial variables, allowing the two populations to have different radial
distributions and factor ranks. Simulations demonstrate good size control
and gains in size-adjusted power over covariance-based benchmarks under
heavy-tailed elliptical distributions. An application to S\&P~500 stock
returns illustrates how the tests distinguish changes in relative component
strength from changes in component orientation.
\end{abstract}

\section{Introduction}

\subsection{Background and motivation}

Spiked covariance models provide a framework for studying high-dimensional
data in which a few dominant directions capture systematic variation
against a background of weaker components \citep{Johnstone2001}.
An important instance is the approximate factor model with pervasive
factors, which summarizes dependence among many variables through a small
number of common factors and has broad applications in economics and
finance, including asset pricing and macroeconomic forecasting
\citep{FamaFrench1993,StockWatson2002}.
When the idiosyncratic covariance has bounded eigenvalues, pervasiveness
yields diverging leading eigenvalues, with the corresponding eigenspace
asymptotically recovering the factor loading space \citep{fan2013large}.
Principal component analysis (PCA) therefore plays a central role in the
analysis of large factor models: the empirical spectrum informs the
selection of the number of factors, while the leading eigenvectors and
component scores provide estimates of factor loadings and latent factors
\citep{BaiNg2002,Bai2003}.

However, the standard PCA-based analysis of static factor models typically
assumes that factor loadings remain constant over the estimation period.
In economic and financial applications, changes in policy regimes, market
conditions, and institutional arrangements can alter the relationships
between observed variables and common factors. Ignoring substantial
changes in factor loadings can compromise the interpretation of estimated
factors and inflate the number of factors selected from the data
\citep{BreitungEickmeier2011}. These concerns make tests of loading stability
an important diagnostic tool for empirical factor analysis and have
motivated a substantial body of research on structural breaks and parameter
instability in large factor models
\citep{ChenDoladoGonzalo2014,HanInoue2015}.

An omnibus rejection of loading stability does not, by itself, reveal
which features of the underlying factor structure have changed.
For example, consider a stock-return panel whose leading principal
component captures broad market movements. An increase in the leading
eigenvalue with an unchanged eigenvector indicates stronger fluctuations
along the same pattern of market-wide co-movement. By contrast, a change
in the eigenvector---for instance, a shift in weight from financial stocks
toward technology stocks---indicates a change in the composition of the
common component, even if its variance remains unchanged.
Testing individual principal eigenvalues and eigendirections across
populations or periods can thus complement overall loading-stability tests
with a more informative diagnosis of structural change
\citep{FanLiXiaZheng2026}.

Building on this perspective, \citet{FanLiXiaZheng2026} establish
two-sample tests for eigenvalues, eigenvalue proportions, and eigendirections under divergent spiked covariance models, where they assume that the population principal component scores, standardized to unit variance, are mutually independent and have uniformly
bounded fourth moments. These assumptions permit non-Gaussian observations, but can be
restrictive for financial data. In particular, a common random volatility
scale can make these scores dependent even when they are uncorrelated,
while sufficiently heavy tails can violate the fourth-moment condition.
Heavy tails and nonlinear dependence are well documented in financial
returns \citep{Cont2001}. These features motivate diagnostic methods
that accommodate them.

Elliptical distributions provide a natural setting for accommodating
heavy tails and dependence among principal component scores
\citep{FanLiuWang2018}. Their radial-angular representation separates
radial variation from a shape matrix that remains meaningful even when
the covariance does not exist. Since this matrix is identifiable up
to a positive scalar, we fix its scale by setting its trace equal to the
dimension. Therefore, we seek a two-sample diagnostic framework for the
principal eigenvalues and eigendirections of this shape matrix, without
imposing upper-tail moment conditions on the radial variable.
Developing such a framework entails two main challenges. 
First, consistency of robust shape estimators does not imply the first-order spectral expansions needed for testing; valid calibration requires identifying the leading fluctuations and controlling the remainder at the inferential scale. 
Second, small estimation errors along the many nonspiked directions can accumulate and shift the null distribution of eigendirection statistics, requiring appropriate centering for valid calibration.

Robust methods for elliptical data have received considerable attention.
\citet{Tyler1987} developed a scatter estimator whose
fixed-dimensional limiting distribution is independent of the radial law.
In high-dimensional settings, \citet{HanLiu2018} proposed elliptical component
analysis based on multivariate Kendall's tau, obtaining principal
eigenspace estimation guarantees without radial moment assumptions.
For elliptical factor models, \citet{FanLiuWang2018} established a robust
Principal Orthogonal complEment Thresholding (POET) framework for
covariance estimation under fourth-moment conditions,
while \citet{XuMaWangFeng2025} developed spatial-sign and Tyler-based extensions for scatter and precision estimation. 
These developments provide useful estimation tools for our problem, but further
theory is needed to characterize the relevant spectral fluctuations for the current tasks.

\subsection{Our contributions}

In this paper, we develop a two-sample diagnostic framework for the principal eigenvalues and eigendirections of trace-normalized shape matrices under high-dimensional elliptical factor models. Our main contributions are as follows.

\begin{enumerate}[label=(\roman*),leftmargin=*]
\item \textbf{Orthogonal spectral inference with general pilots.}
Using Tyler's identity and trace normalization, we construct a shape estimator from general location and precision pilots. 
Neyman orthogonalization removes the first-order effect of precision-pilot error, while elliptical symmetry already ensures first-order insensitivity to location perturbations.
We then use cross-fitting to separate nuisance estimation from evaluation.
We derive spectral expansions whose leading terms depend only on the angular component, yielding asymptotic normality for spiked eigenvalues and a limit theory for eigendirections.
Under the stated regularity conditions, \(o_{\Pp}(n^{-1/4})\) rates for the precision error in operator norm, up to a common positive scale, and the standardized location error suffice without upper-tail radial moment assumptions.

\item \textbf{Two-sample diagnostics.}
We derive studentized normal tests for spiked eigenvalues and tests for equality of the corresponding eigendirections.
For eigendirection testing, centering removes the cumulative contributions from non-spiked directions, and a fitted Gaussian quadratic form calibrates the remaining fluctuations.
The resulting critical values and \(p\)-values are asymptotically valid even when the two samples have different radial distributions.
Simulations assess finite-sample size and power under Gaussian and heavy-tailed elliptical distributions.
An application to S\&P~500 stock returns illustrates how the tests distinguish changes in relative component strength from changes in the weights on individual stocks.
\end{enumerate}

\subsection{Organization and notation}

The remainder of the paper is organized as follows.
Section~\ref{sec:model-and-targets} introduces the elliptical factor model and formulates the testing problems.
In Section~\ref{sec:single-sample}, we construct the shape estimator and establish its one-sample spectral properties.
The two-sample tests for principal eigenvalues and eigendirections are presented in Section~\ref{sec:two-sample-tests}.
Section~\ref{sec:simulation} presents the numerical studies, including simulations and an application to financial returns in Section~\ref{sec:application}.
We conclude with a discussion in Section~\ref{sec:conclusion}.
Proofs and additional simulation and empirical results are provided in the Supplementary Material.

We use the following notation throughout.
For positive deterministic sequences \(a_n\) and \(b_n\), we write \(a_n\asymp b_n\) if their ratio is bounded above and away from zero, and use \(O(\cdot)\) and \(o(\cdot)\) for deterministic orders.
For a real number \(x\), let \(x_+=\max(x,0)\).
Vectors are denoted by bold lowercase letters, with \(\langle\bu,\bv\rangle=\bu^\top\bv\) and \(\|\bu\|_2\) denoting the Euclidean inner product and norm.
For \(0<v<1\), \(\|\bu\|_v=(\sum_p|u_p|^v)^{1/v}\) is the \(\ell_v\) quasi-norm, and \(\|\bu\|_0\) counts the nonzero entries of \(\bu\).
Matrices are denoted by bold uppercase letters; \(\bA^\top\), \(\tr(\bA)\), \(\|\bA\|_{\op}\), and \(\|\bA\|_{\mathrm F}\) denote the transpose, trace, operator norm, and Frobenius norm, respectively.
We write \((\bA)_{ps}\) and \((\bA)_{p\cdot}\) for the \((p,s)\)th entry and the \(p\)th row of \(\bA\), respectively, omitting the parentheses when no ambiguity arises.
We write \(\bI_d\) for the \(d\times d\) identity matrix and \(\Diag(\cdot)\) for the diagonal or block diagonal matrix formed from the specified entries or blocks.
For a symmetric matrix \(\bA\), \(\lambda_{\min}(\bA)\) and \(\lambda_{\max}(\bA)\) denote its smallest and largest eigenvalues, while \(\bA\succeq0\) and \(\bA\succ0\) indicate positive semidefiniteness and positive definiteness.
For random quantities, \(\Pp\), \(\E\), \(\Var\), and \(\Cov\) denote probability, expectation, variance, and covariance, and \(\ind(A)\) is the indicator of an event \(A\).
Finally, \(O_{\Pp}(\cdot)\) and \(o_{\Pp}(\cdot)\) denote stochastic orders, and \(\xrightarrow{\Pp}\), \(\xrightarrow{\mathcal D}\), and \(\stackrel{\mathcal{D}}{=}\) denote convergence in probability, convergence in distribution, and equality in distribution, respectively.


\section{Model and Testing Problems}
\label{sec:model-and-targets}

\subsection{Elliptical factor model}

We consider a \(d\)-dimensional elliptical random vector \(\bx\) with the
stochastic representation
\begin{equation}\label{eq:model}
 \bx\stackrel{\mathcal{D}}{=}\bmu+\xi\bSigma^{1/2}\bs,
\end{equation}
where \(\bmu\in\R^d\) is a location parameter, \(\bSigma\) is a positive
definite scatter matrix, the angular vector \(\bs\) is uniformly distributed on the unit
sphere \(\mathbb S^{d-1}=\{\bu\in\R^d:\norm{\bu}_2=1\}\), and the radial variable \(\xi>0\) is independent of \(\bs\).
Here, the scatter matrix is identifiable up to a positive scalar, so we fix its scale by imposing \(\tr(\bSigma)=d\) and henceforth refer to \(\bSigma\) as the shape matrix.


To describe the principal components of \(\bSigma\), we assume that it admits the approximate factor structure \citep{fan2013large}
\begin{equation}\label{eq:factor-structure}
 \bSigma=\bB\bB^\top+\bSigma_{\varepsilon},
 \qquad \bB\in\R^{d\times m},\qquad \bSigma_{\varepsilon}\succ0,
\end{equation}
where \(\bB\) is the factor loading matrix, \(m\) is the number of common
factors, and \(\bSigma_{\varepsilon}\) represents the idiosyncratic component. This
structure is imposed directly on the shape matrix and does not require
\(\bx\) to have a finite covariance matrix. We use the following
regularity conditions.

\begin{assumption}[Elliptical factor model]\label{ass:factor-structure}
The random vector \(\bx\) follows the elliptical model in \eqref{eq:model}, with \(\tr(\bSigma)=d\). Its shape matrix admits the factor decomposition in \eqref{eq:factor-structure}, where the number of factors \(m\geq1\) remains fixed as \(d\to\infty\). The following conditions hold:
\begin{enumerate}[label=(\roman*),leftmargin=*]
 \item There exists a positive definite matrix \(\bXi\in\R^{m\times m}\)
 with distinct eigenvalues \(\kappa_1>\cdots>\kappa_m>0\) such that
 \(\|d^{-1}\bB^\top\bB-\bXi\|_{\op}\longrightarrow0\).
 \item The rows of the loading matrix are uniformly bounded, i.e., 
 \(\max_{1\le p\le d}\norm{\bB_{p\cdot}}_2\le C_B\)
 for some constant \(C_B<\infty\).
 \item There exist constants \(0<c_{\varepsilon}<C_{\varepsilon}<\infty\) such that \(c_{\varepsilon}\le\lambda_{\min}(\bSigma_{\varepsilon})
  \le\lambda_{\max}(\bSigma_{\varepsilon})\le C_{\varepsilon}\).
\end{enumerate}
\end{assumption}

The convergence in Assumption~\ref{ass:factor-structure}(i) is the pervasiveness condition used in
approximate factor models \citep{fan2013large}. Together with the bounded
loadings in Assumption~\ref{ass:factor-structure}(ii), it ensures that the common factors affect a
nonvanishing fraction of the observed variables. Assumption~\ref{ass:factor-structure}(iii) allows dependence
among the idiosyncratic components while keeping the eigenvalues of
\(\bSigma_{\varepsilon}\) bounded above and away from zero.

Let \(\lambda_1\ge\cdots\ge\lambda_d>0\) denote the eigenvalues of
\(\bSigma\), with corresponding orthonormal eigenvectors
\(\bu_1,\ldots,\bu_d\). By Weyl's inequality,
Assumption~\ref{ass:factor-structure} implies
\begin{equation}
 {\lambda_j}/{d}\longrightarrow\kappa_j \quad \text{for}\quad 1\le j\le m,
 \qquad\text{and}\qquad 
 c_{\varepsilon}\le\lambda_j\le C_{\varepsilon} \quad \text{for}\quad m<j\le d.
\end{equation}
Thus, the first \(m\) eigenvalues are of order \(d\), whereas the remaining eigenvalues stay bounded. 
To ensure that each principal eigenvector is uniquely defined up to sign for all sufficiently large \(d\), Assumption~\ref{ass:factor-structure}(i) requires the limits \(\kappa_1,\ldots,\kappa_m\) to be distinct.
This condition separates each principal eigenvalue from the rest of the spectrum by a gap of order \(d\).
Our inferential targets are these principal eigenvalues and their associated
eigendirections.

\subsection{Two-sample hypotheses}
\label{subsec:testing-problems}

Let \(\{\bx_i^{(r)}\}_{i=1}^{n_r}\) for \(r=1,2\)  be two independent
samples, each consisting of independent and identically distributed (i.i.d.)
observations from a population satisfying
Assumption~\ref{ass:factor-structure} with its own fixed factor rank
\(m_r\). Denote their shape matrices by
\(\bSigma^{(1)}\) and \(\bSigma^{(2)}\), with
\(\tr(\bSigma^{(r)})=d\), and write
\(\lambda_j^{(r)}\) and \(\bu_j^{(r)}\) for the corresponding ordered eigenvalues and unit eigenvectors. 
To accommodate heterogeneous populations, we do not require their locations, radial distributions, or factor ranks to coincide.
For a fixed \(j\in\{1,\ldots,m\}\), where \(m=\min(m_1,m_2)\), we consider the following two tests concerning the \(j\)th principal component in each population.

The first concerns equality of the \(j\)th principal eigenvalues:
\begin{equation}\label{eq:null-eigenvalue}
 \mathbb{H}_{0,\lambda,j}:\ \lambda_j^{(1)}=\lambda_j^{(2)}
 \qquad\text{versus}\qquad
 \mathbb{H}_{1,\lambda,j}:\ \lambda_j^{(1)}\ne\lambda_j^{(2)}.
\end{equation}
It is worth noting that testing equality of eigenvalues here is equivalent to testing
equality of their proportions of the total trace, 
since both shape matrices have trace \(d\).
The second concerns equality of the \(j\)th principal eigendirections.
Because eigenvectors are defined only up to sign, we formulate the
hypotheses in terms of the absolute inner product of the unit eigenvectors:
\begin{equation}\label{eq:null-eigenvector}
 \mathbb{H}_{0,u,j}:\ \left|\langle\bu_j^{(1)},\bu_j^{(2)}\rangle\right|=1
 \qquad\text{versus}\qquad
 \mathbb{H}_{1,u,j}:\ \left|\langle\bu_j^{(1)},\bu_j^{(2)}\rangle\right|<1.
\end{equation}
Equivalently, the null hypothesis states that
\(\bu_j^{(1)}=\pm\bu_j^{(2)}\). 
These two testing problems assess the
relative strength and orientation of a principal component, respectively.


\section{Shape Estimation and Spectral Asymptotics}
\label{sec:single-sample}

To construct the two-sample tests, we first develop one-sample
estimation and asymptotic theory. The resulting procedures for two-sample testing will be
presented in Section~\ref{sec:two-sample-tests}. Throughout this section,
we consider independent and identically distributed observations
\(\bx_1,\ldots,\bx_n\) from a population satisfying
Assumption~\ref{ass:factor-structure} and suppress the group index. Here,
\(n\) and \(m\) denote the sample size and factor rank of this population,
respectively.

\subsection{A pilot-based Tyler update}

Tyler's M-estimator provides a natural starting point for estimating
\(\bSigma\) under elliptical symmetry \citep{Tyler1987}. Its population
counterpart is characterized by Tyler's identity: writing
\(\bOmega=\bSigma^{-1}\), we have
\begin{equation}\label{eq:population-tyler-identity}
 \E\left[
 d\,\frac{(\bx-\bmu)(\bx-\bmu)^\top}
 {(\bx-\bmu)^\top\bOmega(\bx-\bmu)}
 \right]=\bSigma.
\end{equation}
We refer to the matrix inside the square brackets in
\eqref{eq:population-tyler-identity} as the Tyler score.
Substituting the representation in \eqref{eq:model} cancels the radial
factor \(\xi^2\) between the numerator and denominator, reducing this
score to \(d\bSigma^{1/2}\bs\bs^\top\bSigma^{1/2}\), whose expectation
is \(\bSigma\). This identity does not require any moment condition on
the radial variable.

In the classical setting with fixed \(d\), replacing the expectation by
a sample average and the unknown precision by the inverse of the scatter
estimate leads to a fixed-point equation. Given a location estimate
\(\widehat{\bmu}\), Tyler's M-estimator is a positive definite solution
of
\begin{equation}\label{eq:classical-tme}
 \widehat{\bSigma}_{T}
 =\frac{d}{n}\sum_{i=1}^n
 \frac{(\bx_i-\widehat{\bmu})(\bx_i-\widehat{\bmu})^\top}
 {(\bx_i-\widehat{\bmu})^\top
  \widehat{\bSigma}_{T}^{-1}(\bx_i-\widehat{\bmu})},
 \qquad \tr(\widehat{\bSigma}_{T})=d.
\end{equation}
Under the usual subspace conditions on the centered observations,
\eqref{eq:classical-tme} admits a unique solution after fixing its scale
\citep{Tyler1987}. When \(d>n\), however, its right-hand side has rank
at most \(n\) and cannot equal a positive definite matrix.
Regularized Tyler estimators address this
problem by introducing shrinkage toward a positive definite target
\citep{sun2014regularized}. 
Nevertheless, a nonvanishing shrinkage penalty generally changes the population fixed point and introduces bias relative to the original shape matrix \citep{KammounEtAl2016}. 
For inference on the eigenstructure of
\(\bSigma\), the resulting bias has to be controlled at the
relevant inferential scale.

This motivates using \eqref{eq:population-tyler-identity} directly to
construct a sample update. Let \(\widehat{\bmu}\) and
\(\widehat{\bP}\succ0\) be pilot estimators of the nuisance parameters
\(\bmu\) and \(\bOmega\), respectively. Evaluating the Tyler score at
these pilots and averaging over the sample yields the update
\begin{equation}\label{eq:pilot-tyler-update}
 \widehat{\bSigma}_{\mathrm{plug}}
 =\frac{d}{n}\sum_{i=1}^n
 \frac{(\bx_i-\widehat{\bmu})(\bx_i-\widehat{\bmu})^\top}
 {(\bx_i-\widehat{\bmu})^\top
  \widehat{\bP}(\bx_i-\widehat{\bmu})}.
\end{equation}
Here, to ensure that the update is well defined, a summand is set to zero
when \(\bx_i=\widehat{\bmu}\); the same convention applies to analogous
expressions below. With the oracle nuisance values \((\bmu,\bOmega)\),
this update is an unbiased estimator of \(\bSigma\); although the update
is singular when \(d>n\), its leading eigenpairs can still be studied.
When the nuisance parameters are estimated, suitable accuracy conditions
on both pilots ensure that their errors do not affect the first-order
spectral limits after the refinement in the next subsection. These
conditions accommodate flexible pilot constructions; one implementation
is given in Section~\ref{subsec:pilot-construction}.


\subsection{Neyman orthogonalization and cross-fitting}
\label{subsec:orth-estimator}

The plug-in update in \eqref{eq:pilot-tyler-update} involves estimation of both location and precision. 
Under elliptical symmetry and the regularity conditions stated in Section~\ref{subsec:single-sample-asymptotics}, the expected Tyler score has zero derivative with respect to location at the true nuisance values, whereas its derivative with respect to the precision matrix is generally nonzero.
We therefore apply Neyman orthogonalization to
eliminate the first-order sensitivity to the precision pilot.
Furthermore, we use cross-fitting for both nuisance
pilots to reduce the need for restrictive complexity conditions \citep{ChernozhukovEtAl2018}. 
For simplicity, we directly give the following construction combining these two ingredients.

Specifically, we randomly partition \(\{1,\ldots,n\}\) into a fixed number \(K\ge2\) of disjoint
folds \(I_1,\ldots,I_K\) with
\(N_\ell=|I_\ell|\asymp n\). For each \(\ell\in\{1,\ldots,K\}\), we construct pilots
\(\widehat{\bmu}_{-\ell}\) and \(\widehat{\bP}_{-\ell}\succ0\) for
\(\bmu\) and \(\bOmega\), respectively, using only observations outside
\(I_\ell\).
Then, we summarize the Tyler scores evaluated at these held-out pilots over the corresponding fold, and average the precision pilots over all folds:
\begin{equation}\label{eq:cross-fitted-tyler}
 \begin{aligned}
 \widehat{\bSigma}_{\mathrm{cf}}
 &=\frac{d}{n}\sum_{\ell=1}^K\sum_{i\in I_\ell}
 \frac{
 (\bx_i-\widehat{\bmu}_{-\ell})
 (\bx_i-\widehat{\bmu}_{-\ell})^\top
 }{
 (\bx_i-\widehat{\bmu}_{-\ell})^\top
 \widehat{\bP}_{-\ell}
 (\bx_i-\widehat{\bmu}_{-\ell})
 },
 \qquad \bar{\bP}
 =\sum_{\ell=1}^K\frac{N_\ell}{n}\widehat{\bP}_{-\ell}.
 \end{aligned}
\end{equation}
To impose the trace constraint, we normalize the cross-fitted update and calibrate the averaged precision pilot:
\[
 \widehat{c}_s=\frac{\tr(\widehat{\bSigma}_{\mathrm{cf}})}{d},\qquad
 \widetilde{\bSigma}=\frac{\widehat{\bSigma}_{\mathrm{cf}}}{\widehat{c}_s},\qquad
 \widetilde{\bOmega}=\widehat{c}_s\,\bar{\bP}.
\]
Finally, we apply the orthogonal correction
to this trace-normalized update to get the shape estimator
\begin{equation}\label{eq:pre-poet-orthogonalization}
 \widehat{\bSigma}
 =\widetilde{\bSigma}+{2}{d}^{-1}\left[
 \widetilde{\bSigma}\widetilde{\bOmega}\widetilde{\bSigma}
 -d^{-1}{\tr(\widetilde{\bSigma}^2\widetilde{\bOmega})}\widetilde{\bSigma}
 \right].
\end{equation}
We have \(\tr(\widetilde{\bSigma})=d\) due to the trace normalization, and the correction term in
\eqref{eq:pre-poet-orthogonalization} is trace-free, so that \(\tr(\widehat{\bSigma})=d\) as well. 
The spectral decomposition of \(\widehat{\bSigma}\) is
\[
 \widehat{\bSigma}
 =\sum_{j=1}^d\widehat\lambda_j\widehat{\bu}_j\widehat{\bu}_j^\top,
 \qquad \widehat\lambda_1\ge\cdots\ge\widehat\lambda_d,
\]
where \(\widehat{\bu}_1,\ldots,\widehat{\bu}_d\) are the corresponding
orthonormal eigenvectors.


\begin{remark}[Precision pilots up to scale]\label{rem:pilot-scale}
The construction also allows precision pilots that approximate
\(\bP_*=c_d^{-1}\bOmega\), where \(c_d>0\) is an unknown
deterministic scale factor that may depend on \(d\). At the oracle inputs
\((\bmu,\bP_*)\), the expected Tyler score equals
\(c_d\bSigma\). Trace normalization recovers \(\bSigma\),
while multiplying \(\bP_*\) by the retained trace scale \(c_d\)
recovers \(\bOmega\). The empirical factor \(\widehat{c}_s\) above provides
the corresponding sample calibration. Therefore, trace normalization together
with precision calibration removes the common scalar ambiguity without
requiring knowledge of \(c_d\).
\end{remark}

\subsection{Asymptotic expansions and limiting distributions}
\label{subsec:single-sample-asymptotics}

We first establish an asymptotic expansion for
\(\widehat{\bSigma}\) and then obtain the
eigenvalue and eigendirection results through spectral perturbation.
Throughout this subsection, we assume that \(n,d\to\infty\). We impose separate
conditions on the radial distribution and the nuisance estimates,
without prescribing a particular pilot construction. 
Let \(\bar\xi=\xi/\sqrt d\) and
\(\zeta_1=\E\{\|\bx-\bmu\|_2^{-1}\}\).

\begin{assumption}[Inverse-radial moments]\label{ass:inverse-radial-moments}
 There exist constants
 \(c_\xi\in(0,1)\) and \(K_\xi<\infty\)
 such that
 \(\E(\bar\xi^{-q})<\infty\) and 
  \({\{\E(\bar\xi^{-q})\}^{1/q}}/{\E(\bar\xi^{-1})}
  \le K_\xi\sqrt q\) for every positive integer \(q\le\lfloor c_\xi d\rfloor\).
\end{assumption}

\begin{assumption}[Nuisance estimation]\label{ass:nuisance-estimation}
 The nuisance estimates are constructed on the training samples specified in Section~\ref{subsec:orth-estimator}, and
 \(\widehat{\bP}_{-\ell}\succ0\). For a deterministic
 \(c_d>0\) that is the same across all folds and a deterministic sequence
 \(a_n=o(n^{-1/4})\),
 \begin{equation}\label{eq:general-nuisance-rates}
  \max_{\ell\le K}
  \zeta_1\|\widehat{\bmu}_{-\ell}-\bmu\|_2
  =O_{\Pp}(a_n),\qquad 
  \max_{\ell\le K}
  \|c_d\widehat{\bP}_{-\ell}-\bOmega\|_{\op}
  =O_{\Pp}(a_n).
 \end{equation}
\end{assumption}

Assumption~\ref{ass:inverse-radial-moments} bounds the growth of inverse-radial moments to control the effect of location estimation on observations close to the center.
Related conditions on inverse-distance moments also appear in the literature on high-dimensional spatial-sign inference \citep{ZouPengFengWang2014,WangPengLi2015}.
As explained in Remark~\ref{rem:pilot-scale}, the precision pilot need only estimate \(\bOmega\) up to scale, so \eqref{eq:general-nuisance-rates} allows the scalar \(c_d\).
Assumption~\ref{ass:nuisance-estimation} requires only \(o_{\Pp}(n^{-1/4})\) accuracy for the standardized location error and the precision error in operator norm.
Verification of Assumption~\ref{ass:nuisance-estimation} for a concrete implementation is
deferred to Section~\ref{subsec:pilot-construction}.

We now identify the first-order term of the shape-matrix estimator.
For each \(j\le m\), we define the matrix
\(\bG_j=\sum_{k\ne j}{\bu_k\bu_k^\top}/{(\lambda_j-\lambda_k)}\).
We also write \(\epsilon_n=a_n+n^{-1/2}\) and
\(\rho_n=\epsilon_n^2\), so \(\sqrt n\rho_n\to0\). 

\begin{theorem}[Expansion of the shape estimator]\label{thm:common-expansion}
Under Assumptions~\ref{ass:factor-structure}--\ref{ass:nuisance-estimation},
\begin{equation}\label{eq:common-matrix-expansion}
 \widehat{\bSigma}-\bSigma
 =\frac{d+2}{d}\frac1n\sum_{i=1}^n\bPsi(\bx_i)+\bR_n,
\end{equation}
where \(\bPsi(\bx)
 =d\bSigma^{1/2}\bs\bs^\top\bSigma^{1/2}
  -(\bs^\top\bSigma\bs)\bSigma\), and \(\bR_n\) satisfies
\begin{equation}\label{eq:common-spectral-remainder}
 \lambda_j^{-1}|\bu_j^\top\bR_n\bu_j|
 +\|\bG_j\bR_n\bu_j\|_2=O_{\Pp}(\rho_n)\qquad \text{for each}\quad  j\le m.
\end{equation}
In addition,
\begin{equation}\label{eq:common-matrix-rate}
 d^{-1}\|\widehat{\bSigma}-\bSigma\|_{\mathrm F}
 =O_{\Pp}(\epsilon_n).
\end{equation}
\end{theorem}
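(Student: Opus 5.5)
The plan is to write the estimator as an oracle average plus nuisance perturbations, working fold by fold and conditioning on the training data. Set \(\bDelta_\ell=c_d\widehat{\bP}_{-\ell}-\bOmega\) and \(\bdelta_\ell=\widehat{\bmu}_{-\ell}-\bmu\). By Remark~\ref{rem:pilot-scale}, \(c_d\) cancels after trace normalization and calibration of \(\bar{\bP}\), so we may take \(c_d=1\).

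At the true location, \(\bx_i-\bmu=\xi_i\bSigma^{1/2}\bs_i\) and the Tyler score becomes
\[
d\bSigma^{1/2}\bs_i\bs_i^\top\bSigma^{1/2}/(1+t_i),\qquad t_i=\bs_i^\top\bSigma^{1/2}\bDelta_\ell\bSigma^{1/2}\bs_i .
\]
Since \(|t_i|\le\|\bSigma\|\ldots\) is not small uniformly, we instead use \(t_i=\bs_i^\top\bA\bs_i\) with \(\bA=\bSigma^{1/2}\bDelta_\ell\bSigma^{1/2}\). Concentration on the sphere gives \(t_i=d^{-1}\tr(\bA)+O_{\Pp}(\|\bA\|_{\mathrm F}/d)\), and the mean term can be absorbed by trace normalization. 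We then expand \(1/(1+t)=1-t+t^2/(1+t)\).

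\textbf{First-order term and the orthogonal correction.} The zeroth-order term gives the oracle average \(d\bSigma^{1/2}\bs_i\bs_i^\top\bSigma^{1/2}\). For the linear term we use the uniform-sphere identity
\[
\E[\bs\bs^\top(\bs^\top\bA\bs)]=\{2\bA+\tr(\bA)\bI_d\}/\{d(d+2)\},
\]
so its conditional mean equals \(-\{2\bSigma\bDelta_\ell\bSigma+\tr(\bDelta_\ell\bSigma)\bSigma\}/(d+2)\). The trace part is removed by \(\widehat c_s\). The \(\bSigma\bDelta\bSigma\) part should cancel against the correction \(2d^{-1}[\widetilde{\bSigma}\widetilde{\bOmega}\widetilde{\bSigma}-\cdots]\) linearized at \(\widetilde{\bOmega}=\bOmega+\bar{\bDelta}\). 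The mismatch between \(2/(d+2)\) and \(2/d\) is \(O(d^{-2}\|\bSigma\bDelta\bSigma\|)\), which is negligible.

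Linearizing the same correction in the sampling error \(\bE=\widetilde{\bSigma}-\bSigma\) produces \(2d^{-1}[\bE\bOmega\bSigma+\bSigma\bOmega\bE]=4\bE/d\) minus a trace term. Together with the normalization derivative \(-(d^{-1}\tr\bE)\bSigma\), this gives exactly the factor \((d+2)/d\) and the centering \((\bs^\top\bSigma\bs)\bSigma\) in \(\bPsi\). Checking that the coefficients match to \(O(d^{-1})\) relative error is routine algebra.

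\textbf{Remainder control.} The centered linear terms \(n^{-1}\sum_i\{\text{linear}_i-\E(\cdot\mid\text{training})\}\) are conditionally mean zero, because the pilots are independent of the fold. Their projections \(\bu_j^\top(\cdot)\bu_j/\lambda_j\) and \(\bG_j(\cdot)\bu_j\) therefore have conditional second moments \(O(a_n^2/n)\), which is \(o(\rho_n)\). The quadratic terms \(t^2/(1+t)\), the products \(\bE\bDelta\), and the second-order parts of the correction and normalization are all \(O_{\Pp}(\epsilon_n^2)=O_{\Pp}(\rho_n)\) in the two spectral seminorms. Here we use that \(\|\bG_j\|_{\op}=O(1)\) on the nonspiked space and \(O(d^{-1})\) across spikes, and that \(\bu_j^\top\bSigma^{1/2}\bs\) has size \(\sqrt{\lambda_j/d}\).

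\textbf{Location error.} We expand \(\bx_i-\widehat{\bmu}_{-\ell}=\xi_i\bSigma^{1/2}\bs_i-\bdelta_\ell\). The terms linear in \(\bdelta_\ell\) are odd in \(\bs_i\), so they have conditional mean zero by elliptical symmetry and contribute \(O_{\Pp}(\zeta_1\|\bdelta\|/\sqrt n)\). The second-order terms are bounded by \(\|\bdelta\|^2\E(\xi^{-2})\lesssim(\zeta_1\|\bdelta\|)^2\). I expect this to be the main obstacle: observations with \(\xi_i\) near zero make the expansion invalid. I would first split on the event \(\{\xi_i\ge C\|\bdelta_\ell\|\}\). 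On its complement I would use the bound \(\|\text{score}\|\le d\|\bSigma\|\) together with Assumption~\ref{ass:inverse-radial-moments} at \(q\asymp\log n\), via Markov, to show the complement contributes \(o(\rho_n)\). On the event itself, the \(\sqrt q\) growth of the moments gives sub-Gaussian control of \(\xi^{-1}\), which handles higher-order terms uniformly over \(i\le n\).

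\textbf{Frobenius rate.} Finally, \eqref{eq:common-matrix-rate} follows from \(\E\|\bPsi(\bx)\|_{\mathrm F}^2\lesssim d^2\E(\bs^\top\bSigma\bs)^2=O(d^2)\), which gives \(d^{-1}\|n^{-1}\sum\bPsi\|_{\mathrm F}=O_{\Pp}(n^{-1/2})\). The nuisance contributions are bounded in Frobenius norm by \(d^{-1}\cdot d\,a_n\).
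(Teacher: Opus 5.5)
Your route is essentially the paper's. You expand $(1+e)^{-1}$ exactly at the oracle nuisances and get the conditional mean of the precision-linear term from the spherical fourth-moment identity. The location-linear term vanishes by oddness in $\bs$, and the linear terms are centered conditionally on the training folds. Trace normalization turns $\bPhi(\bx)=d\bSigma^{1/2}\bs\bs^\top\bSigma^{1/2}-\bSigma$ into $\bPsi$, and the correction produces $(d+2)/d$. But several steps fail as written, and this matters because Theorem~\ref{thm:common-expansion} imposes no relation between $n$ and $d$.

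First, the cancellation of the precision term must be exact, and it is. Write $\bD_n=\widetilde{\bSigma}-\bSigma=n^{-1}\sum_i\bPsi(\bx_i)-\frac{2}{d+2}\bM_0+\cdots$ with $\bM_0=\bSigma\bar{\bDelta}_P\bSigma-d^{-1}\tr(\bSigma^2\bar{\bDelta}_P)\bSigma$. Since $\tr\bD_n=0$, the correction linearizes to $\frac2d(\bD_n+\bM_0)$. Hence $\widehat{\bSigma}-\bSigma=(1+\frac2d)\bD_n+\frac2d\bM_0+\cdots$, and the coefficient of $\bM_0$ is $-(1+\frac2d)\frac{2}{d+2}+\frac2d=0$. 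Your ``mismatch between $2/(d+2)$ and $2/d$'' comes from not applying the factor $1+2/d$ to the precision part of $\bE$. That mismatch would not be negligible: $\frac{4}{d(d+2)}\bM_0$ can be of exact order $a_n/d$ in both spectral seminorms. Likewise, an $O(d^{-1})$ relative error in the leading coefficient leaves a term of size $d^{-1}n^{-1/2}$. Both exceed $\rho_n$ when, e.g., $a_n=n^{-1/2}$ and $d=n^{1/4}$.

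Second, $\|\bG_j\|_{\op}=g_j^{-1}\asymp d^{-1}$ in every direction, bulk included, because $\lambda_j-\lambda_k\ge\lambda_j-C_{\varepsilon}\asymp d$ for $k>m$. With your claimed $O(1)$ bound on the bulk, the natural bound $\|\bR_n\|_{\mathrm F}=O_{\Pp}(d\rho_n)$ yields only $\|\bG_j\bR_n\bu_j\|_2=O_{\Pp}(d\rho_n)$. With the correct bound, it suffices to prove $d^{-1}\|\bR_n\|_{\mathrm F}=O_{\Pp}(\rho_n)$, which gives both spectral bounds at once. The paper does exactly this, working in the norm $d^{-1}\{\|\cdot\|_{\mathrm F}^2+|\tr(\cdot)|^2\}^{1/2}$ so that the trace is tracked before normalization.

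Third, the denominators need only $|t_i|\le\|\bDelta_\ell\|_{\op}\,\bs_i^\top\bSigma\bs_i$. Together with $\E(\bs^\top\bSigma\bs)^q\le(2q-1)!!$ for every $q$, this gives $\max_i|t_i|=O_{\Pp}(a_n\log n)=o_{\Pp}(1)$. Your pointwise concentration detour is unnecessary and does not by itself give the uniform control that $t^2/(1+t)$ requires.

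Fourth, Assumption~\ref{ass:inverse-radial-moments} supplies inverse moments only for $q\le c_\xi d$, so using $q\asymp\log n$ tacitly assumes $\log n\lesssim d$. The paper needs only $q=4$. A union bound gives $\max_i(\zeta_1\xi_i)^{-1}=O_{\Pp}(n^{1/4})$, hence $\max_{\ell,i}\|\bdelta_\ell\|_2/\xi_i=O_{\Pp}(a_n n^{1/4})=o_{\Pp}(1)$. This is exactly where $a_n=o(n^{-1/4})$ enters for location. Your complement event $\{\xi_i<C\|\bdelta_\ell\|_2\}$ is therefore empty for all $i$ with probability tending to one, and no crude score bound is needed. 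The bound $d\|\bSigma\|$ would in any case require $\lambda_{\min}(c_d\widehat{\bP}_{-\ell})\gtrsim\lambda_1^{-1}$, which an $O(a_n)$ operator-norm error does not give when $a_n\gg d^{-1}$.

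Finally, use uniform maxima only to validate the expansion. Bound the averaged remainder through a polynomial majorant such as $C(1+\bs_i^\top\bSigma\bs_i)^3(\|\bDelta_\ell\|_{\op}+\|\bdelta_\ell\|_2/\xi_i)^2$, and then apply conditional Markov with fixed moments. Bounding the average through the maxima would lose logarithmic or $n^{1/4}$ factors and miss $O_{\Pp}(\rho_n)$.
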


Theorem~\ref{thm:common-expansion} expresses the shape estimation error as an average of mean-zero matrix-valued scores plus a remainder term.
The score \(\bPsi(\bx)\) depends only on the angular component \(\bs\), so the radial variable \(\xi\) does not enter the leading fluctuation.
The bounds in \eqref{eq:common-spectral-remainder} make the remainder negligible at the root-\(n\) scale for both relative eigenvalue errors and eigendirection errors.

\begin{remark}[The role of orthogonalization]
\label{rem:neyman-orthogonality}
Before orthogonalization, the cross-fitted and trace-normalized estimator has the expansion
\[
 \begin{aligned}
 \widetilde{\bSigma}-\bSigma
 ={}&\frac1n\sum_{i=1}^n\bPsi(\bx_i) -\frac{2}{d+2}\left[
 \bSigma\bar{\bDelta}_P\bSigma
 -\frac{\tr(\bSigma^2\bar{\bDelta}_P)}{d}\bSigma
 \right]+\bR_{N,n},
 \end{aligned}
\]
where \(\bar{\bDelta}_P=c_d\bar{\bP}-\bOmega\) and
\(d^{-1}\|\bR_{N,n}\|_{\mathrm F}=O_{\Pp}(\rho_n)\).
The second term is linear in the precision-pilot error and can affect
relative eigenvalue errors and eigendirection errors.
In contrast, the correction in \eqref{eq:pre-poet-orthogonalization} cancels it, permitting \(a_n=o(n^{-1/4})\) in Assumption~\ref{ass:nuisance-estimation}. 
We orthogonalize only with respect to the precision pilot, since the population linear effect of location estimation already vanishes by elliptical symmetry.
\end{remark}

For the spiked eigenvalues, applying eigenvalue perturbation theory \citep{GreenbaumLiOverton2020} to the results of Theorem~\ref{thm:common-expansion} yields the following relative-error expansion and asymptotic normality.

\begin{theorem}[One-sample spiked eigenvalues]
\label{thm:one-sample-eigenvalue}
Under Assumptions~\ref{ass:factor-structure}--\ref{ass:nuisance-estimation},
for each fixed \(j\le m\),
\begin{equation}\label{eq:eigenvalue-expansion-main}
 \frac{\widehat\lambda_j-\lambda_j}{\lambda_j}
 =\frac{d+2}{d}\frac1n\sum_{i=1}^n
   \varphi_{\lambda,j}(\bx_i)+O_{\Pp}(\rho_n),
\end{equation}
where \(\varphi_{\lambda,j}(\bx)
 = \lambda_j^{-1} \bu_j^\top\bPsi(\bx)\bu_j = d(\bu_j^\top \bs)^2 - \bs^\top\bSigma\bs\) is the score function.
Moreover,
\begin{equation}\label{eq:eigenvalue-clt-main}
 \frac{\sqrt n(\widehat\lambda_j-\lambda_j)}
 {\lambda_j\sigma_{\lambda,j}}
 \xrightarrow{\mathcal D}N(0,1),
\end{equation}
where \(\sigma_{\lambda,j}^2
 ={2(d+2)}d^{-3}
  \{d^2-2d\lambda_j+\sum_{k=1}^d\lambda_k^2\}\).
\end{theorem}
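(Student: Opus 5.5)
We would derive the eigenvalue expansion from Theorem~\ref{thm:common-expansion} through a second-order eigenvalue perturbation bound, and then prove the CLT by Lindeberg--Feller for triangular arrays applied to the i.i.d.\ scores \(\varphi_{\lambda,j}(\bx_i)\).

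\textbf{Step 1 (perturbation).} Write \(\bE=\widehat{\bSigma}-\bSigma\). Because \(\lambda_j\) is simple and separated from the rest of the spectrum by a gap of order \(d\), standard eigenvalue perturbation theory \citep{GreenbaumLiOverton2020} gives
\[
\widehat\lambda_j-\lambda_j=\bu_j^\top\bE\bu_j+O\bigl(\|\bE\|_{\op}\,\|\bG_j\bE\bu_j\|_2\bigr).
\]
This holds on the event \(\|\bE\|_{\op}\le \mathrm{gap}/4\), whose probability tends to one because \(\|\bE\|_{\op}\le\|\bE\|_{\mathrm F}=O_{\Pp}(d\epsilon_n)\) by \eqref{eq:common-matrix-rate}. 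One can also obtain this directly via a resolvent/Schur-complement argument for \(\widehat\lambda_j\).

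\textbf{Step 2 (bounding the pieces).} Divide by \(\lambda_j\asymp d\). The main term \(\lambda_j^{-1}\bu_j^\top\bE\bu_j\) splits as follows:
\begin{itemize}
\item the leading part is \(\frac{d+2}{d}\frac1n\sum_i\varphi_{\lambda,j}(\bx_i)\);
\item the remainder part \(\lambda_j^{-1}|\bu_j^\top\bR_n\bu_j|\) is \(O_{\Pp}(\rho_n)\) by \eqref{eq:common-spectral-remainder}.
\end{itemize}
The quadratic term needs more care. We have \(\|\bG_j\bE\bu_j\|_2\le \|\bG_j n^{-1}\sum\bPsi(\bx_i)\bu_j\|_2+O_{\Pp}(\rho_n)\). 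A direct second-moment computation with \(\E\bs\bs^\top=\bI_d/d\) and the fourth moments of the uniform sphere shows that \(\E\|\bG_j\bPsi(\bx)\bu_j\|_2^2=O(1)\). Indeed, \(\bG_j\bPsi\bu_j = d\,\bG_j\bSigma^{1/2}\bs\,(\bs^\top\bSigma^{1/2}\bu_j)-0\), since \(\bG_j\bSigma\bu_j=0\), and its second moment is of order \(d^2\cdot\lambda_j\cdot\tr(\bG_j^2\bSigma)/d^2=O(1)\). Hence \(\|\bG_j\bE\bu_j\|_2=O_{\Pp}(n^{-1/2}+\rho_n)\).

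\textbf{Step 3 (the main obstacle).} The crude bound \(\|\bE\|_{\op}/\lambda_j=O_{\Pp}(\epsilon_n)\) only yields a quadratic term of order \(\epsilon_n(n^{-1/2}+\rho_n)=O(\rho_n)\). This is exactly enough, since \(\epsilon_n n^{-1/2}\le\epsilon_n^2\). So the main obstacle is making the perturbation inequality sharp in the form above, with the factor \(\|\bG_j\bE\bu_j\|_2\) rather than \(\|\bE\|_{\op}^2/\mathrm{gap}\). The latter would give \(d\epsilon_n^2\cdot d/d\) after dividing by \(\lambda_j\), which is also \(O(\rho_n)\), so it works too. If the sharp version were needed, I would use the identity \(\widehat\lambda_j-\lambda_j=\bu_j^\top\bE\bu_j+\bu_j^\top\bE(\widehat\lambda_j-\bSigma_{\perp}-\bE_\perp)^{-1}\bE\bu_j\) restricted to \(\bu_j^\perp\). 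This gives \eqref{eq:eigenvalue-expansion-main}.

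\textbf{Step 4 (CLT).} Since \(\bs\) is uniform, \(\E\varphi_{\lambda,j}=\lambda_j-1\cdot d/d\cdot\)\ldots; explicitly \(\E[d(\bu_j^\top\bs)^2]=1\) and \(\E[\bs^\top\bSigma\bs]=1\), so the mean is zero. For the variance, use \(\E[s_a^2s_b^2]=1/(d(d+2))\) for \(a\ne b\) and \(\E s_a^4=3/(d(d+2))\), working in the eigenbasis where \(\bs^\top\bSigma\bs=\sum_k\lambda_k s_k^2\). This computation yields
\[
\Var\varphi_{\lambda,j}=\frac{2}{d+2}\Bigl(d^2-2d\lambda_j+\textstyle\sum_k\lambda_k^2\Bigr)\Big/d^2,
\]
so that \((\frac{d+2}{d})^2\Var\varphi_{\lambda,j}=\sigma_{\lambda,j}^2\). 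Since \(\lambda_j\asymp d\), we have \(\sigma_{\lambda,j}^2\asymp1\) (it tends to \(2(1-\kappa_j)^2+2\sum_{k\le m}\kappa_k^2-\ldots\), bounded below by \(2\sum_{k\ne j}\lambda_k^2/d^2\)-type terms plus \((1-\lambda_j/d)^2\)). Lyapunov's condition follows from \(\E\varphi_{\lambda,j}^4=O(1)\), which holds by sphere moment bounds because \(d(\bu_j^\top\bs)^2\) and \(\bs^\top\bSigma\bs\) have bounded fourth moments. Finally, Slutsky's lemma together with \(\sqrt n\rho_n\to0\) gives \eqref{eq:eigenvalue-clt-main}.
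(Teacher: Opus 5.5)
Your proposal is correct and follows the paper's proof. On the event $\|\bE\|_{\op}\le g_j/4$ the paper uses exactly the crude bound $|\widehat\lambda_j-\lambda_j-\bu_j^\top\bE\bu_j|\le C\|\bE\|_{\op}^2/g_j$, which is $O_{\Pp}(\epsilon_n^2)=O_{\Pp}(\rho_n)$ after division by $\lambda_j\asymp g_j\asymp d$, so the sharper $\|\bG_j\bE\bu_j\|_2$ bound in your Steps~2--3 is not needed; it then applies Lyapunov's CLT with bounded fourth moments and $\sqrt n\rho_n\to0$, as you do.

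Two small fixes are needed in Step~4:
\begin{itemize}
\item Your displayed variance should read $\Var\varphi_{\lambda,j}=\frac{2}{d(d+2)}\{d^2-2d\lambda_j+\sum_k\lambda_k^2\}$. As written it is off by a factor of $d$ and would not yield $\sigma_{\lambda,j}^2$.
\item The lower bound on $\sigma_{\lambda,j}^2$ that Lyapunov needs does not follow from $\lambda_j\asymp d$ alone. It comes from $\sigma_{\lambda,j}^2\ge 2(1-\lambda_j/d)^2$ together with $d-\lambda_j\ge\sum_{k>m}\lambda_k\ge c_{\varepsilon}(d-m)$.
\end{itemize}
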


In Theorem~\ref{thm:one-sample-eigenvalue}, the inferential target of our eigenvalue estimator \(\widehat{\lambda}_j\) remains \(\lambda_j\), and the division by \(\lambda_j\) expresses the fluctuation on its natural
scale, since the spiked eigenvalues diverge with \(d\).
The variance \(\sigma_{\lambda, j}^2\) includes the full spectrum \(\sum_{k =1}^d \lambda_k^2\), because trace normalization
couples the target eigenvalue to the total shape. 

Finally, we turn to the spiked eigendirections. 
Without loss of generality, we align the sign of \(\widehat{\bu}_j\) such that \(\bu_j^\top\widehat{\bu}_j\ge0\).
For \(k\ne j\), define
\begin{equation}\label{eq:eigenvector-influence-score}
 \eta_{jk}(\bx)
 =\frac{\bu_k^\top\bPsi(\bx)\bu_j}{\lambda_j-\lambda_k}
 =\frac{d\sqrt{\lambda_j\lambda_k}}{\lambda_j-\lambda_k}
   (\bu_j^\top\bs)(\bu_k^\top\bs),
\end{equation}
and its scaled variance as 
\begin{equation}\label{eq:direction-population-weight}
 \sigma_{u,jk}^2
 :=\Var\left\{\frac{d+2}{d}\eta_{jk}(\bx)\right\}
 =\frac{(d+2)\lambda_j\lambda_k}
 {d(\lambda_j-\lambda_k)^2}.
\end{equation}
For \(k\le m\) and \(k\ne j\), \(\sigma_{u,jk}^2\) converges to \(\sigma_{u,jk}^{*2}:=\kappa_j\kappa_k/(\kappa_j-\kappa_k)^2\).

\begin{theorem}[One-sample spiked eigendirections]
\label{thm:one-sample-eigenvector}
Under Assumptions~\ref{ass:factor-structure}--\ref{ass:nuisance-estimation},
for each fixed \(j\le m\),
\begin{equation}\label{eq:one-sample-eigenvector-expansion}
 \widehat{\bu}_j-\bu_j
 =\frac{d+2}{d}\frac1n\sum_{i=1}^n
   \sum_{k\ne j}\eta_{jk}(\bx_i)\bu_k
  +\br_{n,j},
 \qquad
 \|\br_{n,j}\|_2=O_{\Pp}(\rho_n).
\end{equation}
Moreover, the squared sine of the angle between \(\widehat{\bu}_j\) and \(\bu_j\) satisfies
\begin{equation}\label{eq:one-sample-full-eigenvector-limit}
 n\left[1-\left|\bu_j^\top\widehat{\bu}_j\right|^2\right]-b_{u,j}
 \xrightarrow{\mathcal D}
 \sum_{\substack{1\le k\le m\\k\ne j}}
 \sigma_{u,jk}^{*2}Z_k^2,
\end{equation}
where \(b_{u,j}=\sum_{k=m+1}^d\sigma_{u,jk}^2\), and
\(\{Z_k\}_{1\le k\le m,\,k\ne j}\) are independent standard normal variables.
When \(m=1\), the quadratic form and the sum are interpreted as zero,
and the centered quantity converges to zero in probability.
\end{theorem}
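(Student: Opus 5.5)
The plan is to derive both parts from Theorem~\ref{thm:common-expansion} by first-order eigenvector perturbation, and then a CLT plus a law of large numbers on the squared-sine statistic. Write \(\bE=\widehat{\bSigma}-\bSigma=\bL_n+\bR_n\), where \(\bL_n=\frac{d+2}{d}\frac1n\sum_i\bPsi(\bx_i)\).

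\textbf{Step 1: eigenvector expansion.} By \eqref{eq:common-matrix-rate} we have \(\|\bE\|_{\op}\le\|\bE\|_{\mathrm F}=O_{\Pp}(d\epsilon_n)\). The eigengap around \(\lambda_j\) is of order \(d\), so Weyl's inequality gives \(|\widehat\lambda_j-\lambda_j|=O_{\Pp}(d\epsilon_n)\), and Davis--Kahan gives \(\|\widehat{\bu}_j-\bu_j\|_2=O_{\Pp}(\epsilon_n)\) after sign alignment.

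Next use the exact identity obtained from \(\widehat{\bSigma}\widehat{\bu}_j=\widehat\lambda_j\widehat{\bu}_j\). Projecting onto \(\bu_k\), \(k\neq j\), gives
\[
(\widehat\lambda_j-\lambda_k)\,\bu_k^\top\widehat{\bu}_j=\bu_k^\top\bE\widehat{\bu}_j .
\]
Summing over \(k\ne j\) yields \((\bI-\bu_j\bu_j^\top)\widehat{\bu}_j=\widehat{\bG}\bE\widehat{\bu}_j\) with \(\widehat{\bG}=\sum_{k\ne j}\bu_k\bu_k^\top/(\widehat\lambda_j-\lambda_k)\). Then expand:
\[
\widehat{\bG}\bE\widehat{\bu}_j=\bG_j\bE\bu_j+(\widehat{\bG}-\bG_j)\bE\widehat{\bu}_j+\bG_j\bE(\widehat{\bu}_j-\bu_j).
\]
Since \(\|\widehat{\bG}-\bG_j\|_{\op}=O_{\Pp}(d\epsilon_n/d^2)\), the second term is \(O_{\Pp}(\epsilon_n^2)\). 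The third term is \(O_{\Pp}(d^{-1}\cdot d\epsilon_n\cdot\epsilon_n)=O_{\Pp}(\rho_n)\). Inside \(\bG_j\bE\bu_j\), the \(\bR_n\) part is \(O_{\Pp}(\rho_n)\) by \eqref{eq:common-spectral-remainder}. The \(\bL_n\) part equals exactly \(\frac{d+2}{d}\frac1n\sum_i\sum_{k\ne j}\eta_{jk}(\bx_i)\bu_k\). Finally, the component along \(\bu_j\) is \(\bu_j^\top\widehat{\bu}_j-1=-\tfrac12\|\widehat{\bu}_j-\bu_j\|_2^2=O_{\Pp}(\rho_n)\). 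Together these give \eqref{eq:one-sample-eigenvector-expansion}.

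\textbf{Step 2: squared sine.} The quantity \(1-(\bu_j^\top\widehat{\bu}_j)^2\) equals \(\|(\bI-\bu_j\bu_j^\top)\widehat{\bu}_j\|_2^2=\|\bv_n+\br'\|_2^2\), where
\[
\bv_n=\sum_{k\ne j}\bar\eta_{jk}\bu_k,\qquad \bar\eta_{jk}=\tfrac{d+2}{d}\tfrac1n\sum_i\eta_{jk}(\bx_i),
\]
and \(\|\br'\|=O_{\Pp}(\rho_n)\). Now \(\E\|\bv_n\|^2=n^{-1}\sum_{k\ne j}\sigma_{u,jk}^2\), and this sum is bounded because \(\sigma^2_{u,jk}\asymp 1/d\) for \(k>m\). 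Hence \(\|\bv_n\|=O_{\Pp}(n^{-1/2})\), so \(n\cdot\)cross term \(=O_{\Pp}(\sqrt n\rho_n)=o_{\Pp}(1)\) and \(n\|\br'\|^2=o_{\Pp}(1)\). It therefore suffices to study \(n\|\bv_n\|^2\), which I split as
\[
n\|\bv_n\|^2=\sum_{k\le m,k\ne j}n\bar\eta_{jk}^2+\sum_{k>m}n\bar\eta_{jk}^2 .
\]

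For the spiked part, the vector \((\sqrt n\,\bar\eta_{jk})_{k\le m,k\ne j}\) is an average of i.i.d.\ bounded-variance terms that depend only on \(\bs\). The entries are uncorrelated across \(k\): for \(k\ne k'\), \(\E[(\bu_j^\top\bs)^2(\bu_k^\top\bs)(\bu_{k'}^\top\bs)]=0\) by sign symmetry of the uniform distribution on the sphere. Their variances are \(\sigma_{u,jk}^2\to\sigma^{*2}_{u,jk}\). The Lindeberg condition holds because \(d(\bu_j^\top\bs)(\bu_k^\top\bs)\) has bounded fourth moments. A multivariate CLT plus the continuous mapping theorem then gives the weighted chi-square limit, and the term is absent when \(m=1\).

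\textbf{Step 3 (main obstacle): the non-spiked bulk.} It remains to show
\[
T_n:=\sum_{k>m}n\bar\eta_{jk}^2-b_{u,j}\xrightarrow{\Pp}0 .
\]
Note \(\E T_n=0\) exactly, because the \(\eta_{jk}\) are mean zero and \(n\E\bar\eta_{jk}^2=\sigma^2_{u,jk}\). I would bound \(\Var(T_n)\) by expanding
\[
n\|\bar\eta\|^2=n^{-1}\textstyle\sum_{i,i'}\langle \mathbf{h}_i,\mathbf{h}_{i'}\rangle,\qquad \mathbf{h}_i=\big(\tfrac{d+2}{d}\eta_{jk}(\bx_i)\big)_{k>m}.
\]
The diagonal part has variance \(n^{-1}\Var\|\mathbf{h}_1\|^2\). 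Since
\[
\|\mathbf{h}_1\|^2\approx d^2(\bu_j^\top\bs)^2\sum_{k>m}\frac{\lambda_j\lambda_k}{(\lambda_j-\lambda_k)^2}(\bu_k^\top\bs)^2\lesssim d(\bu_j^\top\bs)^2\cdot O(1),
\]
this variance is \(O(1/n)\), using spherical moment identities. The off-diagonal U-statistic part has variance of order \(n^{-2}\cdot n^2\,\E\langle \mathbf{h}_1,\mathbf{h}_2\rangle^2/n^{2}\cdot n^{0}\), which reduces to \(\|\Cov(\mathbf{h})\|_{\mathrm F}^2=\sum_{k>m}\sigma_{u,jk}^4=O(1/d)\). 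The point of this bound is that the bulk directions contribute a total of order one while each individual contribution is \(O(1/d)\), so the bulk is a sum of \(\approx d\) small pieces. Hence \(T_n=o_{\Pp}(1)\), and Slutsky's lemma gives \eqref{eq:one-sample-full-eigenvector-limit}. The careful part is the fourth-moment computation for spherical \(\bs\), which rests on the identities \(\E s_a^2s_b^2=1/(d(d+2))\) and \(\E s_a^4=3/(d(d+2))\).
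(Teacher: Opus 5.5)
Your proposal is correct and follows essentially the same route as the paper: a first-order eigenvector perturbation bound (you get it from the projected eigen-equation with $\widehat\lambda_j$ in the resolvent plus Davis--Kahan, where the paper uses its simple-eigenpair lemma), a finite-dimensional Lyapunov CLT for the spiked coordinates, and a Chebyshev bound on the bulk via the diagonal/U-statistic variance split, exactly as in the paper's bulk-concentration lemma. The only slip is your garbled expression for the off-diagonal variance. Since distinct pairs are uncorrelated, it should read $\frac{4}{n^2}\binom n2\E\langle\mathbf h_1,\mathbf h_2\rangle^2=\frac{2(n-1)}{n}\|\Cov(\mathbf h_1)\|_{\mathrm F}^2$, which gives the $O(d^{-1})$ bound you state.
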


In Theorem~\ref{thm:one-sample-eigenvector}, \(b_{u,j}\) accounts for estimation error along the nonspiked eigenvectors.
Subtracting \(b_{u,j}\) in \eqref{eq:one-sample-full-eigenvector-limit} leaves the weighted chi-square fluctuation generated by the finitely many other spiked directions.
For one-sample eigendirection inference, we require estimates of both the centering term \(b_{u,j}\) and the weights \(\sigma_{u,jk}^{*2}\).
The trace constraint on \(\bSigma\) gives \(b_{u,j} =(d+2)d^{-1}\lambda_j^{-1} (d-\sum_{k=1}^m\lambda_k) +O(d^{-1})\), so both terms can be estimated from the spiked eigenvalues. 


\subsection{Location and precision pilots}
\label{subsec:pilot-construction}

The preceding theory requires estimates of both the location \(\bmu\)
and the precision matrix \(\bOmega\). We now give a concrete construction
using the spatial median and the spatial-sign version of POET (POET--SS)
proposed by \citet{XuMaWangFeng2025}, and we verify their accuracy under additional
sparsity and growth conditions. For each fold \(I_\ell\), write
\(N_{-\ell}=n-N_\ell\asymp n\).

First, we estimate the location on the training sample by the spatial median
\begin{equation}\label{eq:foldwise-spatial-median}
 \widehat{\bmu}_{-\ell}
 =\arg\min_{\ba\in\R^d}
 \sum_{i\notin I_\ell}\|\bx_i-\ba\|_2,
\end{equation}
with a measurable choice of minimizer if it is not unique. This estimator
does not require a finite covariance matrix. It is used both to center
the training observations when constructing the precision pilot below
and to center the held-out observations in
\eqref{eq:cross-fitted-tyler}.

Next, we define the spatial sign by \(\spSign(\by)=
\by/\|\by\|_2\) for \(\by\ne\boldsymbol0\),
and \(\spSign(\boldsymbol0)=\boldsymbol0\). 
The scaled population spatial-sign covariance matrix is \(\bS=d\E\{\spSign(\bx-\bmu)\spSign(\bx-\bmu)^\top\}\).
Estimate it on the training sample by
\begin{equation}
 \widehat{\bS}_{-\ell}
 =\frac{d}{N_{-\ell}}\sum_{i\notin I_\ell}
 \spSign(\bx_i-\widehat{\bmu}_{-\ell})
 \spSign(\bx_i-\widehat{\bmu}_{-\ell})^\top.
\end{equation}
Let \(\widehat{\bGamma}_{m,-\ell}\in\R^{d\times m}\) contain its leading \(m\) 
orthonormal eigenvectors, and \(\widehat{\bLambda}_{m,-\ell}\) be the diagonal matrix of the
corresponding eigenvalues. Following the POET construction
\citep{fan2013large}, we retain this leading component and threshold the off-diagonal entries of its residual. 
Specifically, we define the residual by
\(\widehat{\bS}_{\varepsilon,-\ell}=\widehat{\bS}_{-\ell}
  -\widehat{\bGamma}_{m,-\ell}
   \widehat{\bLambda}_{m,-\ell}
   (\widehat{\bGamma}_{m,-\ell})^\top\) and write its thresholded version as
\(\widehat{\bH}_{\varepsilon,-\ell}\), with entries
\begin{equation}
 \begin{aligned}
  (\widehat{\bH}_{\varepsilon,-\ell})_{ps}
 =\begin{cases}
   (\widehat{\bS}_{\varepsilon,-\ell})_{pp},&p=s,\\
   \mathcal T_{\tau_{-\ell}}((\widehat{\bS}_{\varepsilon,-\ell})_{ps}),&p\ne s,
  \end{cases}
 \end{aligned}
\end{equation}
where \(\mathcal T_\tau(x)=0\) for \(|x|\le\tau\) and
\(|\mathcal T_\tau(x)-x|\le\tau\) otherwise; hard and soft thresholding are examples.
For the threshold level \(\tau_{-\ell}\), we use \(\tau_{-\ell}=C\delta(N_{-\ell})\), where \(C\) is a sufficiently large constant, and \(\delta(t)= ({\log d}/{t})^{1/2}+({\log t}/{t})^{1/2}+d^{-1/2}\) for a sample size \(t\ge2\).
After that, we combine the leading component
with the thresholded residual to form \(\widehat{\bH}_{-\ell}=\widehat{\bGamma}_{m,-\ell}
  \widehat{\bLambda}_{m,-\ell}
  (\widehat{\bGamma}_{m,-\ell})^\top
  +\widehat{\bH}_{\varepsilon,-\ell}\),
and apply a diagonal adjustment before inversion to ensure positive definiteness of the final precision pilot \(\widehat{\bP}_{-\ell}\):
\begin{equation}\label{eq:foldwise-poet-ss}
 \begin{gathered}
 \varpi_{-\ell}
 =\left\{d^{-1}-\lambda_{\min}
  (\widehat{\bH}_{-\ell})\right\}_+,
 \qquad
 \widehat{\bP}_{-\ell}
 =\left(\widehat{\bH}_{-\ell}
  +\varpi_{-\ell}\bI_d\right)^{-1}.
 \end{gathered}
\end{equation}
The diagonal adjustment guarantees positive
definiteness in finite samples and is inactive with probability tending to one under the conditions of Proposition~\ref{prop:nuisance-construction} below. 

Finally, we verify that the constructed pilots satisfy the required rates in Assumption~\ref{ass:nuisance-estimation}.
For this construction, fix \(v\in[0,1/2)\) and quantify the row sparsity of the idiosyncratic matrix by
\[
 s_d=
 \begin{cases}
  \displaystyle\max_{1\le p\le d}\|(\bSigma_{\varepsilon})_{p\cdot}\|_0,&v=0,\\[3pt]
  \displaystyle\max_{1\le p\le d}\|(\bSigma_{\varepsilon})_{p\cdot}\|_v^v,&0<v<1/2.
 \end{cases}
\]
Thus, \(v=0\) measures exact row sparsity, while \(v>0\) allows many small nonzero entries.
The following proposition connects this construction to the general
nuisance conditions in Section~\ref{subsec:single-sample-asymptotics}.

\begin{proposition}[Accuracy of the nuisance estimators]
\label{prop:nuisance-construction}
Suppose Assumptions~\ref{ass:factor-structure} and
\ref{ass:inverse-radial-moments} hold. If
\(s_d^2\{\log(dn)\}^{1-v}=o(n^{1/2-v})\) and
 \(s_d^2 n^{1/2}=o(d^{1-v})\), then the estimates in
\eqref{eq:foldwise-spatial-median} and \eqref{eq:foldwise-poet-ss}
satisfy Assumption~\ref{ass:nuisance-estimation} with \(a_n=s_d\{\delta(n)\}^{1-v}+d^{-1} = o(n^{-1/4})\) and \(c_d=d\E(Q^{-1})\), where \(Q=\bz^\top\bSigma\bz\) with \(\bz\sim N_d(\boldsymbol0,\bI_d)\).
Moreover, \(\Pp(\varpi_{-\ell}=0\text{ for all }\ell)\to1\).
\end{proposition}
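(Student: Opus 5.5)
Consider each fold separately. The training sample $\{\bx_i\}_{i\notin I_\ell}$ has size $N_{-\ell}\asymp n$ and $K$ is fixed, so fold-wise bounds of the form $O_{\Pp}(a_n)$ transfer to the maximum over $\ell$. The argument has three steps.

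\textbf{Step 1: the spatial median.} We establish a Bahadur-type bound
\[
\zeta_1\|\widehat{\bmu}_{-\ell}-\bmu\|_2=O_{\Pp}\bigl(\sqrt{1/n}+d^{-1}\bigr),
\]
which is dominated by $a_n$.
\begin{itemize}
\item Write the estimating equation $\sum_{i\notin I_\ell}\spSign(\bx_i-\ba)=\boldsymbol0$ and linearize it around $\bmu$.
\item The Hessian of $\ba\mapsto\E\|\bx-\ba\|_2$ at $\bmu$ is $\E[\|\bx-\bmu\|_2^{-1}(\bI_d-\spSign\spSign^\top)]$. Under Assumption~\ref{ass:factor-structure} this is approximately $\zeta_1 \cdot(\bI_d-\bSigma/d)$ up to bounded factors. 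Its smallest eigenvalue is therefore of order $\zeta_1$, because $\lambda_1/d\to\kappa_1<1$ unless $m=1$ and $\kappa_1=1$; boundedness of $\bSigma_\varepsilon$ excludes that case.
\item The score $N_{-\ell}^{-1}\sum\spSign(\bx_i-\bmu)$ has squared norm of order $1/n$.
\item Dividing gives $\|\widehat{\bmu}_{-\ell}-\bmu\|_2=O_{\Pp}(\zeta_1^{-1}n^{-1/2})$.
\item The third-order remainder of the expansion needs $\E\|\bx-\bmu\|_2^{-q}$ for moderately large $q$. This is exactly what Assumption~\ref{ass:inverse-radial-moments} supplies: moments up to order $c_\xi d$, with $\sqrt q$ growth. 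I would follow the spatial-median arguments of \citet{ZouPengFengWang2014} and \citet{XuMaWangFeng2025}.
\end{itemize}

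\textbf{Step 2: the spatial-sign covariance and POET--SS.} Here I would rely on \citet{XuMaWangFeng2025} as much as possible.
\begin{itemize}
\item \emph{Population level.} Write $\bS=d\,\E\{\bSigma^{1/2}\bz\bz^\top\bSigma^{1/2}/Q\}$ with $Q=\bz^\top\bSigma\bz$. A second-order expansion of $Q^{-1}$ around its mean (Gaussian quadratic-form concentration, $\Var(Q)/(\E Q)^2\asymp d^{-1}$ because the spikes are of order $d$) gives $\bS=c_d\bSigma+\bE$. The term $\bE$ has the same factor/low-rank-plus-sparse structure, with entrywise and relative spectral error $O(d^{-1/2})$ on the idiosyncratic part. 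This is the source of the $d^{-1/2}$ term in $\delta$ and the $d^{-1}$ term in $a_n$.
\item \emph{Sample level.} Bound $\max_{p,s}|(\widehat{\bS}_{-\ell}-\bS)_{ps}|=O_{\Pp}(\delta(n))$ using the boundedness of spatial signs and Bernstein's inequality. The effect of replacing $\bmu$ by $\widehat{\bmu}_{-\ell}$ is controlled by Step~1 through a first-order expansion of $\spSign$; its linear part averages to zero by symmetry.
\item \emph{POET step.} Apply the standard deterministic POET argument of \citet{fan2013large}: eigenvector perturbation under pervasiveness, followed by thresholding of the sparse residual. This gives
\[
\|\widehat{\bH}_{\varepsilon,-\ell}-c_d\bSigma_\varepsilon\|_{\op}=O_{\Pp}\bigl(s_d\delta(n)^{1-v}\bigr).
\]
It also gives the precision bound $\|\widehat{\bH}_{-\ell}^{-1}-c_d^{-1}\bOmega\|_{\op}$, obtained via the Sherman--Morrison--Woodbury formula as in POET.
\item \emph{Positive definiteness and rescaling.} Because $c_d\lambda_{\min}(\bSigma_\varepsilon)$ is bounded below by a multiple of $1$ in the trace-$d$ normalization, the smallest eigenvalue of $\widehat{\bH}_{-\ell}$ exceeds $d^{-1}$ with probability tending to one. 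Hence $\varpi_{-\ell}=0$ for all $\ell$ with probability tending to one. Multiplying by $c_d$ converts the error into $\|c_d\widehat{\bP}_{-\ell}-\bOmega\|_{\op}=O_{\Pp}(a_n)$, after checking that the scale bookkeeping for $c_d=d\E(Q^{-1})$ is consistent.
\end{itemize}

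\textbf{Step 3: verifying $a_n=o(n^{-1/4})$.} The two growth conditions in the proposition are designed for this.
\begin{itemize}
\item The first condition, $s_d^2\{\log(dn)\}^{1-v}=o(n^{1/2-v})$, gives $s_d\{(\log d/n)^{1/2}+(\log n/n)^{1/2}\}^{1-v}=o(n^{-1/4})$.
\item The second condition, $s_d^2n^{1/2}=o(d^{1-v})$, gives $s_d d^{-(1-v)/2}=o(n^{-1/4})$.
\end{itemize}

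\textbf{Main obstacle.} I expect the hardest part to be the population approximation of $\bS$ by $c_d\bSigma$ in a form whose off-diagonal idiosyncratic error is entrywise $O(d^{-1/2})$, uniformly over entries. This must hold while the spiked directions carry eigenvalues of order $d$, and the resulting $\bE$ must not destroy sparsity. I would handle it by writing $\bS$ as an explicit integral over $t$ of $(\bI_d+2t\bSigma)^{-1}$-type Gaussian formulas. The factor part enters through Woodbury, and the leftover term is shown to be a low-rank correction plus a matrix with entrywise size $O(d^{-1/2})$, consistent with $\delta$.
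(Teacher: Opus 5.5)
Your overall architecture matches the paper's proof. It has a fold-wise spatial-median rate and the POET--SS operator-norm bounds imported from the literature, with target $\bH=\bGamma_m\bLambda_m\bGamma_m^\top+c_d\bSigma_\varepsilon$ and the Woodbury-type fact $\|c_d\bH^{-1}-\bOmega\|_{\op}=O(d^{-1})$. It uses $\lambda_{\min}(\widehat{\bH}_{-\ell})\ge\lambda_{\min}(\widehat{\bH}_{\varepsilon,-\ell})\ge c_\varepsilon/2>d^{-1}$, because the retained leading block is positive semidefinite. It closes with a triangle inequality and your Step~3 arithmetic, to which you should add $d^{-1}=O\{s_dd^{-(1-v)/2}\}$.

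However, the heuristic behind your population step is backwards. With pervasive spikes, $\Var(Q)/(\E Q)^2=2\tr(\bSigma^2)/d^2\to2\sum_{k\le m}\kappa_k^2>0$; the ratio is of order $d^{-1}$ only for a bounded spectrum. So $\mathcal A=\bs^\top\bSigma\bs$ does not concentrate, and no expansion of $Q^{-1}$ about its mean yields $\bS\approx c_d\bSigma$. In the eigenbasis of $\bSigma$, $\lambda_k(\bS)=d\lambda_k\E(z_k^2/Q)$ with $z_k=\bu_k^\top\bz$. For $k\le m$ this falls short of $c_d\lambda_k=d\lambda_k\E(Q^{-1})$ by a non-vanishing factor, since $z_k^2$ and $Q^{-1}$ are negatively associated. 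This is why the target keeps the spiked block of $\bS$ rather than $c_d\bB\bB^\top$, and why the mismatch disappears only after inversion. Your later ``low-rank correction'' reaches the right structure, but not by the mechanism you state.

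The same non-concentration affects Step~1. Write $R_i=\|\bx_i-\bmu\|_2$ and $\bt_i=\spSign(\bx_i-\bmu)$.
\begin{itemize}
\item The curvature $\E\{R^{-1}(\bI_d-\bt\bt^\top)\}$ does not reduce to $\zeta_1(\bI_d-\bSigma/d)$, because $R$ and $\bt$ both depend on $\mathcal A$. Your conclusion about its smallest eigenvalue is right, but it needs an argument that does not use concentration.
\item The spatial-sign Bahadur expansions you intend to follow are typically derived where $\mathcal A$ concentrates, which fails here.
\item Your plan omits the localization of $\widehat{\bmu}_{-\ell}$ that any linearization requires.
\item A summand-wise Taylor expansion needs $\|\bh\|_2/\zeta_1\le\min_iR_i/2$, which is presumably why you reach for large inverse moments.
\end{itemize}

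The paper avoids all of this. Set $\alpha_i=(\zeta_1R_i)^{-1}$, $\bA_N=N^{-1}\sum_i\alpha_i(\bI_d-\bt_i\bt_i^\top)$, $\bar{\bt}=N^{-1}\sum_i\bt_i$, and $L_N(\bh)=\zeta_1N^{-1}\sum_i\{\|\bx_i-\bmu-\bh/\zeta_1\|_2-R_i\}$.
\begin{itemize}
\item It bounds $\lambda_{\min}(\E\bA_N)$ below via $\bv^\top(\E\bA_N)\bv=\E(\mathcal B_{\bv}\mathcal A^{-3/2})/\E(\mathcal A^{-1/2})$, where $\mathcal B_{\bv}=\mathcal A-(\bv^\top\bSigma^{1/2}\bs)^2$, truncating to $\{\mathcal A\le M_0\}$.
\item It controls $\|\bA_N-\E\bA_N\|_{\op}$ by a dimension-free Frobenius bound.
\item It uses the global bound $|L_N(\bh)+\bar{\bt}^\top\bh-\tfrac12\bh^\top\bA_N\bh|\le C\|\bh\|_2^3N^{-1}\sum_i\alpha_i^2$. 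This holds for every summand: Taylor's theorem applies near the center and the triangle inequality elsewhere.
\item Convexity, $L_N(\boldsymbol0)=0$, and positivity on the sphere $\|\bh\|_2=M/\sqrt N$ then localize the minimizer.
\end{itemize}
This needs only the $q=2$ inverse moment and gives $O_{\Pp}(N^{-1/2})$ with no growth condition, so your extra $d^{-1}$ term is unnecessary, though harmless.

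Finally, make your ``scale bookkeeping'' explicit: $1\le c_d\le d/\{c_\varepsilon(d-2)\}$, by Jensen and $Q\ge c_\varepsilon\|\bz\|_2^2$. The lower bound gives $\lambda_{\min}(\widehat{\bH}_{\varepsilon,-\ell})\ge c_\varepsilon/2$, hence $\varpi_{-\ell}=0$. The upper bound lets you multiply $\|\widehat{\bP}_{-\ell}-\bH^{-1}\|_{\op}$ by $c_d$.
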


These growth conditions are used to verify the precision-pilot rate; for the spatial median, Assumptions~\ref{ass:factor-structure}--\ref{ass:inverse-radial-moments} imply \(\zeta_1\|\widehat{\bmu}_{-\ell}-\bmu\|_2=O_{\Pp}(n^{-1/2})\), without additional restrictions on the relative growth of \(n\) and \(d\). 
When \(d\asymp n\) and \(s_d=O(1)\), the rate conditions in Proposition~\ref{prop:nuisance-construction} are satisfied for every fixed \(0\le v<1/2\), allowing both exact and approximate sparsity.


\section{Two-Sample Tests}
\label{sec:two-sample-tests}

For the two independent samples in Section~\ref{subsec:testing-problems}, we apply the construction \eqref{eq:pre-poet-orthogonalization} in
Section~\ref{subsec:orth-estimator} separately to each group. Write
\(\widehat\lambda_j^{(r)}\) and
\(\widehat{\bu}_j^{(r)}\) for the resulting estimated
eigenvalues and eigenvectors, where \(r=1,2\). We consider a fixed
\(j\le m=\min(m_1,m_2)\) throughout.

\subsection{Testing principal eigenvalues}
\label{subsec:two-sample-eigenvalues}

To test \(\mathbb{H}_{0,\lambda,j}\) in \eqref{eq:null-eigenvalue}, we compare the two estimated eigenvalues on the logarithmic scale, which preserves the null hypothesis. 
Theorem~\ref{thm:one-sample-eigenvalue}, together with the delta method, gives the asymptotic variance \((\sigma_{\lambda,j}^{(r)})^{2}/n_r\) for \(\log\widehat\lambda_j^{(r)}\), where \((\sigma_{\lambda,j}^{(r)})^{2}\) denotes the population variance in
\eqref{eq:eigenvalue-clt-main} for group \(r\).
Then, the group-specific variance is estimated by
\begin{equation}\label{eq:eigenvalue-variance-estimator}
 (\widehat\sigma_{\lambda,j}^{(r)})^{2}
 =\frac{2(d+2)}{d^3}
  \left\{d^2-2d\widehat\lambda_j^{(r)}
  +\sum_{k=1}^d(\widehat\lambda_k^{(r)})^2\right\},
 \qquad r=1,2.
\end{equation}
Since the two samples are independent, we naturally have the statistic
\begin{equation}
 Z_{\lambda,j}
 =\frac{\log\widehat\lambda_j^{(1)}
       -\log\widehat\lambda_j^{(2)}}
 {\left((\widehat\sigma_{\lambda,j}^{(1)})^{2}/n_1
       +(\widehat\sigma_{\lambda,j}^{(2)})^{2}/n_2\right)^{1/2}}.
\end{equation}
Accordingly, we have the following theorem for the two-sample spiked eigenvalue test.

\begin{theorem}[Two-sample spiked eigenvalue test]
\label{thm:two-sample-eigenvalue-test}
Suppose Assumptions~\ref{ass:factor-structure}--\ref{ass:nuisance-estimation}
hold for each group as
\(n_1,n_2,d\to\infty\). Then
\((\widehat\sigma_{\lambda,j}^{(r)})^{2}/(\sigma_{\lambda,j}^{(r)})^{2}
\xrightarrow{\Pp}1\) for \(r=1,2\). Under \(\mathbb{H}_{0,\lambda,j}\),
\begin{equation}\label{eq:two-sample-eigenvalue-null}
 Z_{\lambda,j}\xrightarrow{\mathcal D}N(0,1).
\end{equation}
\end{theorem}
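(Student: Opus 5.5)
The argument has two parts: consistency of the variance estimators, and then a one-sample delta-method CLT for each group, combined using independence of the samples.

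\emph{Step 1: consistency of the variance estimators.} Fix a group $r$ and drop the superscript. First, $\sigma_{\lambda,j}^2$ is bounded away from zero. Write
\[
d^2-2d\lambda_j+\sum_k\lambda_k^2=(d-\lambda_j)^2+\sum_{k\ne j}\lambda_k^2 .
\]
Since $\lambda_j/d\to\kappa_j$ with $\kappa_j<1$ when $m\ge2$, the first term is of order $d^2$. When $m=1$, $d-\lambda_1=\sum_{k>1}\lambda_k\ge c_\varepsilon(d-1)$, so it is still of order $d^2$. Hence $\sigma_{\lambda,j}^2\asymp1$.

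The difference between the estimated and true brackets is
\[
-2d(\widehat\lambda_j-\lambda_j)+\bigl(\|\widehat{\bSigma}\|_{\mathrm F}^2-\|\bSigma\|_{\mathrm F}^2\bigr),
\]
because $\sum_k\widehat\lambda_k^2=\|\widehat{\bSigma}\|_{\mathrm F}^2$. This needs one sanity check: $\widehat{\bSigma}$ may be indefinite, but it is symmetric, so the identity still holds.

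The two pieces are controlled as follows.
\begin{itemize}
\item By Theorem~\ref{thm:one-sample-eigenvalue}, $|\widehat\lambda_j-\lambda_j|=O_{\Pp}(\lambda_j n^{-1/2})=O_{\Pp}(dn^{-1/2})$. The first piece is therefore $O_{\Pp}(d^2n^{-1/2})$.
\item By \eqref{eq:common-matrix-rate}, $\|\widehat{\bSigma}-\bSigma\|_{\mathrm F}=O_{\Pp}(d\epsilon_n)$, and $\|\bSigma\|_{\mathrm F}\le\sum_k\lambda_k=d$. Then
\[
\bigl|\|\widehat{\bSigma}\|_{\mathrm F}^2-\|\bSigma\|_{\mathrm F}^2\bigr|\le\|\widehat{\bSigma}-\bSigma\|_{\mathrm F}\bigl(2\|\bSigma\|_{\mathrm F}+\|\widehat{\bSigma}-\bSigma\|_{\mathrm F}\bigr)=O_{\Pp}(d^2\epsilon_n).
\]
\end{itemize}
Dividing by the order-$d^2$ population bracket, the relative error is $O_{\Pp}(\epsilon_n)=o_{\Pp}(1)$. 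This gives $(\widehat\sigma_{\lambda,j}^{(r)})^2/(\sigma_{\lambda,j}^{(r)})^2\xrightarrow{\Pp}1$.

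\emph{Step 2: one-sample log-scale expansion.} From \eqref{eq:eigenvalue-expansion-main}, $(\widehat\lambda_j-\lambda_j)/\lambda_j=O_{\Pp}(n^{-1/2})$. A second-order Taylor bound, $|\log(1+x)-x|\le x^2$ for small $x$, then gives
\[
\log\widehat\lambda_j-\log\lambda_j=\frac{\widehat\lambda_j-\lambda_j}{\lambda_j}+O_{\Pp}(n^{-1}).
\]
Combining with \eqref{eq:eigenvalue-clt-main},
\[
T_r:=\sqrt{n_r}\,\frac{\log\widehat\lambda_j^{(r)}-\log\lambda_j^{(r)}}{\sigma_{\lambda,j}^{(r)}}\xrightarrow{\mathcal D}N(0,1).
\]

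\emph{Step 3: combination under the null.} Under $\mathbb H_{0,\lambda,j}$ the log-eigenvalues cancel. Set $w_r=(\sigma_{\lambda,j}^{(r)})^2/n_r$ and $\alpha_r=w_r/(w_1+w_2)\in[0,1]$. Then
\[
Z_{\lambda,j}=\Bigl(\sqrt{\alpha_1}\,T_1-\sqrt{\alpha_2}\,T_2\Bigr)\Bigl(\frac{w_1+w_2}{\widehat w_1+\widehat w_2}\Bigr)^{1/2},
\]
where $\widehat w_r=(\widehat\sigma_{\lambda,j}^{(r)})^2/n_r$. By Step 1, $\widehat w_r/w_r\xrightarrow{\Pp}1$ for each $r$, so $(\widehat w_1+\widehat w_2)/(w_1+w_2)\xrightarrow{\Pp}1$ and the last factor tends to one.

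The main obstacle is that $\alpha_r$ need not converge, since $n_1/n_2$ and the spectra can vary freely. I will handle this with a subsequence argument. Every subsequence has a further subsequence along which $\alpha_1\to\alpha\in[0,1]$. Along it, independence of $T_1$ and $T_2$ and their marginal $N(0,1)$ limits give joint convergence to independent standard normals. The continuous mapping theorem then gives
\[
\sqrt{\alpha}\,Z_1-\sqrt{1-\alpha}\,Z_2\sim N(0,1)
\]
for independent standard normals $Z_1,Z_2$. Since this limit does not depend on the subsequence, the full sequence converges, and Slutsky's lemma with the ratio factor yields \eqref{eq:two-sample-eigenvalue-null}.
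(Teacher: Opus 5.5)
Your proposal is correct and follows essentially the same route as the paper. Both arguments get variance consistency from the Frobenius rate (via $\sum_k\widehat\lambda_k^2=\|\widehat{\bSigma}\|_{\mathrm F}^2$) and the relative eigenvalue rate, and both use the log-scale expansion, independence of the two samples, a subsequence argument for the non-convergent weights, and Slutsky. Your lower bound via $(d-\lambda_j)^2+\sum_{k\ne j}\lambda_k^2$ in place of the paper's limit $2(1-\kappa_j)^2$, and your use of $T_r$ instead of the oracle sums, are immaterial variations.
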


By Theorem~\ref{thm:two-sample-eigenvalue-test}, we reject \(\mathbb{H}_{0,\lambda,j}\) when \(|Z_{\lambda,j}|>z_{1-\alpha/2}\) for a significance level \(\alpha\in(0,1)\), where \(z_q=\Phi^{-1}(q)\) is the \(q\)th quantile of the standard normal distribution.

\subsection{Testing principal eigendirections}
\label{subsec:two-sample-eigenvectors}

To test \(\mathbb{H}_{0,u,j}\) in \eqref{eq:null-eigenvector}, we compare the two estimated eigendirections through the squared sine of their angle.
Let \(n_{12}=(n_1^{-1}+n_2^{-1})^{-1}\). 
Under the null hypothesis, we align the population eigenvectors so that \(\bu_j^{(1)}=\bu_j^{(2)}=\bu_j\), and write \(\mathbf e_j^{(r)}=\widehat{\bu}_j^{(r)}-\bu_j\), with the estimated eigenvectors aligned accordingly. 
Theorem~\ref{thm:one-sample-eigenvector} and unit normalization give
\[
 n_{12}\left[1-
  \left|\left\langle\widehat{\bu}_j^{(1)},
                         \widehat{\bu}_j^{(2)}
  \right\rangle\right|^2\right]
 =n_{12}\|\mathbf e_j^{(1)}-\mathbf e_j^{(2)}\|_2^2
  +o_{\Pp}(1).
\]
Thus, the two-sample discrepancy is determined by the difference between the estimation errors characterized in Theorem~\ref{thm:one-sample-eigenvector}.
For each group, Theorem~\ref{thm:one-sample-eigenvector} identifies
\(b_{u,j}^{(r)}\) as the nonspiked contribution to the squared estimation error scaled by \(n_r\). Removing these contributions gives the centered statistic
\begin{equation}
 T_{u,j}
 =n_{12}\left[1-
  \left|\left\langle\widehat{\bu}_j^{(1)},
                         \widehat{\bu}_j^{(2)}
  \right\rangle\right|^2\right]
  -\sum_{r=1}^2\frac{n_{12}}{n_r}\widehat b_{u,j}^{(r)},
\end{equation}
where \(\widehat b_{u,j}^{(r)} = {(d+2)}{d}^{-1}\{\widehat\lambda_j^{(r)}\}^{-1}\{d-\sum_{h=1}^{m_r}\widehat\lambda_h^{(r)}\}\) is a consistent estimator of \(b_{u,j}^{(r)}\), as illustrated in Section~\ref{subsec:single-sample-asymptotics}.

Nevertheless, the asymptotic null distribution of \(T_{u,j}\) still depends on the unknown spiked eigenvalues and the relative orientations of the other spiked eigenvectors across the two populations. We therefore construct a plug-in reference distribution to obtain critical values for the test.
Specifically, let
\(\Kcal_j^{(r)}=\{1,\ldots,m_r\}\setminus\{j\}\), and write
\(m_u=m_1+m_2-2\) for the combined number of spiked directions after
excluding the target direction in each group. We estimate the leading-direction variances in
\eqref{eq:direction-population-weight} by
\begin{equation}
 (\widehat\sigma_{u,jk}^{(r)})^{2}
 =\frac{d+2}{d}
  \frac{\widehat\lambda_j^{(r)}
        \widehat\lambda_k^{(r)}}
  {(\widehat\lambda_j^{(r)}
   -\widehat\lambda_k^{(r)})^2},
 \qquad r=1,2,\quad k\in\Kcal_j^{(r)}.
\end{equation}
Let
\(\widehat{\bGamma}_{-j}^{(r)}\) collect
\(\widehat{\bu}_k^{(r)}\) for \(k\in\Kcal_j^{(r)}\) in increasing order, and set
\(\widehat{\boldsymbol\Theta}_{j,d}
 =\bigl(\langle\widehat{\bu}_k^{(1)},\widehat{\bu}_l^{(2)}\rangle\bigr)_{
 k\in\Kcal_j^{(1)},\ l\in\Kcal_j^{(2)}}\).
We define
\begin{equation}\label{eq:plugin-direction-matrices}
 \begin{aligned}
 \widehat{\bA}_j
 &=\begin{pmatrix}
   \bI_{m_1-1}&-\widehat{\boldsymbol\Theta}_{j,d}\\
   -\widehat{\boldsymbol\Theta}_{j,d}^\top&\bI_{m_2-1}
  \end{pmatrix},\qquad \widehat{\bC}_j =\Diag\left(
   \left(\frac{n_{12}}{n_1}(\widehat\sigma_{u,jk}^{(1)})^{2}
   \right)_{k\in\Kcal_j^{(1)}},
   \left(\frac{n_{12}}{n_2}(\widehat\sigma_{u,jk}^{(2)})^{2}
   \right)_{k\in\Kcal_j^{(2)}}\right).
 \end{aligned}
\end{equation}
Note that the matrix \(\widehat{\bA}_j\) is the Gram matrix of
\([\widehat{\bGamma}_{-j}^{(1)},
-\widehat{\bGamma}_{-j}^{(2)}]\), and hence is positive
semidefinite. 
As in Theorem~\ref{thm:one-sample-eigenvector}, we use a Gaussian
quadratic-form reference, whose weights are now determined by both
\(\widehat{\bA}_j\) and \(\widehat{\bC}_j\). Conditionally on the data,
\begin{equation}\label{eq:plugin-direction-reference}
 \widehat Q_j=\bz^\top\widehat{\bK}_{u,j}\bz
 \stackrel{\mathcal D}{=}
 \sum_{\ell=1}^{m_u}
  \widehat\gamma_{j\ell}Z_\ell^2,
\end{equation}
where \(\bz=(Z_1,\ldots,Z_{m_u})^\top\sim N_{m_u}(\boldsymbol0,\bI_{m_u})\)
is independent of the data, and \(\widehat\gamma_{j1},\ldots,
\widehat\gamma_{j m_u}\) are the nonnegative eigenvalues of
\(\widehat{\bK}_{u,j}=\widehat{\bC}_j^{1/2}
\widehat{\bA}_j\widehat{\bC}_j^{1/2}\).

The following theorem establishes the validity of this plug-in reference distribution for the two-sample spiked eigendirection test.

\begin{theorem}[Two-sample spiked eigendirection test]
\label{thm:two-sample-eigendirection-test}
Let \(\widehat F_{u,j}(x)=\Pp(\widehat Q_j\le x\mid\text{data})\). Suppose Assumptions~\ref{ass:factor-structure}--\ref{ass:nuisance-estimation}
hold for each group as
\(n_1,n_2,d\to\infty\), and
\(n_{12}/n_r\to\pi^{(r)}\in(0,1)\), \(r=1,2\).
If \(m_u\ge1\), then under \(\mathbb{H}_{0,u,j}\),
\begin{equation}\label{eq:two-sample-eigendirection-calibration}
 \sup_{x\in\R}
 \left|\Pp(T_{u,j}\le x)-\widehat F_{u,j}(x)\right|
 \xrightarrow{\Pp}0.
\end{equation}
\end{theorem}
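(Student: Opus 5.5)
We reduce $T_{u,j}$ to a Gaussian quadratic form in finitely many coordinates plus a centered nonspiked part, then show the plug-in reference $\widehat Q_j$ matches that limit.

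\textbf{Step 1 (linearization).} Under $\mathbb{H}_{0,u,j}$ align $\bu_j^{(1)}=\bu_j^{(2)}=\bu_j$. Since $\|\widehat\bu^{(1)}_j\|_2=\|\widehat\bu^{(2)}_j\|_2=1$, we have $1-\langle\widehat\bu^{(1)}_j,\widehat\bu^{(2)}_j\rangle^2=\|\mathbf e^{(1)}_j-\mathbf e^{(2)}_j\|_2^2\{1-\tfrac14\|\mathbf e^{(1)}_j-\mathbf e^{(2)}_j\|_2^2\}$. By Theorem~\ref{thm:one-sample-eigenvector}, $\|\mathbf e_j^{(r)}\|_2=O_{\Pp}(n_r^{-1/2}\{1+b_{u,j}^{(r)}\}^{1/2})$, with $b_{u,j}^{(r)}=O(1)$ because $\lambda_j\asymp d$. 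Hence the quartic term is $o_{\Pp}(n_{12}^{-1})$. Insert the expansion $\mathbf e_j^{(r)}=\bv^{(r)}+\br^{(r)}_{n,j}$, where $\bv^{(r)}=\frac{d+2}{d}n_r^{-1}\sum_i\sum_{k\ne j}\eta^{(r)}_{jk}(\bx_i^{(r)})\bu_k^{(r)}$. Cross terms are $O_{\Pp}(n_{12}\,n^{-1/2}\rho_n)=o_{\Pp}(1)$ since $\sqrt n\rho_n\to0$.

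\textbf{Step 2 (decomposition).} Write $\bv^{(r)}=\bv^{(r)}_{\rm sp}+\bv^{(r)}_{\rm ns}$, where $\bv^{(r)}_{\rm sp}$ collects $k\in\Kcal_j^{(r)}$ and $\bv^{(r)}_{\rm ns}$ collects $k>m_r$. Expanding $n_{12}\|\bv^{(1)}-\bv^{(2)}\|_2^2$ gives three pieces.
\begin{enumerate}[label=(\alph*),leftmargin=*]
\item The nonspiked squared norms $n_{12}\|\bv^{(r)}_{\rm ns}\|_2^2$. Their expectation is $(n_{12}/n_r)b_{u,j}^{(r)}$. Their variance vanishes: $\bv^{(r)}_{\rm ns}$ is an average of i.i.d.\ vectors with $\E\|\cdot\|_2^2=O(1)$ and bounded fourth moments of the angular scores. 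The $i\ne i'$ U-statistic part has variance $O(n^{-2}\|\text{cov}\|_F^2)$, where each coordinate covariance is $O(d^{-1})$ over about $d$ coordinates, so it is $o(1)$; the diagonal part concentrates by the LLN.
\item The cross terms. The within-sample term $\langle\bv^{(r)}_{\rm sp},\bv^{(r)}_{\rm ns}\rangle$ vanishes by orthogonality of the $\bu_k^{(r)}$. The between-sample terms $\langle\bv^{(1)},\bv^{(2)}\rangle$ involving any nonspiked part are mean zero by independence of the samples, with variance $O(n_{12}^2 n_1^{-1}n_2^{-1}\cdot\|\cdot\|)$. This needs $\|\Cov_1^{1/2}\Cov_2^{1/2}\|_F=O(d^{-1/2})$-type bounds, which hold because the nonspiked per-coordinate variances are $O(1/d)$.
\item The spiked quadratic form $n_{12}\|\bGamma^{(1)}_{-j}\bw^{(1)}-\bGamma^{(2)}_{-j}\bw^{(2)}\|_2^2$. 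Here $\bw^{(r)}$ has coordinates $\frac{d+2}{d}n_r^{-1}\sum_i\eta_{jk}(\bx^{(r)}_i)$. The multivariate CLT applies to the bounded, uncorrelated angular scores, since $\E[(\bu_j^\top\bs)(\bu_k^\top\bs)(\bu_j^\top\bs)(\bu_l^\top\bs)]=0$ for $k\ne l$. It gives $\sqrt{n_r}\bw^{(r)}\approx N(0,\Diag(\sigma^{2}_{u,jk}))$, independently across $r$.
\end{enumerate}
Combining, $T_{u,j}$ equals the quadratic form in (c) plus $o_{\Pp}(1)$, provided $\widehat b^{(r)}_{u,j}-b^{(r)}_{u,j}=o_{\Pp}(1)$. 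That consistency follows from Theorem~\ref{thm:one-sample-eigenvalue}, the $O(d^{-1})$ identity for $b_{u,j}$, and $\sum_{h\le m_r}(\widehat\lambda_h-\lambda_h)/\lambda_j=O_{\Pp}(n^{-1/2})$.

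\textbf{Step 3 (matching the reference).} The quadratic form in (c) has the law of $\bz^\top\bK_{u,j}\bz$ up to $o(1)$, where $\bK_{u,j}=\bC_j^{1/2}\bA_j\bC_j^{1/2}$. Here $\bA_j$ is the Gram matrix of $[\bGamma^{(1)}_{-j},-\bGamma^{(2)}_{-j}]$ and $\bC_j$ has entries $\pi^{(r)}\sigma^{*2}_{u,jk}$. These are the population analogues of $\widehat\bA_j$ and $\widehat\bC_j$, obtained by writing $\bGamma\bw=\bGamma\bC^{1/2}\bz$.
\begin{itemize}[leftmargin=*]
\item The map from the $m_u$ eigenvalues of $\bK$ to the distribution of $\bz^\top\bK\bz$ is continuous.
\item The eigenvalues $\widehat\gamma$ are consistent. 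Theorem~\ref{thm:one-sample-eigenvalue} gives $\widehat\sigma^{2}_{u,jk}\to\sigma^{*2}_{u,jk}$. Theorem~\ref{thm:one-sample-eigenvector} applied to each $k\le m_r$ gives $\widehat\bu^{(r)}_k\to\bu^{(r)}_k$ up to sign, so $\widehat{\boldsymbol\Theta}_{j,d}-\boldsymbol\Theta_{j,d}\to0$ after consistent sign choices. The eigenvalues of $\widehat\bK$ are invariant to those sign flips, since they act as conjugation by a diagonal $\pm1$ matrix.
\end{itemize}

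\textbf{Main obstacle.} The delicate issue is that the population $\boldsymbol\Theta_{j,d}$ need not converge. Handle it by a subsequence argument, extracting convergent $\boldsymbol\Theta$ by compactness along every subsequence, so the conclusion holds along the full sequence. Then $\sup_x|\Pp(\bz^\top\bK_{u,j}\bz\le x)-\widehat F_{u,j}(x)|\to0$ in probability, and Pólya's theorem upgrades convergence to uniformity because the limit is continuous. Continuity holds when $\bK\ne0$. If $\bK_{u,j}$ degenerates to zero, for example when the spiked spaces coincide so that $\bA_j$ is singular, the limit may have an atom at zero, and this case needs a separate argument or a nondegeneracy assumption.

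The step I expect to be the main obstacle is (a)--(b): showing the nonspiked contribution concentrates at its mean $b_{u,j}$ without any radial moments. This should use only the fact that $\bPsi$ depends on $\bs$ alone, via exact uniform-sphere fourth moments.
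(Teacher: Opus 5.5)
Your route is the paper's. You linearize through the one-sample eigenvector expansion and split each score sum into leading and bulk projections. You control the bulk squared norms by the $U$-statistic variance bound and the cross-sample bulk products by $\E(\bv^\top\bw)^2=\tr\{\Cov(\bv)\Cov(\bw)\}$. You apply the CLT to the finitely many leading coordinates, write what remains as a Gram-matrix quadratic form, handle the nonconvergent $\boldsymbol\Theta_{j,d}$ by compactness along subsequences, and remove sign ambiguity by diagonal $\pm1$ conjugation. Your exact identity for $1-t^2$ is a slightly cleaner form of the paper's $\alpha^{(r)},\bh^{(r)}$ reduction.

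The gap is at the end. You leave open whether $\bK_{u,j}$ can degenerate, suggest an extra nondegeneracy assumption, and offer coinciding spiked spaces as a possible degenerate case. Until this is settled your P\'olya step is not justified. Because $\boldsymbol\Theta_{j,d}$ need not converge, the limit law must be continuous along \emph{every} subsequence, which needs a lower bound on $\lambda_{\max}(\bK_{u,j})$ that is uniform in $d$.

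The missing observation is that no extra assumption is needed, and your example is not degenerate.

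\begin{itemize}
\item The columns of $[\bGamma^{(1)}_{-j},-\bGamma^{(2)}_{-j}]$ are unit vectors, so $\bA_j$ has unit diagonal whatever the relative orientation of the two spiked spaces.
\item Since $\bC_j$ is diagonal, $\tr(\bK_{u,j})=\tr(\bC_j\bA_j)=\tr(\bC_j)=\sum_{r=1}^2\pi^{(r)}\sum_{k\in\Kcal_j^{(r)}}(\sigma^{(r),*}_{u,jk})^2$.
\item This trace is strictly positive: each $\sigma^{*2}_{u,jk}=\kappa_j\kappa_k/(\kappa_j-\kappa_k)^2>0$, each $\pi^{(r)}>0$, and $m_u\ge1$ makes at least one $\Kcal_j^{(r)}$ nonempty. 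The same holds eventually for the finite-sample weights $(n_{12}/n_r)\sigma_{u,jk}^2$, since they converge to these limits.
\item Hence $\lambda_{\max}(\bK_{u,j})\ge\tr(\bC_j)/m_u\ge c>0$ uniformly. Every subsequential limit is then a nonzero positive semidefinite matrix, and $\sum_\ell\gamma_\ell Z_\ell^2$ has at least one positive weight, so its distribution function is continuous with no atom at zero.
\end{itemize}

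Singularity of $\bA_j$ does not matter. If $\bGamma^{(1)}_{-j}=\bGamma^{(2)}_{-j}$, each coordinate gives a $2\times2$ block of $\bK_{u,j}$ with eigenvalues $0$ and $c^{(1)}_k+c^{(2)}_k>0$, where $c^{(r)}_k=\pi^{(r)}(\sigma^{(r),*}_{u,jk})^2$.

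The paper records exactly this trace identity. It then compares the plug-in law with the population law via the quantitative bound $\sup_x|F_{\bK}(x)-F_{\bL}(x)|\le C\|\bK-\bL\|_{\op}^{1/3}$, valid whenever $\lambda_{\max}(\bK)\ge c$. Your qualitative continuity-plus-P\'olya argument also works once this uniform lower bound is in place.

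A minor point: the scores $\eta_{jk}$ are not uniformly bounded, since they can be of order $d$. The leading CLT and the concentration of the diagonal bulk term should therefore rest on the uniform fourth-moment bounds, not on boundedness or a plain law of large numbers.
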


From Theorem~\ref{thm:two-sample-eigendirection-test}, we reject
\(\mathbb{H}_{0,u,j}\) when \(T_{u,j}>\widehat q_{u,j}(1-\alpha)\), where \(\widehat q_{u,j}(1-\alpha)\) is the \((1-\alpha)\)th quantile of \(\hat{F}_{u,j}\). 
The reference quantile and tail probability can be evaluated numerically from the fitted weights in \eqref{eq:plugin-direction-reference}.
Theorem~\ref{thm:two-sample-eigendirection-test} implies that the test has asymptotic level \(\alpha\).
It is also worth noting that both the statistic \(T_{u, j}\) and its reference distribution \(\hat{F}_{u, j}\) are invariant to eigenvector sign choices. 
When \(m_1=m_2=1\), the leading random component is absent and \(T_{u,1}\xrightarrow{\Pp}0\) under the null; this degenerate case
is excluded from the calibration above.

\begin{remark}[Estimated factor numbers]\label{rem:estimated-factor-numbers}
Under the respective assumptions of
Theorems~\ref{thm:two-sample-eigenvalue-test} and
\ref{thm:two-sample-eigendirection-test}, both tests retain asymptotic
level \(\alpha\) after replacing \(m_r\) by
\(\widehat m_r\) throughout nuisance estimation and test calibration,
provided that
\[
 \Pp\{(\widehat m_1,\widehat m_2)=(m_1,m_2)\}\longrightarrow1.
\]
By convention, we do not reject if \(j>\min(\widehat m_1,\widehat m_2)\),
or, for the eigendirection test, if \(\widehat m_1+\widehat m_2-2=0\).
A short justification is given in
Section~\ref{supp:subsec:section-4-2-proofs}.
\end{remark}

\section{Numerical Studies}
\label{sec:simulation}


\subsection{Simulation design}
\label{subsec:simulation-design}

In the main experiments, we generate two independent samples from the
elliptical model in \eqref{eq:model}, with zero locations and factor ranks
\(m_1=m_2=3\):
\[
 \bx_i^{(r)}=\xi_i^{(r)}(\bSigma^{(r)})^{1/2}\bs_i^{(r)},
 \qquad \bs_i^{(r)}\sim\operatorname{Unif}(\mathbb S^{d-1}),
 \qquad r=1,2.
\]
The radial variable \(\xi_i^{(r)}\) is independent of \(\bs_i^{(r)}\),
and observations are independent within each group. We specify the
shape matrices and radial distributions below.
The shape matrix is specified as \(\bSigma^{(r)} =a^{(r)}\{\bL^{(r)}(\bL^{(r)})^\top+\bR\}\), where \(\bL^{(r)}\in\R^{d\times 3}\) is the group-specific loading matrix specified later, and \(\bR\) has entries \(R_{pq}=0.5^{|p-q|}\). The normalization constant \(a^{(r)}=d \tr^{-1}\{\bL^{(r)}(\bL^{(r)})^\top+\bR\}\) ensures that \(\tr(\bSigma^{(r)})=d\).
We consider the Gaussian, elliptical \(t_8\), and elliptical \(t_3\) distributions, with the corresponding radial variables specified by
\[
 \xi_i^{(r)}=
 \begin{cases}
  \sqrt{W_i^{(r)}},&\text{Gaussian},\\[3pt]
  \sqrt{(\nu-2)W_i^{(r)}/V_i^{(r)}},
   &\text{elliptical }t_\nu,\quad\nu\in\{8,3\},
 \end{cases}
\]
where \(W_i^{(r)}\sim\chi_d^2\) and \(V_i^{(r)}\sim\chi_\nu^2\) are independent of each other and of \(\bs_i^{(r)}\).
The elliptical \(t_8\) distribution has finite fourth moments, whereas the
elliptical \(t_3\) distribution has finite second but infinite fourth moments.
Both groups use the same radial family in each experiment, and these choices satisfy \(\E\{(\xi_i^{(r)})^2\}=d\), so that \(\Cov(\bx_i^{(r)})=\bSigma^{(r)}\).

For the null hypothesis, we set \(\bL^{(1)}=\bL^{(2)}=\bL\), so that \(\bSigma^{(1)}=\bSigma^{(2)}\), where \(\bL\in\R^{d\times3}\) has independent entries \(L_{pk}\sim\operatorname{Unif}(-\sqrt3c_k,\sqrt3c_k)\) with \((c_1,c_2,c_3)=(1,0.75,0.50)\).
For the alternatives, we keep \(\bL^{(1)}=\bL\) and write the compact singular-value decomposition
\[
 \bL=\bQ\bD^{1/2}\mathbf V^\top,
 \qquad \bD=\Diag(\beta_1,\beta_2,\beta_3),
 \qquad \beta_1\ge\beta_2\ge\beta_3>0,
\]
where \(\bQ\in\R^{d\times3}\) and \(\mathbf V\in\R^{3\times3}\) have orthonormal columns, and the \(\beta_k\) are the nonzero eigenvalues of \(\bL\bL^\top\). 
We then set
\[
 \bL^{(2)}=
 \begin{cases}
  \bQ\Diag(\sqrt{h\beta_1},\sqrt{\beta_2},\sqrt{\beta_3})\mathbf V^\top,
   &\text{eigenvalue alternative},\quad h>1,\\[4pt]
  \bQ\mathbf O_\theta\bD^{1/2}\mathbf V^\top,
   &\text{eigendirection alternative},
 \end{cases}
\]
where
\[
 \mathbf O_\theta=
 \begin{pmatrix}
  \cos\theta&-\sin\theta&0\\
  \sin\theta&\cos\theta&0\\
  0&0&1
 \end{pmatrix}.
\]
The first construction for testing eigenvalues replaces \(\beta_1\) by \(h\beta_1\) while
preserving the factor eigendirections; the second for testing eigendirections rotates the first two
factor eigendirections while preserving \(\beta_1,\beta_2,\beta_3\).
We measure the resulting discrepancies using the full normalized shapes:
\begin{equation}\label{eq:simulation-effects}
 \Delta_\lambda
 =|\log\lambda_1^{(1)}-\log\lambda_1^{(2)}|,
 \qquad
 \Delta_u
 =\frac{180}{\pi}
   \arccos|\langle\bu_1^{(1)},\bu_1^{(2)}\rangle|.
\end{equation}

For our testing procedure, we use the spatial-median location pilot and the POET--SS precision pilot of \citet{XuMaWangFeng2025} for each group, as introduced in Section~\ref{subsec:pilot-construction}. We use \(K=3\) balanced cross-fitting folds and hard thresholding in POET--SS with \(C=0.60\).
The benchmarks are the sample-covariance eigenvalue-proportion and
eigendirection tests of \citet{FanLiXiaZheng2026}. 
Both methods set the number of factors to three in each group unless otherwise stated. 
The nominal level is \(0.05\) for all tests. 
The main null experiments use 5,000 replications, and the power experiments use 2,000 replications per point.

\subsection{Empirical size and power}

The results of the null experiments with \(d=n_1=n_2=500\) are displayed in Figure~\ref{fig:qq-t3} and Table~\ref{tab:size-balanced}.
In particular, Figure~\ref{fig:qq-t3} shows the quantile--quantile (Q--Q) plots and histograms for the eigenvalue and eigendirection statistics under the elliptical \(t_3\) distribution for the first principal component. 
We can observe that the proposed statistics closely follow their reference laws (Figure~\ref{fig:qq-t3}(a)--(b) and (e)--(f)), but the benchmark eigenvalue statistic is too concentrated (Figure~\ref{fig:qq-t3}(c)--(d)) and its direction statistic is strongly inflated (Figure~\ref{fig:qq-t3}(g)--(h)).
On the other hand, Table~\ref{tab:size-balanced} reports the empirical rejection probabilities for all three principal components under the Gaussian, elliptical \(t_8\), and elliptical \(t_3\) distributions. 
The proposed tests maintain the nominal size across all three distributions, while the benchmark tests are severely distorted under heavy-tailed distributions, which can be expected since they do not account for the heavy tails in the data in their models. 

\begin{figure}[tb]
\centering
\includegraphics[width=0.85\textwidth]{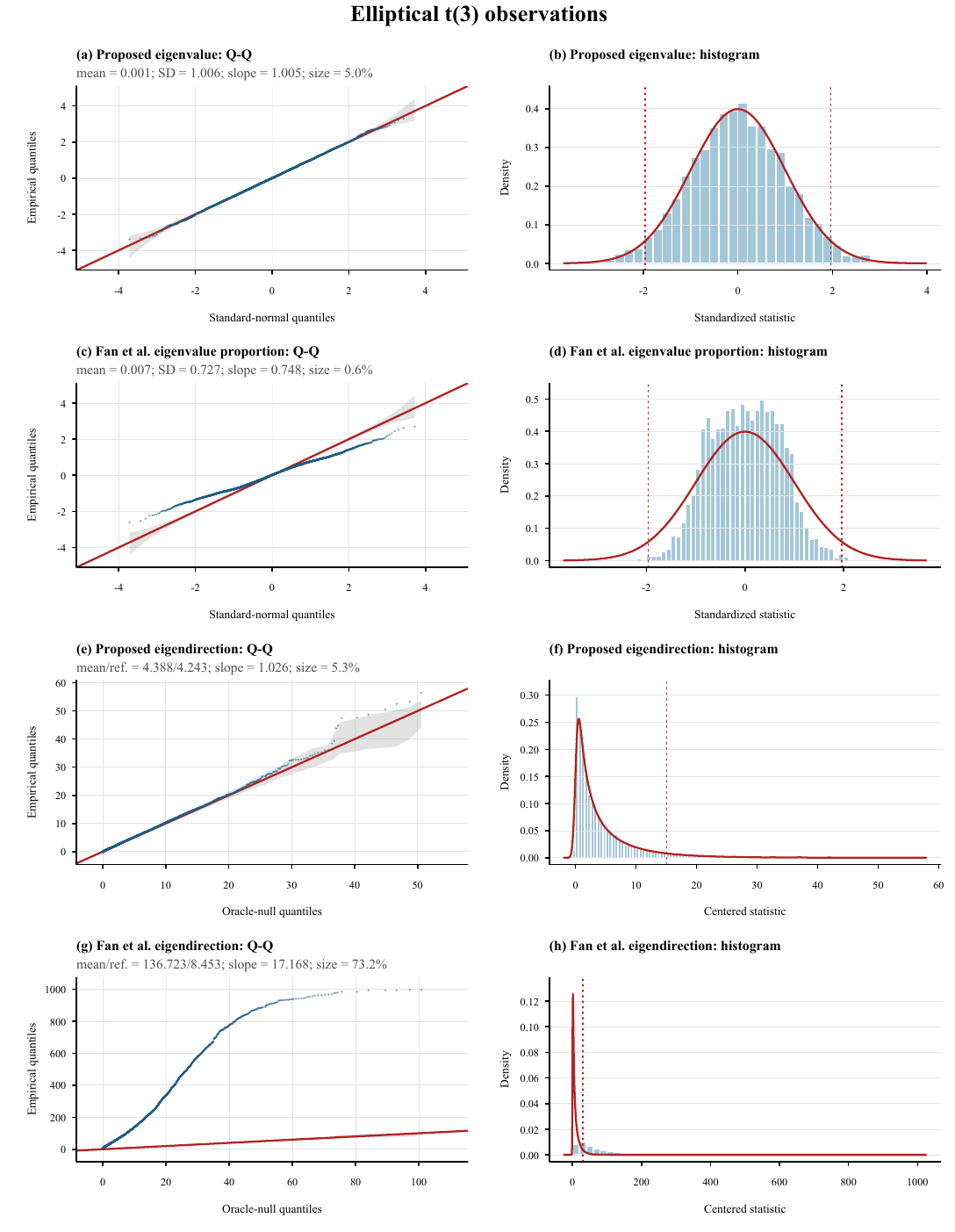}
\caption{Null diagnostics under the elliptical \(t_3\) distribution for the first principal component, with \(d=n_1=n_2=500\), \(m_1=m_2=3\), and 5,000 replications. 
In the Q--Q plots, red lines show \(y=x\), and gray bands are pointwise \(95\%\) reference envelopes for the empirical quantiles.
Each reported Q--Q slope is computed from quantile pairs with probability levels between \(0.05\) and \(0.95\).
In the histograms, red curves show the reference densities, and dashed vertical lines indicate the critical values at significance level \(0.05\).
}
\label{fig:qq-t3}
\end{figure}

\begin{table}[tb]
\centering
\small
\setlength{\tabcolsep}{6.5pt}
\renewcommand{\arraystretch}{1.02}
\begin{threeparttable}
\caption{Empirical rejection probabilities under the null at nominal level
\(0.05\), with \(d=n_1=n_2=500\) and \(m_1=m_2=3\).}
\label{tab:size-balanced}
\begin{tabular}{lcrrrr}
\toprule
& & \multicolumn{2}{c}{Eigenvalue}
& \multicolumn{2}{c}{Eigendirection} \\
\cmidrule(lr){3-4}\cmidrule(lr){5-6}
Distribution & \(j\)
& \multicolumn{1}{c}{Proposed} & \multicolumn{1}{c}{Fan et al.}
& \multicolumn{1}{c}{Proposed} & \multicolumn{1}{c}{Fan et al.} \\
\midrule
\multirow{3}{*}{Gaussian} & 1 & 0.051 & 0.051 & 0.052 & 0.050 \\
 & 2 & 0.051 & 0.052 & 0.052 & 0.047 \\
 & 3 & 0.047 & 0.044 & 0.058 & 0.051 \\
\midrule
\multirow{3}{*}{\(t_8\)} & 1 & 0.049 & 0.032 & 0.051 & 0.117 \\
 & 2 & 0.051 & 0.034 & 0.049 & 0.126 \\
 & 3 & 0.051 & 0.030 & 0.058 & 0.264 \\
\midrule
\multirow{3}{*}{\(t_3\)} & 1 & 0.050 & 0.004 & 0.054 & 0.698 \\
 & 2 & 0.051 & 0.024 & 0.050 & 0.734 \\
 & 3 & 0.049 & 0.026 & 0.061 & 0.995 \\
\bottomrule
\end{tabular}
\end{threeparttable}
\end{table}

For the power experiments, we adopt \(d=n_1=n_2=250\) under the Gaussian and elliptical \(t_3\) distributions and consider the tests for the first principal component. Since the benchmark tests exhibit size distortions under heavy-tailed distributions, we report size-adjusted power for both methods.
Specifically, we obtain method-specific critical values from independent null simulations to target an empirical size of \(0.05\) and then report the
proportion of alternative replications rejected using these critical values.
Figure~\ref{fig:power-adjusted} presents the resulting size-adjusted power comparison. 
Under the Gaussian distribution, the curves of the two methods are nearly indistinguishable, but under the elliptical \(t_3\) distribution, the proposed method clearly has higher adjusted power over the simulated range.
Both the size and power experiments show that the proposed tests are robust to heavy tails and maintain nominal size, which demonstrates their effectiveness as reliable statistical tools.


\begin{figure}[tb]
\centering
\includegraphics[width=0.9\textwidth]{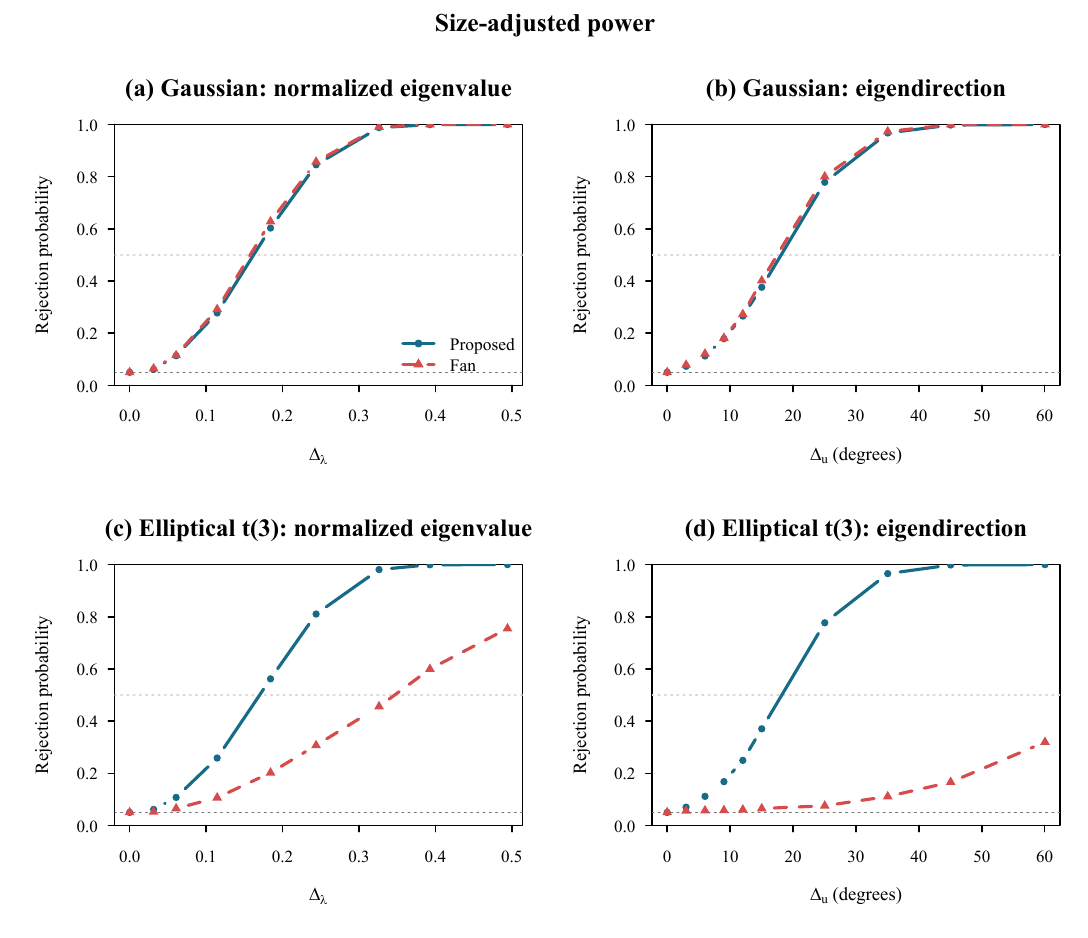}
\caption{Size-adjusted power for the first principal component under the Gaussian and elliptical
\(t_3\) distributions, with \(d=n_1=n_2=250\), \(m_1=m_2=3\), and 2,000
replications per point. 
Blue circles and red triangles denote the proposed method and the benchmark, respectively. The dark dotted line marks the level \(0.05\).}
\label{fig:power-adjusted}
\end{figure}

Additional experiments in Sections~\ref{supp:subsec:factor-sensitivity} and
\ref{supp:subsec:serial-stress} of the Supplementary Material examine the sensitivity to the number of factors and serial departures from the i.i.d. elliptical model. 
\subsection{Application to S\&P 500 stock returns}
\label{sec:application}


We use daily returns from the Center for Research in Security Prices (CRSP) from January 2007 through December 2024.
Following the sampling design of \citet{FanLiXiaZheng2026}, we identify 240 securities whose recorded S\&P 500 membership intervals cover all 4,530 trading days and select a fixed panel of \(d=200\). 
For stock \(p\) on day \(t\), let \(\mathrm{RET}_{pt}\) denote the daily holding-period return reported by CRSP. We use the corresponding log return \(\log(1+\mathrm{RET}_{pt})\) as the observation.
After removing one incomplete date, we get 4,529 daily return vectors. 
Across the 3,600 stock--year return series, the median sample excess kurtosis
is \(2.5\), which suggests
departures from Gaussian tail behavior and motivates inference that
accommodates heavy-tailed returns.
Additional tail diagnostics appear in
Section~\ref{supp:subsec:empirical-tails}.

We consider the break dates reported by \citet{FanLiXiaZheng2026}, which are treated as fixed inputs to this diagnostic analysis.
For each of the seven candidate dates, we form two groups of daily returns using the preceding and following observations, with each adjacent window capped at one year.
The sample sizes \(n_1\) and \(n_2\) range from 194 to 253 trading days.
For each date, we select the factor number by minimizing the information criterion \(IC_{p2}\) of \citet{BaiNg2002} over \(0,\ldots,8\) in each group separately, and denote the selected factor numbers by \(\widehat m_1\) and \(\widehat m_2\).
Then, we apply the proposed two-sample tests to the first three principal components (PC1--PC3) and report the corresponding p-values.


Table~\ref{tab:sp500-break-tests} reports the p-values for the seven candidate dates.
The eigenvalue and eigendirection test p-values for the first three principal components are denoted by \(p_{\lambda,j}\) and \(p_{u,j}\), respectively, for \(j=1,2,3\).
The eigenvalue tests concern relative component strength under the trace constraint, whereas the eigendirection tests concern the stock weights defining each component.
Both tests reject for PC1 at all seven dates at the \(0.05\) level; Figure~\ref{fig:sp500-break-events} places these comparisons along the S\&P 500 index series.
The results for PC2 and PC3 are more varied.
For example, in September 2008, eigendirection equality for PC2 is rejected
(\(p_{u,2}<0.001\)), whereas eigenvalue equality is not
(\(p_{\lambda,2}=0.185\)). The November 2016 comparison gives the opposite pattern:
\(p_{\lambda,2}=0.017\) and \(p_{u,2}=0.292\).
These comparisons illustrate the complementary information provided by
the two tests.

For February 2020, direction equality is rejected for both PC2 and PC3
(\(p_{u,2}<0.001\) and \(p_{u,3}=0.012\)), while neither eigenvalue test
rejects (\(p_{\lambda,2}=0.228\) and \(p_{\lambda,3}=0.583\)).
This comparison spans the onset of the COVID-19 market
disruption and the subsequent recovery, during which technology and
health care stocks outperformed energy and financial stocks
\citep{BIS2020}.
The pandemic's uneven effects across firms may have altered their
comovement patterns and the stock weights defining secondary components.
This offers a possible interpretation of the directional changes detected
here, which the eigenvalue tests alone would not reveal.

\begin{table}[tb]
\centering
\small
\setlength{\tabcolsep}{5pt}
\begin{threeparttable}
\caption{The p-values from the proposed method for testing the eigenvalues and eigendirections of the fixed 200-stock S\&P 500 panel at the seven candidate dates.}
\label{tab:sp500-break-tests}
\begin{tabular}{lcc*{6}{c}}
\toprule
& $\widehat m_1$ & $\widehat m_2$ & \multicolumn{2}{c}{PC1} & \multicolumn{2}{c}{PC2} & \multicolumn{2}{c}{PC3} \\
\cmidrule(lr){4-5}\cmidrule(lr){6-7}\cmidrule(lr){8-9}
Date & & & \(p_{\lambda,1}\) & \(p_{u,1}\) & \(p_{\lambda,2}\) & \(p_{u,2}\) & \(p_{\lambda,3}\) & \(p_{u,3}\) \\
\midrule
2008-09-12 & 4 & 4 & \(\boldsymbol{<0.001}\) & \(\boldsymbol{<0.001}\) & \(0.185\) & \(\boldsymbol{<0.001}\) & \(0.639\) & \(0.430\) \\
2009-06-22 & 4 & 2 & \(\mathbf{0.026}\) & \(\boldsymbol{<0.001}\) & \(\mathbf{0.012}\) & \(\boldsymbol{<0.001}\) & NA & NA \\
2011-12-21 & 3 & 2 & \(\boldsymbol{<0.001}\) & \(\boldsymbol{<0.001}\) & \(\mathbf{0.010}\) & \(\boldsymbol{<0.001}\) & NA & NA \\
2014-10-07 & 2 & 4 & \(\mathbf{0.001}\) & \(\boldsymbol{<0.001}\) & \(\boldsymbol{<0.001}\) & \(\boldsymbol{<0.001}\) & NA & NA \\
2016-11-07 & 4 & 4 & \(\boldsymbol{<0.001}\) & \(\boldsymbol{<0.001}\) & \(\mathbf{0.017}\) & \(0.292\) & \(0.364\) & \(0.350\) \\
2020-02-21 & 4 & 7 & \(\boldsymbol{<0.001}\) & \(\boldsymbol{<0.001}\) & \(0.228\) & \(\boldsymbol{<0.001}\) & \(0.583\) & \(\mathbf{0.012}\) \\
2022-10-17 & 6 & 5 & \(\mathbf{0.035}\) & \(\boldsymbol{<0.001}\) & \(\mathbf{0.005}\) & \(0.152\) & \(0.961\) & \(0.482\) \\
\bottomrule
\end{tabular}
\begin{tablenotes}[flushleft]\footnotesize
\item 
Bold entries indicate rejection at \(0.05\);
NA denotes \(j>\min(\widehat m_1,\widehat m_2)\). 
\end{tablenotes}
\end{threeparttable}
\end{table}

\begin{figure}[tb]
\centering
\includegraphics[width=0.9\textwidth]{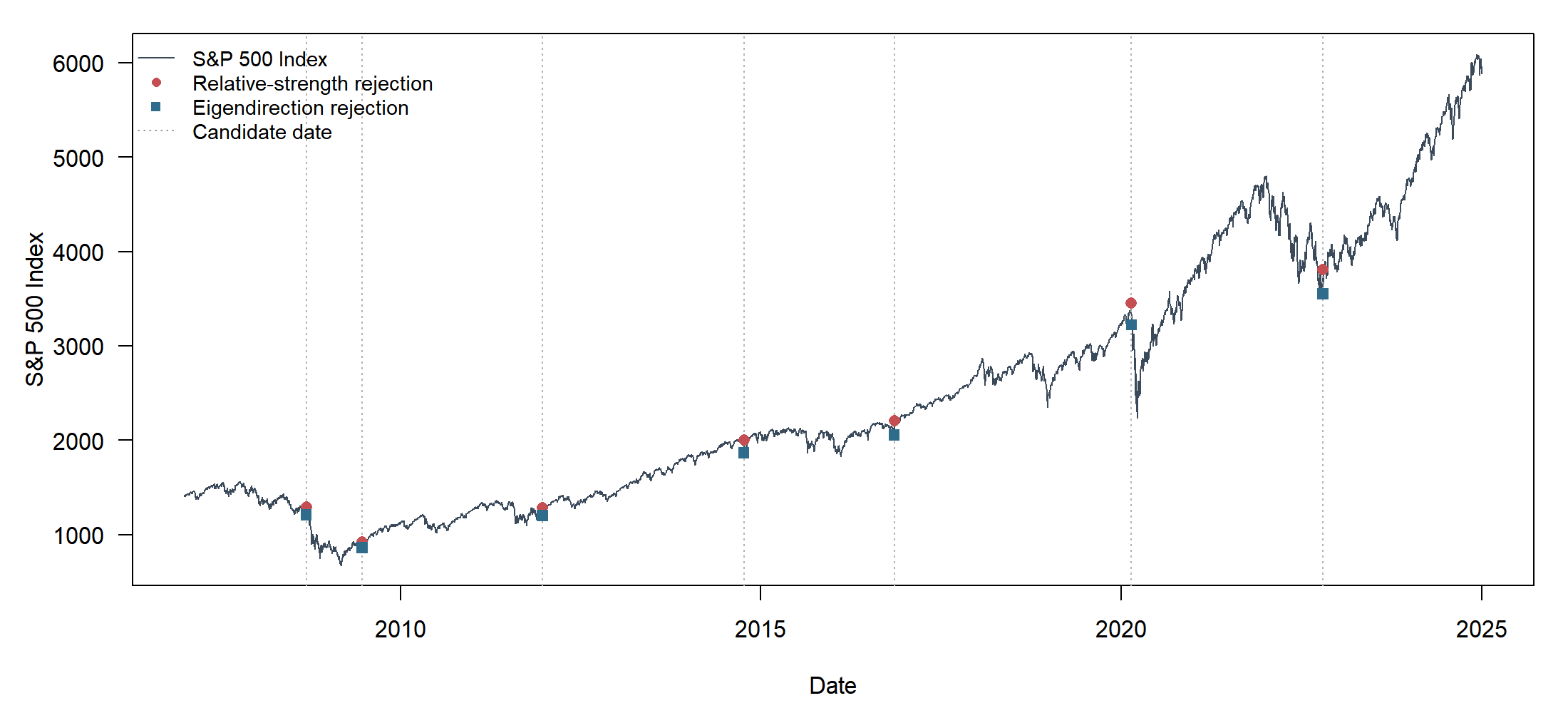}
\caption{S\&P 500 index and PC1 rejections at the seven candidate dates.
Dotted vertical lines mark the dates. Red circles and blue squares
indicate rejection of eigenvalue and eigendirection equality,
respectively, at the level \(0.05\).}
\label{fig:sp500-break-events}
\end{figure}
\section{Concluding Remarks}
\label{sec:conclusion}

We establish asymptotically valid two-sample inference for principal eigenvalues and eigendirections under high-dimensional elliptical factor models. The proposed framework accommodates different radial laws across samples and general
pilots satisfying explicit accuracy conditions, enabling separate
diagnostics for changes in relative component strength and composition.

Several extensions merit further study. First, allowing temporal dependence would require controlling dependence between nuisance fitting and score evaluation, estimating the long-run covariance of the angular scores, and revisiting the accumulated nonspiked contribution. 
Second, extending the pointwise tests to simultaneous inference over components and time windows could support change-point analysis. 
This would require joint approximations and calibration that account for searching over candidate dates.
Finally, inference for principal eigenspaces could accommodate repeated or closely spaced spiked eigenvalues, for which individual eigendirections may be unidentifiable or difficult to estimate reliably.


\clearpage
\pdfbookmark[0]{Supplementary Material}{supplement-title}
\begin{center}
{\Large\bfseries Supplementary Material for
{\Large ``\papertitle''}\par}
\end{center}
\vspace{1em}

\setcounter{section}{0}
\setcounter{subsection}{0}
\setcounter{subsubsection}{0}
\setcounter{equation}{0}
\setcounter{theorem}{0}
\setcounter{assumption}{0}
\setcounter{lemma}{0}
\setcounter{proposition}{0}
\setcounter{remark}{0}
\setcounter{figure}{0}
\setcounter{table}{0}
\renewcommand{\thesection}{\Alph{section}}
\renewcommand{\thesubsection}{\thesection.\arabic{subsection}}
\counterwithin{equation}{section}
\counterwithin{theorem}{section}
\counterwithin{assumption}{section}
\counterwithin{lemma}{section}
\counterwithin{proposition}{section}
\counterwithin{remark}{section}
\counterwithin{figure}{section}
\counterwithin{table}{section}

\providecommand*{\theHsection}{}
\providecommand*{\theHsubsection}{}
\providecommand*{\theHsubsubsection}{}
\providecommand*{\theHequation}{}
\providecommand*{\theHtheorem}{}
\providecommand*{\theHassumption}{}
\providecommand*{\theHlemma}{}
\providecommand*{\theHproposition}{}
\providecommand*{\theHremark}{}
\providecommand*{\theHfigure}{}
\providecommand*{\theHtable}{}
\renewcommand*{\theHsection}{supp.\arabic{section}}
\renewcommand*{\theHsubsection}{supp.\arabic{section}.\arabic{subsection}}
\renewcommand*{\theHsubsubsection}{supp.\arabic{section}.\arabic{subsection}.\arabic{subsubsection}}
\renewcommand*{\theHequation}{supp.\arabic{section}.\arabic{equation}}
\renewcommand*{\theHtheorem}{supp.\arabic{section}.\arabic{theorem}}
\renewcommand*{\theHassumption}{supp.\arabic{section}.\arabic{assumption}}
\renewcommand*{\theHlemma}{supp.\arabic{section}.\arabic{lemma}}
\renewcommand*{\theHproposition}{supp.\arabic{section}.\arabic{proposition}}
\renewcommand*{\theHremark}{supp.\arabic{section}.\arabic{remark}}
\renewcommand*{\theHfigure}{supp.\arabic{section}.\arabic{figure}}
\renewcommand*{\theHtable}{supp.\arabic{section}.\arabic{table}}

\section{Proofs}
\label{supp:sec:proofs}

\subsection{Proofs for section~\ref*{sec:single-sample}}
\label{supp:subsec:section-3-proofs}
We use the notation of Section~\ref{sec:single-sample}. Unless stated
otherwise, Assumptions~\ref{ass:factor-structure}--\ref{ass:nuisance-estimation}
hold, and \(C\) denotes a finite constant
independent of \(n,d\). In particular,
\[
 \|\bOmega\|_{\op}\le c_{\varepsilon}^{-1},\qquad
 \|\bSigma\|_{\mathrm F}\le d,\qquad
 \lambda_j\asymp g_j\asymp d\quad(j\le m),
 \qquad g_j=\min_{k\ne j}|\lambda_j-\lambda_k|.
\]
Throughout this subsection, \(a_n\) denotes the common bound on the
precision and standardized location errors in
Assumption~\ref{ass:nuisance-estimation}. Recall
\(\epsilon_n=a_n+n^{-1/2}\), \(\rho_n=\epsilon_n^2\), and
\(\sqrt n\,\rho_n\to0\).

Write \(\bx_i=\bmu+\xi_i\bSigma^{1/2}\bs_i\) for the sample
representation. For \(i\in I_\ell\), set
\[
 \by_i=\bx_i-\bmu,\qquad
 \bdelta_{\mu,-\ell}=\widehat{\bmu}_{-\ell}-\bmu,\qquad
 \bDelta_{P,-\ell}=c_d\widehat{\bP}_{-\ell}-\bOmega,
 \qquad
 \bar{\bDelta}_P=\sum_{\ell=1}^K\frac{N_\ell}{n}\bDelta_{P,-\ell}
 =c_d\bar{\bP}-\bOmega.
\]
Let
\(\mathcal F_{-\ell}=\sigma(I_1,\ldots,I_K)\vee
\sigma(\bx_i:i\notin I_\ell)\).
Conditionally on \(\mathcal F_{-\ell}\), the nuisance errors are fixed and
the observations in \(I_\ell\) are independent and retain their population
distribution. All conditional arguments are applied one fold at a time;
the foldwise results are then combined over the fixed number \(K\) of folds.

It is convenient to control the Frobenius norm and the trace together.
For a matrix \(\bH\), define the Hilbert norm
\begin{equation}\label{supp:eq:joint-matrix-norm}
 \|\bH\|_\diamond
 =\frac1d\{\|\bH\|_{\mathrm F}^2+|\tr \bH|^2\}^{1/2}.
\end{equation}
For vectors \(\bv,\bw\), \(\|d\bv\bw^\top\|_\diamond\le\sqrt2\|\bv\|_2\|\bw\|_2\).
This auxiliary norm makes the trace remainder explicit before
normalization. We first state the supporting lemmas; their proofs are
collected in Section~\ref{supp:subsec:auxiliary-lemma-proofs}.

\begin{lemma}[Angular moments and denominator control]
\label{supp:lem:angular-denominators}
Put \(\by=\xi\bSigma^{1/2}\bs\),
\(\mathcal A=\bs^\top\bSigma\bs\), and
\(\omega=(\zeta_1\xi)^{-1}\), with sample counterparts
\(\mathcal A_i=\bs_i^\top\bSigma\bs_i\) and
\(\omega_i=(\zeta_1\xi_i)^{-1}\). For every fixed positive integer \(q\),
\[
 \E\mathcal A^q\le(2q-1)!!,\qquad
 \E\omega^q\le C_q.
\]
Moreover,
\[
 \max_{i\le n}\mathcal A_i=O_{\Pp}(\log n),\qquad
 \max_{i\le n}(1+\mathcal A_i^{1/2})\omega_i=O_{\Pp}(n^{1/4}).
\]
Define
\[
 e_{\ell i}
 =\frac{(\by_i-\bdelta_{\mu,-\ell})^\top
       (\bOmega+\bDelta_{P,-\ell})(\by_i-\bdelta_{\mu,-\ell})}
       {\xi_i^2}-1.
\]
Then
\begin{equation}\label{supp:eq:uniform-denominator-control}
 \max_{\ell\le K}\max_{i\in I_\ell}|e_{\ell i}|
 =O_{\Pp}\{a_n\log n+a_n n^{1/4}+a_n^2 n^{1/2}\}
 =o_{\Pp}(1).
\end{equation}
In particular, all these relative denominators lie in \([1/2,3/2]\)
with probability tending to one.
\end{lemma}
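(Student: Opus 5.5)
The plan is to reduce everything to the representation \(\bs=\bz/\|\bz\|_2\) with \(\bz\sim N_d(\boldsymbol0,\bI_d)\). The two facts driving the argument are that \(\bs\) and \(\|\bz\|_2\) are independent, and that \(\tr(\bSigma)=d\) makes the eigenvalue weights \(\lambda_k/d\) a probability vector.

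\textbf{Moments of \(\mathcal A\).} Since \(\bz^\top\bSigma\bz=\mathcal A\,\|\bz\|_2^2\) with the two factors independent, we have
\[
\E(\bz^\top\bSigma\bz)^q=\E\mathcal A^q\,\E\|\bz\|_2^{2q}.
\]
The second factor is \(\E\|\bz\|_2^{2q}=d(d+2)\cdots(d+2q-2)\ge d^q\). For the left side, write \(\bz^\top\bSigma\bz=\sum_k(\lambda_k/d)(d\,w_k^2)\), where \(w_k=\bu_k^\top\bz\) are i.i.d.\ \(N(0,1)\). Convexity of \(x\mapsto x^q\) then gives
\[
\E(\bz^\top\bSigma\bz)^q\le d^q\sum_k(\lambda_k/d)\,\E w_k^{2q}=d^q(2q-1)!!.
\]
Dividing yields \(\E\mathcal A^q\le(2q-1)!!\). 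Because \((2q-1)!!\le 2^q q!\), this gives \(\E\exp(\mathcal A/4)\le C\). A union bound with Markov's inequality applied to \(\exp(\mathcal A_i/4)\) then gives \(\max_{i\le n}\mathcal A_i=O_{\Pp}(\log n)\).

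\textbf{Moments of \(\omega\).} Here I must compare \(\zeta_1\) with \(\E\xi^{-1}\). By independence, \(\zeta_1=\E(\xi^{-1})\,\E(\mathcal A^{-1/2})\). Jensen with \(\E\mathcal A=\tr(\bSigma)/d=1\) gives \(\E\mathcal A^{-1/2}\ge1\). The lower bound \(\mathcal A\ge\lambda_{\min}(\bSigma)\ge c_\varepsilon\) from Assumption~\ref{ass:factor-structure}(iii) gives \(\E\mathcal A^{-1/2}\le c_\varepsilon^{-1/2}\). Hence \(\zeta_1\ge\E\xi^{-1}\), and Assumption~\ref{ass:inverse-radial-moments} yields
\[
\E\omega^q\le\frac{\E\bar\xi^{-q}}{(\E\bar\xi^{-1})^q}\le (K_\xi\sqrt q)^q=C_q
\]
for fixed \(q\) and \(d\) large. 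Next, \(\mathcal A_i\) and \(\omega_i\) are independent, so
\[
\E\{(1+\mathcal A^{1/2})^4\omega^4\}\le C\,(1+\E\mathcal A^2)\,C_4<\infty.
\]
Markov's inequality and a union bound give \(\Pp\{\max_i(1+\mathcal A_i^{1/2})\omega_i>tn^{1/4}\}\le C/t^4\), which is the stated \(O_{\Pp}(n^{1/4})\).

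\textbf{Denominator control.} Write \(\bM=\bOmega+\bDelta_{P,-\ell}\), \(\bdelta=\bdelta_{\mu,-\ell}\) and \(\bDelta=\bDelta_{P,-\ell}\). Expanding, and using \(\by_i^\top\bOmega\by_i=\xi_i^2\), the expression splits into three terms:
\[
e_{\ell i}=\bs_i^\top\bSigma^{1/2}\bDelta\bSigma^{1/2}\bs_i-\frac{2\bdelta^\top\bM\bSigma^{1/2}\bs_i}{\xi_i}+\frac{\bdelta^\top\bM\bdelta}{\xi_i^2}.
\]
Assumption~\ref{ass:nuisance-estimation} gives \(\|\bM\|_{\op}\le c_\varepsilon^{-1}+O_{\Pp}(a_n)\le C\) with probability tending to one. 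I bound each term deterministically, using \(\|\bSigma^{1/2}\bs_i\|_2=\mathcal A_i^{1/2}\) and \(\xi_i^{-1}=\zeta_1\omega_i\):
\begin{align*}
|\text{first}|&\le\|\bDelta\|_{\op}\,\mathcal A_i,\\
|\text{second}|&\le 2C\,(\zeta_1\|\bdelta\|_2)\,\mathcal A_i^{1/2}\omega_i,\\
|\text{third}|&\le C\,(\zeta_1\|\bdelta\|_2)^2\,\omega_i^2.
\end{align*}
Taking the maximum over \(\ell\le K\) and \(i\in I_\ell\), and inserting the sample-wide maxima from the previous steps together with the rates \(O_{\Pp}(a_n)\), gives the three terms \(a_n\log n\), \(a_n n^{1/4}\) and \(a_n^2n^{1/2}\). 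No conditioning on \(\mathcal F_{-\ell}\) is needed, because the bounds are products of nuisance errors and maxima over the full sample. Since \(a_n=o(n^{-1/4})\), all three terms are \(o(1)\), which places the denominators in \([1/2,3/2]\) with probability tending to one.

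The step I expect to be the delicate one is the moment bound for \(\omega\). It requires showing that \(\zeta_1\) is comparable to \(\E\xi^{-1}\) uniformly in \(d\), which the trace normalization \(\E\mathcal A=1\) and the eigenvalue floor \(c_\varepsilon\) together provide. Everything else is Markov-type bounds with fixed-order moments.
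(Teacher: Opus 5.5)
Your proposal is correct and follows essentially the same route as the paper. Both use the Gaussian representation plus Jensen over the weights $\lambda_k/d$ for $\E\mathcal A^q$, the factorization $\zeta_1=\E(\xi^{-1})\E(\mathcal A^{-1/2})$ with $1\le\E(\mathcal A^{-1/2})\le c_{\varepsilon}^{-1/2}$, a fourth-moment union bound for the joint maximum, and the same deterministic three-term decomposition of $e_{\ell i}$. The only cosmetic difference is that you obtain $\max_i\mathcal A_i=O_{\Pp}(\log n)$ from the exponential-moment bound $\E e^{\mathcal A/4}\le 2$, whereas the paper applies Markov's inequality with moment order $q_n\asymp\log n$; the two arguments are equivalent.
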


Define the rescaled Tyler update and the centered oracle score by
\[
 \bM_n
 =c_d^{-1}\widehat{\bSigma}_{\mathrm{cf}},\qquad
 \bPhi(\bx)=d\bSigma^{1/2}\bs\bs^\top\bSigma^{1/2}-\bSigma.
\]
Here \(\bM_n\) is the Tyler update obtained after multiplying
every precision pilot by \(c_d\); it has not been trace-normalized.
This rescaling is used only in the proof. By
Remark~\ref{rem:pilot-scale}, it leaves \(\widetilde{\bSigma}\),
\(\widetilde{\bOmega}\), and the final estimator unchanged.

\begin{lemma}[Joint expansion of the rescaled Tyler update]
\label{supp:lem:raw-linear}
There exists a remainder \(\bR_{\mathrm{cf},n}\) such that
\begin{equation}\label{supp:eq:raw-matrix-expansion}
 \begin{aligned}
 \bM_n-\bSigma
 ={}&\frac1n\sum_{i=1}^n\bPhi(\bx_i)\\
 &-\frac{2\bSigma\bar{\bDelta}_P\bSigma
       +\tr(\bSigma\bar{\bDelta}_P)\bSigma}{d+2}
   +\bR_{\mathrm{cf},n},
 \qquad \|\bR_{\mathrm{cf},n}\|_\diamond=O_{\Pp}(\rho_n).
 \end{aligned}
\end{equation}
Moreover,
\(\|\bM_n-\bSigma\|_\diamond
=O_{\Pp}(\epsilon_n)\).
The population location derivative is zero, while the population
precision derivative in the direction \(\bH=\bH^\top\) is
\begin{equation}\label{supp:eq:raw-tyler-derivative}
 -\frac{2\bSigma \bH\bSigma+\tr(\bSigma \bH)\bSigma}{d+2}.
\end{equation}
\end{lemma}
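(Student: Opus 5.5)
Fix a fold \(\ell\) and work conditionally on \(\mathcal F_{-\ell}\). Write \(\bP_\ell=\bOmega+\bDelta_{P,-\ell}\) and \(\bdelta=\bdelta_{\mu,-\ell}\). For \(i\in I_\ell\), factor \(\xi_i^2\) out of the numerator and denominator of the rescaled summand:
\[
 \frac{d(\by_i-\bdelta)(\by_i-\bdelta)^\top}{(\by_i-\bdelta)^\top\bP_\ell(\by_i-\bdelta)}
 =\frac{d\,\bv_i\bv_i^\top}{1+e_{\ell i}},\qquad \bv_i=\bSigma^{1/2}\bs_i-\bdelta/\xi_i .
\]
The oracle summand is \(d\bSigma^{1/2}\bs_i\bs_i^\top\bSigma^{1/2}=\bSigma+\bPhi(\bx_i)\). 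Lemma~\ref{supp:lem:angular-denominators} places every \(1+e_{\ell i}\) in \([1/2,3/2]\) with probability tending to one. On that event we use the exact identity \((1+e)^{-1}=1-e+e^2/(1+e)\). We also expand \(e_{\ell i}=e^{(1)}_{\ell i}+e^{(2)}_{\ell i}\), where
\[
 e^{(1)}_{\ell i}=\bs_i^\top\bSigma^{1/2}\bDelta_{P,-\ell}\bSigma^{1/2}\bs_i-2\xi_i^{-1}\bs_i^\top\bSigma^{-1/2}\bdelta ,
\]
and \(e^{(2)}_{\ell i}\) collects the terms quadratic in \((\bDelta_{P,-\ell},\bdelta/\xi_i)\). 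This writes the summand minus the oracle summand as a sum of pieces that are linear, quadratic, or cubic-and-higher in the nuisance errors.

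\textbf{Linear terms.} We compute their conditional means using \(\E[\bs\bs^\top]=\bI/d\) and the fourth-moment identity \(\E[(\bs^\top\bA\bs)\bs\bs^\top]=\{\tr(\bA)\bI+2\bA\}/\{d(d+2)\}\) for symmetric \(\bA\). Taking \(\bA=\bSigma^{1/2}\bH\bSigma^{1/2}\) gives
\[
 -d\,\E\bigl[(\bs^\top\bA\bs)\bSigma^{1/2}\bs\bs^\top\bSigma^{1/2}\bigr]=-\frac{2\bSigma\bH\bSigma+\tr(\bSigma\bH)\bSigma}{d+2},
\]
which is \eqref{supp:eq:raw-tyler-derivative}. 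For the location, every linear term (from \(\bv_i\bv_i^\top\) and from \(e^{(1)}\)) is odd in \(\bs_i\). Since \(\bs_i\) is independent of \(\xi_i\), these terms have mean zero, so the population location derivative vanishes.

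Averaging over \(i\in I_\ell\) and weighting by \(N_\ell/n\) turns \(\bDelta_{P,-\ell}\) into \(\bar{\bDelta}_P\). What remains of the linear terms is their conditional fluctuation, for example
\[
 N_\ell^{-1}\sum_{i\in I_\ell}\bigl\{(\bs_i^\top\bA\bs_i)d\,\bSigma^{1/2}\bs_i\bs_i^\top\bSigma^{1/2}-\E[\cdot]\bigr\}.
\]
We bound these in \(\|\cdot\|_\diamond\) by a conditional second moment. The \(\diamond\)-norm is Hilbertian, so the variance of the average is \(N_\ell^{-1}\) times the second moment of one term. Using \(\|d\bv\bw^\top\|_\diamond\le\sqrt2\|\bv\|_2\|\bw\|_2\), the moment bounds \(\E\mathcal A^q\le(2q-1)!!\), and \(\E\omega^q\le C_q\), the second moment of one term is \(O(\|\bDelta_{P,-\ell}\|_{\op}^2+\zeta_1^2\|\bdelta\|_2^2)\). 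For the location part this uses \(\|\bSigma^{-1/2}\bdelta\|_2\le c_\varepsilon^{-1/2}\|\bdelta\|_2\) and \(\xi^{-1}\|\bdelta\|_2=\omega\,\zeta_1\|\bdelta\|_2\). The fluctuation is therefore \(O_{\Pp}(a_n n^{-1/2})\le O_{\Pp}(\rho_n)\).

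\textbf{Quadratic and higher terms.} Each such piece carries two or more factors of size \(\mathcal A_i\|\bDelta_{P,-\ell}\|_{\op}\) or \((1+\mathcal A_i^{1/2})\omega_i\zeta_1\|\bdelta\|_2\). We bound them by their conditional first absolute moment in \(\|\cdot\|_\diamond\), using the fixed-\(q\) moment bounds, which gives \(O_{\Pp}(a_n^2)\). The term \(e^2/(1+e)\) is controlled using the denominator lower bound \(1/2\), so no moment of the inverse denominator is needed. The trace is handled automatically because \(\|\cdot\|_\diamond\) includes \(|\tr|\). Summing over the \(K\) folds gives \(\|\bR_{\mathrm{cf},n}\|_\diamond=O_{\Pp}(\rho_n)\).

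\textbf{Overall rate.} The oracle average \(n^{-1}\sum_i\bPhi(\bx_i)\) is mean zero, and \(\E\|\bPhi\|_\diamond^2\le C\,\E\mathcal A^2\le C\), so it is \(O_{\Pp}(n^{-1/2})\). The derivative term has \(\diamond\)-norm \(O(\|\bar{\bDelta}_P\|_{\op})=O_{\Pp}(a_n)\), because \(\|\bSigma\|_{\mathrm F}\le d\) and \(|\tr(\bSigma\bH)|\le d\|\bH\|_{\op}\). Together these give \(\|\bM_n-\bSigma\|_\diamond=O_{\Pp}(\epsilon_n)\).

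\textbf{Main obstacle.} The delicate part is the location terms. The factor \(\xi_i^{-1}\) can be large for observations near the center, and quadratic terms such as \(d\,\bdelta\bdelta^\top/\xi_i^2\) have \(\diamond\)-norm of order \(\omega_i^2\zeta_1^2\|\bdelta\|_2^2\). Assumption~\ref{ass:inverse-radial-moments} enters here through \(\E\omega^q\le C_q\). I would first try to bound all such terms by conditional expectations directly. Only if a remainder term requires a higher inverse moment than Assumption~\ref{ass:inverse-radial-moments} supplies would I truncate on the high-probability event from Lemma~\ref{supp:lem:angular-denominators}.
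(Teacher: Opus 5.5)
Your proposal is correct and follows essentially the same route as the paper. Both arguments factor out \(\xi_i^2\) and expand \((1+e)^{-1}=1-e+e^2/(1+e)\) on the high-probability denominator event. Both compute the conditional means of the linear terms via the spherical fourth-moment identity, with oddness killing the location term, and control their fluctuations through the Hilbertian \(\|\cdot\|_\diamond\) second moment. Both bound the higher-order remainder by a polynomial majorant with fixed angular and inverse-radial moments.

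The only thing the paper adds is a short check that these conditional means are genuine population derivatives, which you tacitly assume when you identify the derivative with the mean of the linear term. For precision it differentiates under the expectation. For location it truncates to \(\|\by\|_2>2\|\bh\|_2\), and on the complementary ball it uses the bounded Tyler score and \(\E\|\by\|_2^{-2}<\infty\).
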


\begin{lemma}[Simple-eigenpair perturbation]
\label{supp:lem:spectral-perturbation}
Let \(\bE=\bE^\top\) and \(\|\bE\|_{\op}\le g_j/4\). Denote by
\((\widetilde\lambda_j,\widetilde{\bu}_j)\) the corresponding eigenpair
of \(\bSigma+\bE\), with
\(\bu_j^\top\widetilde{\bu}_j\ge0\). Then
\begin{align}
 |\widetilde\lambda_j-\lambda_j-\bu_j^\top\bE\bu_j|
 &\le C\|\bE\|_{\op}^2/g_j,\label{supp:eq:eigenvalue-perturbation}\\
 \|\widetilde{\bu}_j-\bu_j-\bG_j\bE\bu_j\|_2
 &\le C\|\bE\|_{\op}^2/g_j^2.
 \label{supp:eq:eigenvector-perturbation}
\end{align}
The constants are independent of the dimension.
\end{lemma}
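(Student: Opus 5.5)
The plan is the standard first-order (Rayleigh–Schrödinger type) argument, written out explicitly so that the constants are dimension-free. Everything follows from the eigen-equation for \(\widetilde{\bu}_j\) projected onto the unperturbed eigenbasis \(\bu_1,\ldots,\bu_d\).

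\textbf{Step 1 (localization).} By Weyl's inequality, \(|\widetilde\lambda_k-\lambda_k|\le\|\bE\|_{\op}\le g_j/4\) for every \(k\). Hence \(\widetilde\lambda_j\) is a simple eigenvalue of \(\bSigma+\bE\), so \(\widetilde{\bu}_j\) is well defined up to sign, which we fix by \(\bu_j^\top\widetilde{\bu}_j\ge0\). Moreover, for all \(k\ne j\),
\[
 |\widetilde\lambda_j-\lambda_k|\ge|\lambda_j-\lambda_k|-|\widetilde\lambda_j-\lambda_j|\ge g_j-g_j/4=\tfrac34 g_j .
\]

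\textbf{Step 2 (decomposition and a first-order bound on the orthogonal part).} Write \(\widetilde{\bu}_j=\alpha\bu_j+\bw\) with \(\alpha=\bu_j^\top\widetilde{\bu}_j\ge0\) and \(\bw\perp\bu_j\). Taking the inner product of \((\bSigma+\bE)\widetilde{\bu}_j=\widetilde\lambda_j\widetilde{\bu}_j\) with \(\bu_k\), \(k\ne j\), gives \((\widetilde\lambda_j-\lambda_k)\bu_k^\top\widetilde{\bu}_j=\bu_k^\top\bE\widetilde{\bu}_j\). Therefore
\[
 \bw=\widetilde{\bG}_j\bE\widetilde{\bu}_j,
 \qquad
 \widetilde{\bG}_j=\sum_{k\ne j}\frac{\bu_k\bu_k^\top}{\widetilde\lambda_j-\lambda_k}.
\]
By Step 1, \(\|\widetilde{\bG}_j\|_{\op}\le 4/(3g_j)\), so \(\|\bw\|_2\le 4\|\bE\|_{\op}/(3g_j)\le1/3\). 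Consequently
\[
 \alpha=\sqrt{1-\|\bw\|_2^2}\ge\sqrt{8/9},
 \qquad
 0\le1-\alpha\le\|\bw\|_2^2\le C\|\bE\|_{\op}^2/g_j^2 ,
\]
and \(\|\widetilde{\bu}_j-\bu_j\|_2\le C\|\bE\|_{\op}/g_j\).

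\textbf{Step 3 (eigenvalue).} Projecting the eigen-equation onto \(\bu_j\) gives \(\alpha\lambda_j+\bu_j^\top\bE\widetilde{\bu}_j=\alpha\widetilde\lambda_j\). Substituting \(\widetilde{\bu}_j=\alpha\bu_j+\bw\) yields
\[
 \widetilde\lambda_j-\lambda_j-\bu_j^\top\bE\bu_j=\alpha^{-1}\bu_j^\top\bE\bw .
\]
Since \(\alpha^{-1}\le3/\sqrt8\) and \(\|\bw\|_2\le C\|\bE\|_{\op}/g_j\), the right-hand side is at most \(C\|\bE\|_{\op}^2/g_j\). This proves \eqref{supp:eq:eigenvalue-perturbation}.

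\textbf{Step 4 (eigenvector).} Decompose
\[
 \widetilde{\bu}_j-\bu_j-\bG_j\bE\bu_j
 =-(1-\alpha)\bu_j
 +(\widetilde{\bG}_j-\bG_j)\bE\widetilde{\bu}_j
 +\bG_j\bE(\widetilde{\bu}_j-\bu_j).
\]
We bound the three terms separately.
\begin{itemize}
 \item The first term is \(O(\|\bE\|_{\op}^2/g_j^2)\) by Step 2.
 \item For the second term, the two resolvents are diagonal in the same basis, so
 \[
 \widetilde{\bG}_j-\bG_j=\sum_{k\ne j}\frac{(\lambda_j-\widetilde\lambda_j)\,\bu_k\bu_k^\top}{(\widetilde\lambda_j-\lambda_k)(\lambda_j-\lambda_k)} ,
 \]
 whose operator norm is at most \(\tfrac43|\widetilde\lambda_j-\lambda_j|/g_j^2\le C\|\bE\|_{\op}/g_j^2\). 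Multiplying by \(\|\bE\widetilde{\bu}_j\|_2\le\|\bE\|_{\op}\) gives \(C\|\bE\|_{\op}^2/g_j^2\).
 \item For the third term, \(\|\bG_j\|_{\op}\le 1/g_j\) and \(\|\widetilde{\bu}_j-\bu_j\|_2\le C\|\bE\|_{\op}/g_j\) give the same order.
\end{itemize}
This proves \eqref{supp:eq:eigenvector-perturbation}.

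All constants arise from the factors \(3/4\) and \(1/3\) produced by the assumption \(\|\bE\|_{\op}\le g_j/4\), so they do not depend on \(d\). The only delicate point is Step 1–2: one must obtain the uniform gap \(\tfrac34 g_j\) for \(\widetilde{\bG}_j\) and the lower bound \(\alpha\ge\sqrt{8/9}\) before dividing by \(\alpha\). Working with \(\widetilde{\bG}_j\), built from \(\widetilde\lambda_j\) rather than \(\lambda_j\), avoids needing a separate Davis–Kahan argument.
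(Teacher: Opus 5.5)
Your proof is correct and follows essentially the same route as the paper: the decomposition $\widetilde{\bu}_j=\alpha\bu_j+\bh$ with $\bh\perp\bu_j$, the eigen-equation projected onto $\bu_k$ for $k\ne j$, the identity $\widetilde\lambda_j-\lambda_j-\bu_j^\top\bE\bu_j=\alpha^{-1}\bu_j^\top\bE\bh$, and the bound $1-\alpha=\|\bh\|_2^2/(1+\alpha)$. The only difference is minor. You use the shifted resolvent $\widetilde{\bG}_j$, which needs the gap $\tfrac34 g_j$ and an extra resolvent-difference term in Step 4. The paper keeps $\bG_j$, moves $(\widetilde\lambda_j-\lambda_j)\bh$ to the right-hand side, and absorbs it using $\|\bG_j\{\bE-(\widetilde\lambda_j-\lambda_j)\bI_d\}\|_{\op}\le 1/2$.
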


For fixed \(j\le m\), write
\[
 \bq_j(\bx)
 =\frac{d+2}{d}\bG_j\bPsi(\bx)\bu_j
 =\frac{d+2}{d}\sum_{k\ne j}\eta_{jk}(\bx)\bu_k,
 \qquad
 \bPi_{\mathrm{bulk}}=\sum_{k>m}\bu_k\bu_k^\top.
\]
\begin{lemma}[Spectral moments and bulk concentration]
\label{supp:lem:moments}
The eigenvalue score \(\varphi_{\lambda,j}\) in Theorem~\ref{thm:one-sample-eigenvalue}
and the eigendirection scores in \eqref{eq:eigenvector-influence-score} have mean zero and satisfy
\[
 \Var\left\{\frac{d+2}{d}\varphi_{\lambda,j}(\bx)\right\}
 =\sigma_{\lambda,j}^2,\qquad
 \Cov\{\bu_k^\top\bq_j(\bx),
       \bu_l^\top\bq_j(\bx)\}
 =\sigma_{u,jk}^2\ind(k=l)\quad(k,l\ne j).
\]
Here the variances are those in
\eqref{eq:eigenvalue-clt-main} and \eqref{eq:direction-population-weight}.
The eigenvalue variance has a positive limit, and
\[
 \E|\varphi_{\lambda,j}(\bx)|^4\le C,\qquad
 \E\|\bq_j(\bx)\|_2^4\le C.
\]
Furthermore,
\begin{equation}\label{supp:eq:bulk-concentration}
 \left\|\frac1{\sqrt n}\sum_{i=1}^n
       \bPi_{\mathrm{bulk}}\bq_j(\bx_i)\right\|_2^2-b_{u,j}
 =O_{\Pp}(d^{-1/2}+n^{-1/2}).
\end{equation}
\end{lemma}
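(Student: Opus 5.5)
Everything reduces to moments of the uniform vector \(\bs\). I would work in the eigenbasis: write \(w_k=\bu_k^\top\bs\). Then \(\bu_k^\top\bSigma^{1/2}\bs=\sqrt{\lambda_k}\,w_k\), and
\[
 \bs^\top\bSigma\bs=\sum_k\lambda_kw_k^2.
\]
The standard moments of a uniform vector on \(\mathbb S^{d-1}\) are
\[
 \E w_k^2=d^{-1},\qquad \E w_k^4=\frac{3}{d(d+2)},\qquad \E w_k^2w_l^2=\frac{1}{d(d+2)}\ (k\ne l),
\]
and every odd mixed moment vanishes by sign symmetry.

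\textbf{Means and variances.} For the eigenvalue score, \(\varphi_{\lambda,j}=dw_j^2-\sum_k\lambda_kw_k^2\). Its mean is \(1-\tr(\bSigma)/d=0\) by the trace normalization. For the variance I would compute \(\E\varphi_{\lambda,j}^2\) by expanding into the three terms \(d^2\E w_j^4\), \(-2d\sum_k\lambda_k\E w_j^2w_k^2\) and \(\sum_{k,l}\lambda_k\lambda_l\E w_k^2w_l^2\). Using \(\sum_k\lambda_k=d\), this gives
\[
 \frac{2}{d(d+2)}\Bigl\{d^2-2d\lambda_j+\sum_k\lambda_k^2\Bigr\}.
\]
Multiplying by \((d+2)^2/d^2\) yields \(\sigma_{\lambda,j}^2\). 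For the direction scores, \(\bu_k^\top\bq_j=\frac{d+2}{d}\eta_{jk}\), which is proportional to \(w_jw_k\). This has mean zero for \(k\ne j\). The covariance for \(k\ne l\) involves \(\E w_j^2w_kw_l=0\), and the variance is \(\frac{(d+2)^2}{d^2}\cdot\frac{d^2\lambda_j\lambda_k}{(\lambda_j-\lambda_k)^2}\cdot\frac1{d(d+2)}=\sigma_{u,jk}^2\).

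\textbf{Positive limit of \(\sigma_{\lambda,j}^2\).} Since \(\lambda_k\le C\) for \(k>m\), we have \(\sum_{k>m}\lambda_k^2=O(d)\), while \(\lambda_k/d\to\kappa_k\) for \(k\le m\). So \(\sigma_{\lambda,j}^2\to 2\{1-2\kappa_j+\sum_{k\le m}\kappa_k^2\}\). Pervasiveness together with the trace constraint forces \(\sum_{k\le m}\kappa_k\le1\). Writing the limit as \(2\{(1-\kappa_j)^2+\sum_{k\le m,k\ne j}\kappa_k^2\}\), it is positive whenever \(m\ge2\) or \(\kappa_j<1\). The remaining case \(m=1\), \(\kappa_1=1\) is impossible, because \(\tr\bSigma_\varepsilon\ge c_\varepsilon d\) forces \(\kappa_1\le 1-c_\varepsilon\). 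This case check is where care is needed.

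\textbf{Fourth moments.} Represent \(\bs=\bz/\|\bz\|_2\) with Gaussian \(\bz\). Then \(dw_j^2\) is bounded by \(z_j^2\) times \(d/\|\bz\|^2\), and the latter has bounded moments of every fixed order by Lemma~\ref{supp:lem:angular-denominators}-type arguments (or directly from chi-square moments), as does \(\bs^\top\bSigma\bs\). This gives \(\E\varphi_{\lambda,j}^4\le C\). For the direction score, \(\|\bq_j\|_2^2\le C\,d^2\lambda_j w_j^2\sum_{k\ne j}\lambda_kw_k^2/g_j^2\), using \(|\lambda_j-\lambda_k|\ge g_j\asymp d\) and \(\lambda_j\asymp d\). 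This is at most \(C(dw_j^2)\,\bs^\top\bSigma\bs\), which has bounded second moment by Cauchy--Schwarz. Hence \(\E\|\bq_j\|_2^4\le C\).

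\textbf{Bulk concentration.} Let \(\bv_i=\bPi_{\mathrm{bulk}}\bq_j(\bx_i)\). Expand
\[
 \Bigl\|n^{-1/2}\sum_i\bv_i\Bigr\|_2^2=\frac1n\sum_i\|\bv_i\|_2^2+\frac1n\sum_{i\ne i'}\bv_i^\top\bv_{i'}.
\]
The diagonal term has mean \(\sum_{k>m}\sigma_{u,jk}^2=b_{u,j}\). Its variance is bounded by \(n^{-1}\E\|\bv\|^4\le C/n\), so it deviates from \(b_{u,j}\) by \(O_{\Pp}(n^{-1/2})\). The off-diagonal term is a degenerate U-statistic, so its second moment is at most \(2\,\E(\bv_1^\top\bv_2)^2=2\tr(\Cov(\bv)^2)=2\sum_{k>m}\sigma_{u,jk}^4\). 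Each \(\sigma_{u,jk}^2\) with \(k>m\) is \(O(1/d)\), and there are at most \(d\) of them, so this sum is \(O(1/d)\) and the term is \(O_{\Pp}(d^{-1/2})\). The main obstacle is the uniform fourth-moment bound on \(\|\bv\|_2\), which is needed for the diagonal term. It follows from \(\|\bv\|^2\le C\,dw_j^2\cdot\sum_{k>m}\lambda_kw_k^2\) together with the moment bounds above.
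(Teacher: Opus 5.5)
Your proposal is correct and follows essentially the same route as the paper. Both arguments use the spherical fourth-moment identities for the means, variances and vanishing cross-covariances. Both obtain positivity from $(1-\kappa_j)^2>0$, using the bulk trace bound $\kappa_j<1$. Both prove bulk concentration through the diagonal/off-diagonal variance decomposition with $\tr(\bC_{\mathrm{bulk}}^2)=\sum_{k>m}\sigma_{u,jk}^4=O(d^{-1})$. Your single inequality $\|\bq_j\|_2^2\le C(dw_j^2)\,\bs^\top\bSigma\bs$ is only a slightly tidier way to get the fourth-moment bound than the paper's split into leading and bulk coefficients.
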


We now prove the results in their main-text order.

\begin{proof}[Proof of Theorem~\ref{thm:common-expansion}]
First consider trace normalization. Set
\[
 \bH_n=\bM_n-\bSigma,\qquad
 \upsilon_n=d^{-1}\tr \bH_n,\qquad \bD_n=\widetilde{\bSigma}-\bSigma,
\]
and define the linear map
\(\mathcal L(\bH)=\bH-d^{-1}\tr(\bH)\bSigma\).
Since \(\|\bSigma\|_{\mathrm F}\le d\),
\(\|\mathcal L(\bH)\|_\diamond\le C\|\bH\|_\diamond\).
Lemma~\ref{supp:lem:raw-linear} gives
\(\|\bH_n\|_\diamond+|\upsilon_n|=O_{\Pp}(\epsilon_n)\).
The exact normalization identity is
\[
 \bD_n=\frac{\mathcal L(\bH_n)}{1+\upsilon_n}.
\]
In particular, \(\tr \bD_n=0\) and
\(d^{-1}\|\bD_n\|_{\mathrm F}=O_{\Pp}(\epsilon_n)\).
Applying \(\mathcal L\) to \eqref{supp:eq:raw-matrix-expansion} yields
\begin{equation}\label{supp:eq:normalized-matrix-expansion}
 \bD_n=\frac1n\sum_{i=1}^n\bPsi(\bx_i)-\frac{2}{d+2}\bM_0
     +\bR_{N,n},\qquad
 \|\bR_{N,n}\|_\diamond=O_{\Pp}(\rho_n),
\end{equation}
where
\[
 \bM_0=\bSigma\bar{\bDelta}_P\bSigma
   -\frac{\tr(\bSigma^2\bar{\bDelta}_P)}d\bSigma.
\]
Indeed, \(\mathcal L(\bPhi)=\bPsi\), the scalar multiple of
\(\bSigma\) in the raw derivative disappears, and replacing
\((1+\upsilon_n)^{-1}\) by one contributes
\(O_{\Pp}(\epsilon_n^2)\) in \(\|\cdot\|_\diamond\).
This also proves the linear-pilot comparison in
Remark~\ref{rem:neyman-orthogonality}.

Next consider the calibrated correction. The exact scale relation is
\[
 \widehat{c}_s=c_d(1+\upsilon_n),\qquad
 \widetilde{\bOmega}
 =(1+\upsilon_n)(\bOmega+\bar{\bDelta}_P).
\]
For \(\mathcal F(\bA,\bP)=\bA\bP\bA-d^{-1}\tr(\bA^2\bP)\bA\), put
\[
 \bQ_n=\bD_n\bOmega \bD_n+\bSigma\bar{\bDelta}_P\bD_n
       +\bD_n\bar{\bDelta}_P\bSigma
       +\bD_n\bar{\bDelta}_P\bD_n.
\]
Expanding the products and using \(\tr \bD_n=0\) gives
\[
 \mathcal F(\widetilde{\bSigma},\bOmega+\bar{\bDelta}_P)
 =\bD_n+\bM_0+\mathcal R_n,\qquad
 \mathcal R_n=\bQ_n-\frac{\tr \bQ_n}{d}\widetilde{\bSigma}
             -\frac{\tr(\bSigma^2\bar{\bDelta}_P)}d\bD_n.
\]
The bounds
\(\|\bD_n\|_{\mathrm F}=O_{\Pp}(d\epsilon_n)\),
\(\|\bar{\bDelta}_P\|_{\op}=O_{\Pp}(a_n)\), and
\(\|\bOmega\|_{\op}\le C\) imply
\begin{align*}
 \|\bQ_n\|_{\mathrm F}+|\tr \bQ_n|
 &\le C\|\bD_n\|_{\mathrm F}^2
   +C\|\bar{\bDelta}_P\|_{\op}
      \{d\|\bD_n\|_{\mathrm F}+\|\bD_n\|_{\mathrm F}^2\}
   =O_{\Pp}(d^2\rho_n),\\
 |\tr(\bSigma^2\bar{\bDelta}_P)|
 &\le\|\bar{\bDelta}_P\|_{\op}\tr(\bSigma^2)
   =O_{\Pp}(d^2a_n).
\end{align*}
Consequently,
\[
 \|\bM_0\|_{\mathrm F}=O_{\Pp}(d^2a_n),\qquad
 \|\mathcal R_n\|_{\mathrm F}=O_{\Pp}(d^2\rho_n).
\]
Linearity of \(\mathcal F\) in its second argument now gives
\[
 \mathcal F(\widetilde{\bSigma},\widetilde{\bOmega})
 =\bD_n+\bM_0+\widetilde{\mathcal R}_n,\qquad
 \widetilde{\mathcal R}_n
 =\mathcal R_n+\upsilon_n(\bD_n+\bM_0+\mathcal R_n),
 \qquad
 d^{-2}\|\widetilde{\mathcal R}_n\|_{\mathrm F}=O_{\Pp}(\rho_n).
\]
It follows that
\[
 \widehat{\bSigma}-\bSigma
 =\left(1+\frac2d\right)\bD_n+\frac2d\bM_0
  +\frac2d\widetilde{\mathcal R}_n.
\]
Substitution of \eqref{supp:eq:normalized-matrix-expansion} cancels
the entire linear precision term because
\[
 -\left(1+\frac2d\right)\frac2{d+2}+\frac2d=0.
\]
Thus \eqref{eq:common-matrix-expansion} holds with
\begin{equation}\label{supp:eq:common-frobenius-remainder}
 \bR_n=\left(1+\frac2d\right)\bR_{N,n}
       +\frac2d\widetilde{\mathcal R}_n,\qquad
 d^{-1}\|\bR_n\|_{\mathrm F}=O_{\Pp}(\rho_n).
\end{equation}
This joint matrix bound implies the required spectral bounds:
\[
 \frac{|\bu_j^\top\bR_n\bu_j|}{\lambda_j}
 +\|\bG_j\bR_n\bu_j\|_2
 \le(\lambda_j^{-1}+g_j^{-1})\|\bR_n\|_{\mathrm F}
 =O_{\Pp}(\rho_n).
\]
Finally,
\(\E\bPsi=0\) and
\(\E\|\bPsi\|_{\mathrm F}^2\le Cd^2\)
by Lemma~\ref{supp:lem:angular-denominators}.
Therefore the oracle average has Frobenius norm
\(O_{\Pp}(dn^{-1/2})\).
Together with \eqref{supp:eq:common-frobenius-remainder}, this proves
\eqref{eq:common-matrix-rate}, without using any spectral conclusion.
\end{proof}

\begin{proof}[Proof of Theorem~\ref{thm:one-sample-eigenvalue}]
Write \(\bE_n=\widehat{\bSigma}-\bSigma\).
Theorem~\ref{thm:common-expansion} gives
\(\|\bE_n\|_{\op}/g_j=O_{\Pp}(\epsilon_n)=o_{\Pp}(1)\).
Hence the event \(\|\bE_n\|_{\op}\le g_j/4\) has probability tending to
one. On this event, Lemma~\ref{supp:lem:spectral-perturbation} gives
\[
 \frac{\widehat\lambda_j-\lambda_j}{\lambda_j}
 =\frac{\bu_j^\top\bE_n\bu_j}{\lambda_j}
  +O_{\Pp}(\epsilon_n^2)
 =\frac{d+2}{d}\frac1n\sum_{i=1}^n
     \varphi_{\lambda,j}(\bx_i)+O_{\Pp}(\rho_n).
\]
This proves \eqref{eq:eigenvalue-expansion-main}.
By Lemma~\ref{supp:lem:moments}, the summands are centered, have
uniformly bounded fourth moments, and have variance
\(\{d/(d+2)\}^2\sigma_{\lambda,j}^2\), bounded away from zero.
The Lyapunov ratio is therefore \(O(n^{-1})\).
The triangular-array central limit theorem and
\(\sqrt n\,\rho_n\to0\) prove \eqref{eq:eigenvalue-clt-main}.
The high-probability restriction above does not change this limit.

For later use, the same expansion gives
\begin{equation}\label{supp:eq:log-eigenvalue-expansion}
 \log\widehat\lambda_j-\log\lambda_j
 =\frac{d+2}{d}\frac1n\sum_{i=1}^n
       \varphi_{\lambda,j}(\bx_i)+O_{\Pp}(\rho_n).
\end{equation}
Indeed, the relative eigenvalue error is \(O_{\Pp}(n^{-1/2})\), and
the quadratic logarithmic remainder is \(O_{\Pp}(n^{-1})\).
The event that the leading fitted eigenvalues are positive has probability
tending to one; arbitrary fixed extensions outside this event are immaterial.
\end{proof}

\begin{proof}[Proof of Theorem~\ref{thm:one-sample-eigenvector}]
Write \(\bE_n=\widehat{\bSigma}-\bSigma\)
and align the fitted eigenvector with \(\bu_j\).
On the same spectral event as above,
Lemma~\ref{supp:lem:spectral-perturbation} and
Theorem~\ref{thm:common-expansion} give
\begin{equation}\label{supp:eq:eigenvector-linear-representation}
 \widehat{\bu}_j-\bu_j
 =\bG_j\bE_n\bu_j+O_{\Pp}(\rho_n)
 =\frac1n\sum_{i=1}^n\bq_j(\bx_i)
   +\br_{n,j},\qquad
 \|\br_{n,j}\|_2=O_{\Pp}(\rho_n).
\end{equation}
Here and below, a vector \(O_{\Pp}\) bound is measured in the Euclidean norm.
This includes the component parallel to \(\bu_j\), so it proves the
vector expansion \eqref{eq:one-sample-eigenvector-expansion}.

For the finitely many \(k\le m\), \(k\ne j\), apply the
multivariate Lyapunov theorem to
\((\bu_k^\top\bq_j(\bx_i))_{k\le m,k\ne j}\).
Lemma~\ref{supp:lem:moments} gives bounded fourth moments and a diagonal
covariance matrix with entries \(\sigma_{u,jk}^2\).
The spike limits imply
\[
 \sigma_{u,jk}^2\longrightarrow
 \frac{\kappa_j\kappa_k}{(\kappa_j-\kappa_k)^2}
 =\sigma_{u,jk}^{*2}>0.
\]
Combining this central limit theorem with
\(\sqrt n\,\|\br_{n,j}\|_2=o_{\Pp}(1)\) gives the intermediate limit
\begin{equation}\label{supp:eq:leading-direction-clt}
 \left(\sqrt n\,\bu_k^\top(\widehat{\bu}_j-\bu_j)\right)_{
  \substack{1\le k\le m\\k\ne j}}
 \xrightarrow{\mathcal D}
 N\!\left(0,\Diag\{\sigma_{u,jk}^{*2}:k\le m,\ k\ne j\}\right).
\end{equation}

To derive the limit for the squared sine of the angle between
\(\widehat{\bu}_j\) and \(\bu_j\), let
\(\bP_j^\perp=\bI_d-\bu_j\bu_j^\top\) and
\(\bt_{n,j}=n^{-1/2}\sum_i\bq_j(\bx_i)\).
Lemma~\ref{supp:lem:moments} gives
\(\E\|\bt_{n,j}\|_2^2
=\sum_{k\ne j}\sigma_{u,jk}^2=O(1)\).
Both \(\bq_j\) and \(\bt_{n,j}\) are orthogonal
to \(\bu_j\). Since the fitted eigenvector has unit norm,
\[
 n\left[1-\left|\bu_j^\top\widehat{\bu}_j\right|^2\right]
 =\|\bt_{n,j}
       +\sqrt n\,\bP_j^\perp\br_{n,j}\|_2^2
 =\|\bt_{n,j}\|_2^2+o_{\Pp}(1).
\]
The last error is
\(O_{\Pp}(\sqrt n\,\rho_n+n\rho_n^2)=o_{\Pp}(1)\).
The leading and bulk projections are exactly orthogonal, so
\begin{equation}\label{supp:eq:direction-leading-bulk}
 \begin{aligned}
 n\left[1-\left|\bu_j^\top\widehat{\bu}_j\right|^2\right]-b_{u,j}
 ={}&\sum_{\substack{k\le m\\k\ne j}}
        (\bu_k^\top\bt_{n,j})^2\\
 &+O_{\Pp}(\sqrt n\,\rho_n+d^{-1/2}+n^{-1/2}).
 \end{aligned}
\end{equation}
Here the bulk bound is \eqref{supp:eq:bulk-concentration}.
The joint central limit theorem in
\eqref{supp:eq:leading-direction-clt} gives the Gaussian limit
\((\bD_j^*)^{1/2}\bz_j\) for the leading error vector.
Its squared norm is \(\bz_j^\top\bD_j^*\bz_j
=\sum_{k\le m,\,k\ne j}\sigma_{u,jk}^{*2}Z_k^2\), proving
\eqref{eq:one-sample-full-eigenvector-limit}. If \(m=1\), the leading
sum is empty and the same argument yields convergence to zero.

Finally, for \(k>m\), \(\lambda_k/\lambda_j=O(d^{-1})\) uniformly.
Using \(|(1-x)^{-2}-1|\le C|x|\) for \(|x|\le1/2\), we obtain
\[
 \left|b_{u,j}
  -\frac{d+2}{d}\frac{d-\sum_{a=1}^m\lambda_a}{\lambda_j}\right|
 \le\frac{C}{\lambda_j^2}\sum_{k>m}\lambda_k^2
 =O(d^{-1}).
\]
This proves the trace approximation for \(b_{u,j}\) stated after
Theorem~\ref{thm:one-sample-eigenvector}.
Also, Theorem~\ref{thm:one-sample-eigenvalue} implies
\(\max_{a\le m}|\widehat\lambda_a/d-\lambda_a/d|
=O_{\Pp}(n^{-1/2})\).
The functions
\[
 (x,y)\longmapsto\frac{d+2}{d}\frac{xy}{(x-y)^2},
 \qquad
 (x_1,\ldots,x_m)\longmapsto
      \frac{d+2}{d}\frac{1-\sum_{a=1}^m x_a}{x_j}
\]
have bounded first derivatives near the distinct positive spike
limits. Thus their plug-in versions, defined as in
Section~\ref{subsec:two-sample-eigenvectors} without a group index, satisfy
\begin{equation}\label{supp:eq:direction-plugin-rates}
 \widehat\sigma_{u,jk}^2-\sigma_{u,jk}^2
 =O_{\Pp}(n^{-1/2})\quad(k\le m,\ k\ne j),\qquad
 \widehat b_{u,j}-b_{u,j}
 =O_{\Pp}(n^{-1/2}+d^{-1}).
\end{equation}
These conclusions also justify the one-sample centering and weight
estimates used in Section~\ref{subsec:two-sample-eigenvectors}.
\end{proof}

\begin{proof}[Proof of Proposition~\ref{prop:nuisance-construction}]
This verification uses Assumptions~\ref{ass:factor-structure} and
\ref{ass:inverse-radial-moments}, together with the additional
conditions stated in the proposition; it does not assume
Assumption~\ref{ass:nuisance-estimation}.
Since \(s_d\ge c_{\varepsilon}^v>0\) and
\[
 \delta(n)\asymp\sqrt{\frac{\log(dn)}n}+d^{-1/2},
\]
the growth conditions in Proposition~\ref{prop:nuisance-construction} give
\[
 n^{1/2}s_d^2\{\delta(n)\}^{2(1-v)}
 \asymp
 \frac{s_d^2\{\log(dn)\}^{1-v}}{n^{1/2-v}}
 +\frac{s_d^2n^{1/2}}{d^{1-v}}
 \longrightarrow0.
\]
Moreover, \(d^{-1}=O\{s_d d^{-(1-v)/2}\}\), so
\(a_n=o(n^{-1/4})\).
These conditions also imply \(\log d=o(n)\), \(\log n=o(d)\),
and \(s_d\{\delta(n)\}^{1-v}=o(1)\), as required for the POET--SS bounds below.

We first verify the location rate separately. This step only needs
Assumption~\ref{ass:factor-structure} and
\(\E(R^{-2})/\zeta_1^2\le C\), where \(R=\|\bx-\bmu\|_2\)
and \(0<\zeta_1=\E(R^{-1})<\infty\).
The latter bound follows from the \(q=2\) case of
Assumption~\ref{ass:inverse-radial-moments}, since
\[
 \frac{\E(R^{-2})}{\zeta_1^2}
 =\frac{\E(\xi^{-2})}{\{\E(\xi^{-1})\}^2}
   \frac{\E(\mathcal A^{-1})}{\{\E(\mathcal A^{-1/2})\}^2}
 \le c_{\varepsilon}^{-1}\frac{\E(\xi^{-2})}{\{\E(\xi^{-1})\}^2}
 \le C,
 \qquad \mathcal A=\bs^\top\bSigma\bs.
\]
For a training sample of size \(N\), suppress the fold index and put
\(R_i=\|\bx_i-\bmu\|_2\), \(\bt_i=\spSign(\bx_i-\bmu)\), and
\(\alpha_i=(\zeta_1R_i)^{-1}\). Define
\[
 \bA_N=\frac1N\sum_{i=1}^N\alpha_i(\bI_d-\bt_i\bt_i^\top),
 \qquad \bA=\E\bA_N.
\]
The population curvature satisfies \(\lambda_{\min}(\bA)\ge c>0\)
uniformly in \(d\). To verify this, for a unit vector \(\bv\), let
\(\mathcal B_{\bv}=\mathcal A-(\bv^\top\bSigma^{1/2}\bs)^2\).
Radial independence gives
\[
 \bv^\top\bA\bv
 =\frac{\E(\mathcal B_{\bv}\mathcal A^{-3/2})}
        {\E(\mathcal A^{-1/2})}.
\]
Here \(0\le\mathcal B_{\bv}\le\mathcal A\),
\(\E\mathcal B_{\bv}=1-\bv^\top\bSigma\bv/d\ge c_{\varepsilon}(1-d^{-1})\),
and the spherical moment formula gives \(\E\mathcal A^2\le3\).
For a sufficiently large fixed \(M_0\) and \(d\ge2\),
\[
 \E\{\mathcal B_{\bv}\ind(\mathcal A\le M_0)\}
 \ge c_{\varepsilon}/2-3/M_0\ge c_{\varepsilon}/4.
\]
Together with \(\E(\mathcal A^{-1/2})\le c_{\varepsilon}^{-1/2}\), this proves
the curvature bound.
Since \(\E\alpha_i=1\), \(\E\alpha_i^2\le C\), and
\(\|\bt_i\bt_i^\top\|_{\mathrm F}=1\), independence yields
\[
 \|\bA_N-\bA\|_{\op}
 \le\left|\frac1N\sum_i\alpha_i-1\right|
 +\left\|\frac1N\sum_i
   \{\alpha_i\bt_i\bt_i^\top-\E(\alpha_i\bt_i\bt_i^\top)\}
   \right\|_{\mathrm F}
 =O_{\Pp}(N^{-1/2}).
\]
Also, \(\bar{\bt}=N^{-1}\sum_i\bt_i\) satisfies
\(\E\|\bar{\bt}\|_2^2=N^{-1}\) by symmetry.

For the standardized displacement \(\bh\), consider the convex objective
\[
 L_N(\bh)=\frac{\zeta_1}N\sum_{i=1}^N
 \{\|\bx_i-\bmu-\bh/\zeta_1\|_2-R_i\}.
\]
The Euclidean norm admits the global remainder bound
\[
 \left|L_N(\bh)+\bar{\bt}^{\top}\bh
       -\tfrac12\bh^\top\bA_N\bh\right|
 \le C\|\bh\|_2^3\frac1N\sum_i\alpha_i^2.
\]
For each summand, this follows from Taylor expansion when
\(\|\bh\|_2/\zeta_1\le R_i/2\); otherwise, the triangle inequality
bounds the constant, linear, and quadratic terms by the same cubic bound.
Thus no bound on \(\max_i R_i^{-1}\) is needed.
On the sphere \(\|\bh\|_2=M/\sqrt N\), the preceding estimates give
\[
 \inf_{\|\bh\|_2=M/\sqrt N}L_N(\bh)
 \ge\frac1N\{cM^2/4-MO_{\Pp}(1)-O_{\Pp}(M^3/\sqrt N)\}.
\]
For every \(\varepsilon>0\), choose \(M\) sufficiently large
and then \(N\) sufficiently large so that the right-hand side is positive
with probability at least \(1-\varepsilon\). Since \(L_N(\boldsymbol0)=0\),
convexity places the minimizer inside this sphere and proves
\(\zeta_1\|\widehat{\bmu}-\bmu\|_2=O_{\Pp}(N^{-1/2})\).
With \(N=N_{-\ell}\asymp n\) and fixed \(K\), this yields the
uniform location rate. This argument uses neither a restriction on the relative growth of \(n\) and
\(d\) nor a sparsity condition.

We next verify the precision-pilot rate. Here the stronger assumptions
in the proposition are used to control the spatial-sign matrix entrywise,
including the effect of estimating its center.
Let \(\bGamma_m\) contain the leading \(m\) orthonormal eigenvectors
of the scaled population spatial-sign covariance matrix \(\bS\), and let \(\bLambda_m\) contain
the corresponding eigenvalues on its diagonal. The POET--SS population
target is
\[
 \bH=\bGamma_m\bLambda_m\bGamma_m^\top+c_d\bSigma_{\varepsilon}.
\]
Proposition~2.1 and Theorems~2.1--2.2 of
\citet{supp:XuMaWangFeng2025}, applied with a training sample size of
\(N_{-\ell}\asymp n\) and hence \(\delta(N_{-\ell})\asymp\delta(n)\), give
\begin{align*}
 \|\widehat{\bH}_{\varepsilon,-\ell}-c_d\bSigma_{\varepsilon}\|_{\op}
 &=O_{\Pp}(s_d\{\delta(n)\}^{1-v}),\\
 \|(\widehat{\bH}_{-\ell})^{-1}
       -\bH^{-1}\|_{\op}
 &=O_{\Pp}(s_d\{\delta(n)\}^{1-v}),\qquad
 \|c_d\bH^{-1}-\bOmega\|_{\op}=O(d^{-1}),
\end{align*}
where the inverse bound holds on an event of probability tending to
one. The population scalar
\(c_d=d\E(Q^{-1})\) depends only on \(\bSigma\) and hence is common
to all folds. Since \(\E Q=d\) and
\(Q\ge c_{\varepsilon}\|\bz\|_2^2\), Jensen's inequality and
\(\E(\chi_d^2)^{-1}=1/(d-2)\) give
\[
 1\le c_d\le\frac{d}{c_{\varepsilon}(d-2)}\le C_c
\]
for all sufficiently large \(d\).

The retained leading component of
\(\widehat{\bH}_{-\ell}\) is positive
semidefinite. Consequently, the minimum eigenvalue of
\(\widehat{\bH}_{-\ell}\) is at least
\[
 \lambda_{\min}(\widehat{\bH}_{\varepsilon,-\ell})
 \ge c_dc_{\varepsilon}-
   \|\widehat{\bH}_{\varepsilon,-\ell}-c_d\bSigma_{\varepsilon}\|_{\op}
 \ge c_{\varepsilon}/2
\]
with probability tending to one. As \(d^{-1}<c_{\varepsilon}/2\) eventually,
the diagonal adjustment is zero on this event. The triangle
inequality therefore gives
\[
 \|c_d\widehat{\bP}_{-\ell}-\bOmega\|_{\op}
 \le c_d\|\widehat{\bP}_{-\ell}-\bH^{-1}\|_{\op}
       +\|c_d\bH^{-1}-\bOmega\|_{\op}
 =O_{\Pp}(s_d\{\delta(n)\}^{1-v}+d^{-1}).
\]
Since \(K\) is fixed, all these statements hold uniformly over the folds.
Positive definiteness and training measurability hold by construction.
Moreover, \(s_d\ge c_{\varepsilon}^v\) and \(\delta(n)\to0\) imply
\(a_n\ge c_{\varepsilon}^v\delta(n)\ge c_{\varepsilon}^v n^{-1/2}\) eventually, so the
standardized spatial-median error is also \(O_{\Pp}(a_n)\).
Together with \(a_n=o(n^{-1/4})\), these bounds establish
Assumption~\ref{ass:nuisance-estimation}.
\end{proof}

\subsection{Proofs for section~\ref*{subsec:two-sample-eigenvalues}}
\label{supp:subsec:section-4-1-proofs}

\begin{proof}[Proof of Theorem~\ref{thm:two-sample-eigenvalue-test}]
First, we verify variance consistency. Let \(\widehat{\bSigma}^{(r)}\)
denote the orthogonalized shape estimator for group \(r\), put
\(\bE^{(r)}=\widehat{\bSigma}^{(r)}-\bSigma^{(r)}\), and write
\(\epsilon^{(r)}\) for the group-specific version of \(\epsilon_n\).
Theorem~\ref{thm:common-expansion} gives
\(d^{-1}\|\bE^{(r)}\|_{\mathrm F}=O_{\Pp}(\epsilon^{(r)})\), where
\(\epsilon^{(r)}\to0\). Since \(\|\bSigma^{(r)}\|_{\mathrm F}\le d\),
\begin{equation}\label{supp:eq:eigenvalue-square-sum-consistency}
 \begin{aligned}
 \frac1{d^2}\left|\sum_{k=1}^d
  \{(\widehat\lambda_k^{(r)})^2-(\lambda_k^{(r)})^{2}\}\right|
 &=\frac1{d^2}\left|2\tr(\bSigma^{(r)}\bE^{(r)})+\tr((\bE^{(r)})^{2})\right|\\
 &\le\frac{2\|\bSigma^{(r)}\|_{\mathrm F}\|\bE^{(r)}\|_{\mathrm F}
             +\|\bE^{(r)}\|_{\mathrm F}^2}{d^2}
 =O_{\Pp}(\epsilon^{(r)}).
 \end{aligned}
\end{equation}
This controls the full spectral sum without requiring consistency of
individual bulk eigenvalues. Also,
Theorem~\ref{thm:one-sample-eigenvalue} and \(\lambda_j^{(r)}\asymp d\)
give
\(d^{-1}|\widehat\lambda_j^{(r)}-\lambda_j^{(r)}|
=O_{\Pp}(n_r^{-1/2})\).
Substituting these bounds into
\eqref{eq:eigenvalue-variance-estimator} yields
\begin{equation}\label{supp:eq:eigenvalue-variance-consistency}
 (\widehat\sigma_{\lambda,j}^{(r)})^{2}-(\sigma_{\lambda,j}^{(r)})^{2}
 =O_{\Pp}(\epsilon^{(r)})=o_{\Pp}(1),\qquad r=1,2.
\end{equation}
By Lemma~\ref{supp:lem:moments}, applied separately with factor ranks
\(m_1\) and \(m_2\), there are constants \(0<c<C<\infty\) such that
\(c\le(\sigma_{\lambda,j}^{(r)})^{2}\le C\) for both groups and all
sufficiently large \(d\). This proves the claimed ratio consistency.
In particular, the fitted variances and the leading fitted eigenvalues
are positive with probability tending to one.

Next, consider the two-sample log contrast. Write
\(\varphi_{\lambda,j}^{(r)}\) for the group-specific version of the score in
Theorem~\ref{thm:one-sample-eigenvalue} and put \(\rho^{(r)}=(\epsilon^{(r)})^{2}\).
Equation~\eqref{supp:eq:log-eigenvalue-expansion} gives
\[
 \log\widehat\lambda_j^{(r)}-\log\lambda_j^{(r)}
 =\frac{\sigma_{\lambda,j}^{(r)}}{\sqrt{n_r}}\mathcal Z_n^{(r)}
   +O_{\Pp}(\rho^{(r)}),\qquad
 \mathcal Z_n^{(r)}
 =\frac{d+2}{d\sigma_{\lambda,j}^{(r)}\sqrt{n_r}}
   \sum_{i=1}^{n_r}\varphi_{\lambda,j}^{(r)}(\bx_i^{(r)}).
\]
Here \(\sqrt{n_r}\rho^{(r)}\to0\), and the logarithm is considered on
the event of a positive fitted eigenvalue. The one-sample central
limit theorem applies to each oracle sum. These sums depend only on
their respective independent samples, so
\((\mathcal Z_n^{(1)},\mathcal Z_n^{(2)})
\xrightarrow{\mathcal D}N(0,\bI_2)\).
Put
\[
 s_{j,n}^2=\frac{(\sigma_{\lambda,j}^{(1)})^{2}}{n_1}
           +\frac{(\sigma_{\lambda,j}^{(2)})^{2}}{n_2},\qquad
 \alpha_n^{(r)}=\frac{\sigma_{\lambda,j}^{(r)}}{\sqrt{n_r}\,s_{j,n}}.
\]
Under \(\mathbb{H}_{0,\lambda,j}\), on the event that both fitted eigenvalues
are positive,
\[
 \frac{\log\widehat\lambda_j^{(1)}
       -\log\widehat\lambda_j^{(2)}}{s_{j,n}}
 =\alpha_n^{(1)}\mathcal Z_n^{(1)}-\alpha_n^{(2)}\mathcal Z_n^{(2)}
   +o_{\Pp}(1).
\]
Indeed, \(s_{j,n}\ge c^{1/2}n_r^{-1/2}\) for each group, so
\((\rho^{(1)}+\rho^{(2)})/s_{j,n}
\le C\sum_{r=1}^2\sqrt{n_r}\rho^{(r)}\to0\).
The deterministic weights satisfy
\((\alpha_n^{(1)})^{2}+(\alpha_n^{(2)})^{2}=1\). Along any subsequence, one can
extract a further subsequence on which these weights converge to
\((\alpha^{(1)},\alpha^{(2)})\); the resulting limit is
\(\alpha^{(1)}Z^{(1)}-\alpha^{(2)}Z^{(2)}\sim N(0,1)\) for independent standard
normal \(Z^{(1)},Z^{(2)}\). Thus the population-standardized log contrast
converges to \(N(0,1)\) along the full sequence. This argument requires
no restriction on the relative growth of \(n_1\) and \(n_2\).

Finally, set
\(\widehat s_{j,n}^2=(\widehat\sigma_{\lambda,j}^{(1)})^{2}/n_1
                       +(\widehat\sigma_{\lambda,j}^{(2)})^{2}/n_2\).
The ratio consistency already proved implies
\[
 \left|\frac{\widehat s_{j,n}^2}{s_{j,n}^2}-1\right|
 \le\max_{r=1,2}
   \left|\frac{(\widehat\sigma_{\lambda,j}^{(r)})^{2}}
                  {(\sigma_{\lambda,j}^{(r)})^{2}}-1\right|
 =o_{\Pp}(1).
\]
Slutsky's theorem proves \eqref{eq:two-sample-eigenvalue-null}.
The convention \(Z_{\lambda,j}=0\) outside the positivity event does
not affect this limit because that event has probability tending to
one. Continuity of the standard-normal distribution then gives
\[
 \Pp_{\mathbb{H}_{0,\lambda,j}}\{|Z_{\lambda,j}|>z_{1-\alpha/2}\}
 \longrightarrow\alpha,\qquad
 p_{\lambda,j}\xrightarrow{\mathcal D}\operatorname{Unif}(0,1).
\]
All arguments use the group-specific one-sample results and sample
independence, without equating the radial distributions or the
eigendirections of the two populations.
\end{proof}

\subsection{Proofs for section~\ref*{subsec:two-sample-eigenvectors}}
\label{supp:subsec:section-4-2-proofs}
We use the following elementary bound to compare Gaussian quadratic
forms, including those with singular coefficient matrices. Its proof
is given in Section~\ref{supp:subsec:auxiliary-lemma-proofs}.

\begin{lemma}[Continuity of Gaussian quadratic-form laws]
\label{supp:lem:quadratic-cdf}
Let \(\bK,\bL\) be positive semidefinite \(q\times q\) matrices, where
\(q\ge1\) is fixed, and suppose \(\lambda_{\max}(\bK)\ge c>0\).
For \(\bg\sim N(0,\bI_q)\), write \(F_{\bK}(x)=\Pp(\bg^\top \bK\bg\le x)\),
and define \(F_{\bL}\) similarly. Then \(F_{\bK}\) is continuous, is strictly
increasing on \([0,\infty)\), and satisfies \(F_{\bK}(0)=0\). Moreover, for
\(t>0\),
\begin{equation}\label{supp:eq:quadratic-cdf-bounds}
 \sup_x\{F_{\bK}(x+t)-F_{\bK}(x)\}\le C\sqrt t,\qquad
 \sup_x|F_{\bK}(x)-F_{\bL}(x)|\le C\|\bK-\bL\|_{\op}^{1/3},
\end{equation}
where \(C\) depends only on \(c\) and \(q\).
\end{lemma}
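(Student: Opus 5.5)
Diagonalize the problem. Let \(\bK=\bO\Diag(\gamma_1,\ldots,\gamma_q)\bO^\top\) with \(\gamma_1\ge\cdots\ge\gamma_q\ge0\) and \(\gamma_1=\lambda_{\max}(\bK)\ge c\). By rotation invariance of \(\bg\),
\[
\bg^\top\bK\bg\stackrel{\mathcal D}{=}\sum_{k=1}^q\gamma_kZ_k^2 .
\]
For continuity and \(F_{\bK}(0)=0\), condition on \(Z_2,\ldots,Z_q\). The variable \(\gamma_1Z_1^2\) has the density of a scaled \(\chi^2_1\), so the sum has a density, and hence a continuous distribution function. It is nonnegative and equals zero with probability zero. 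For strict monotonicity on \([0,\infty)\), the conditional density of \(\gamma_1Z_1^2+y\) is positive on \((y,\infty)\), and \(y=\sum_{k\ge2}\gamma_kZ_k^2\) can be arbitrarily small with positive probability. Therefore every interval \((a,b)\subset[0,\infty)\) receives positive mass.

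\emph{Anti-concentration bound.} Again condition on \(Z_2,\ldots,Z_q\). For fixed \(y\ge0\),
\[
\Pp(x<\gamma_1Z_1^2+y\le x+t)\le\sup_s\Pp\{s<Z_1^2\le s+t/\gamma_1\}.
\]
Since \(Z_1^2\in(s,s+h]\) means \(|Z_1|\in(\sqrt s,\sqrt{s+h}]\), and \(\sqrt{s+h}-\sqrt s\le\sqrt h\), the last probability is at most \(2\phi(0)\sqrt h\). Taking \(h=t/\gamma_1\le t/c\) and integrating over the conditioning variables gives \(C\sqrt t\) with \(C=\sqrt{2/(\pi c)}\).

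\emph{Comparison bound.} Write \(\delta=\|\bK-\bL\|_{\op}\). On the same probability space,
\[
|\bg^\top\bK\bg-\bg^\top\bL\bg|\le\delta\|\bg\|_2^2 .
\]
For any \(M>0\), the event \(\|\bg\|_2^2\le M\) gives
\[
F_{\bK}(x-\delta M)-\Pp(\|\bg\|_2^2>M)\le F_{\bL}(x)\le F_{\bK}(x+\delta M)+\Pp(\|\bg\|_2^2>M).
\]
The anti-concentration bound then yields
\[
|F_{\bK}(x)-F_{\bL}(x)|\le C\sqrt{\delta M}+\Pp(\chi^2_q>M).
\]
Markov's inequality gives \(\Pp(\chi^2_q>M)\le q/M\). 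Choosing \(M=\delta^{-1/3}\) makes both terms of order \(\delta^{1/3}\), so the bound is \((C+q)\delta^{1/3}\). When \(\delta\ge1\), the claim is trivial because the left side is at most \(1\); adjust \(C\) accordingly.

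The only step needing care is the anti-concentration bound. It must hold uniformly in \(x\) and must not depend on the lower eigenvalues of \(\bK\), which may be zero (singular \(\bK\)). Conditioning on all coordinates except the top one handles this, and it uses only \(\lambda_{\max}(\bK)\ge c\). No assumption on \(\bL\) is needed, because the comparison bound transfers all the regularity from \(F_{\bK}\).
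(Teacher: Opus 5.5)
Your proof is correct and takes essentially the same route as the paper. You isolate $\lambda_{\max}(\bK)g_1^2$ and condition on the independent remainder to get the $C\sqrt t$ anti-concentration bound, using the bounded normal density on $|Z_1|$ where the paper uses the decreasing $\chi^2_1$ density (both give $\sqrt{2/(\pi c)}$), and then you couple both forms on the same $\bg$, balancing $C\sqrt{\delta M}$ against the Markov tail $q/M$, which is the paper's choice $t=\delta^{2/3}$; one small loose end is $\delta=0$, where $M=\delta^{-1/3}$ is undefined but the two laws coincide trivially, and the $\delta\ge1$ remark is unnecessary since $(C+q)\delta^{1/3}$ already holds for all $\delta>0$.
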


\begin{proof}[Proof of Theorem~\ref{thm:two-sample-eigendirection-test}]
Under \(\mathbb{H}_{0,u,j}\), choose the population signs so that
\(\bu_j^{(1)}=\bu_j^{(2)}=\bu_j\). For the proof, also align each fitted
target eigenvector with \(\bu_j\); the squared inner product is
unchanged. Write \(w^{(r)}=n_{12}/n_r\), and let \(\epsilon^{(r)},\rho^{(r)}\)
be the group-specific versions of \(\epsilon_n,\rho_n\).
Empty leading blocks when \(m_r=1\) are interpreted as having zero
columns. Let \(\mathcal G_n\) be the event that all fitted leading
eigenvalues in both groups are positive and simple.
Theorem~\ref{thm:common-expansion} and the population
eigengaps give \(\Pp(\mathcal G_n)\to1\).
All fitted reference matrices below are considered on this event.
For definiteness, on its complement set \(T_{u,j}=0\) and take
\(\widehat F_{u,j}\) to be the \(\chi_{m_u}^2\) distribution
function, consistently with the convention \(p_{u,j}=1\).
These extensions do not affect the asserted limits.

First, we reduce the angular discrepancy to a quadratic form in the
oracle errors. Define the group-specific scores and their sums by
\[
 \bq_j^{(r)}(\bx)
 =\frac{d+2}{d}\sum_{k\ne j}\eta_{jk}^{(r)}(\bx)\bu_k^{(r)},\qquad
 \bt_j^{(r)}
 =\frac1{\sqrt{n_r}}\sum_{i=1}^{n_r}
       \bq_j^{(r)}(\bx_i^{(r)}).
\]
Theorem~\ref{thm:one-sample-eigenvector} gives
\[
 \widehat{\bu}_j^{(r)}-\bu_j
 =n_r^{-1/2}\bt_j^{(r)}+\br_j^{(r)},\qquad
 \|\br_j^{(r)}\|_2=O_{\Pp}(\rho^{(r)}),\qquad
 \sqrt{n_r}\rho^{(r)}\to0.
\]
By Lemma~\ref{supp:lem:moments},
\(\E\|\bt_j^{(r)}\|_2^2=O(1)\), so the eigenvector
error in Euclidean norm is \(O_{\Pp}(n_r^{-1/2})\).
Put \(\bP_j^\perp=\bI_d-\bu_j\bu_j^\top\) and write
\(\widehat{\bu}_j^{(r)}=\alpha^{(r)}\bu_j+\bh^{(r)}\),
where \(\bh^{(r)}=\bP_j^\perp\widehat{\bu}_j^{(r)}\) and
\(\alpha^{(r)}\ge0\). Unit normalization implies
\(1-\alpha^{(r)}=O_{\Pp}(n_r^{-1})\).
For \(t=\langle\widehat{\bu}_j^{(1)},\widehat{\bu}_j^{(2)}\rangle\),
\[
 2(1-t)=\|\bh^{(1)}-\bh^{(2)}\|_2^2+(\alpha^{(1)}-\alpha^{(2)})^2,
 \qquad 1-t=O_{\Pp}(n_{12}^{-1}).
\]
Using \(1-t^2=2(1-t)-(1-t)^2\) and the vector expansion above gives
\begin{equation}\label{supp:eq:two-sample-angular-reduction}
 \begin{aligned}
 n_{12}(1-t^2)
 &=n_{12}\|\bh^{(1)}-\bh^{(2)}\|_2^2+O_{\Pp}(n_{12}^{-1})\\
 &=\|\sqrt{w^{(1)}}\bt_j^{(1)}
       -\sqrt{w^{(2)}}\bt_j^{(2)}\|_2^2
   +O_{\Pp}\left(n_{12}^{-1}
                 +\sum_{r=1}^2\sqrt{n_r}\rho^{(r)}\right).
 \end{aligned}
\end{equation}
Here \(\bt_j^{(r)}\perp\bu_j\); the squared remainder is
absorbed because \(\sum_r\sqrt{n_r}\rho^{(r)}=o(1)\).

Next, split each oracle sum into its leading and bulk projections.
Let \(\bPi_{\mathrm{bulk}}^{(r)}=\sum_{k>m_r}\bu_k^{(r)}(\bu_k^{(r)})^{\top}\)
and set
\[
 \by^{(r)}=\sqrt{w^{(r)}}(\bGamma_{-j}^{(r)})^{\top}\bt_j^{(r)},
 \qquad
 \bl^{(r)}=\bGamma_{-j}^{(r)}\by^{(r)},\qquad
 \bb^{(r)}=\sqrt{w^{(r)}}\bPi_{\mathrm{bulk}}^{(r)}\bt_j^{(r)}.
\]
Thus \(\sqrt{w^{(r)}}\bt_j^{(r)}=\bl^{(r)}+\bb^{(r)}\)
and \(\langle\bl^{(r)},\bb^{(r)}\rangle=0\) exactly.
Lemma~\ref{supp:lem:moments} gives
\begin{align}
 \|\bb^{(r)}\|_2^2-w^{(r)}b_{u,j}^{(r)}
 &=O_{\Pp}(d^{-1/2}+n_r^{-1/2}),
 \label{supp:eq:two-sample-bulk-norm}\\
 \Cov(\bb^{(r)})
 &=w^{(r)}\sum_{k>m_r}(\sigma_{u,jk}^{(r)})^{2}\bu_k^{(r)}(\bu_k^{(r)})^{\top},
 \qquad \|\Cov(\bb^{(r)})\|_{\op}=O(d^{-1}).
 \label{supp:eq:two-sample-bulk-covariance}
\end{align}
All four vectors \(\bl^{(r)},\bb^{(r)}\) are centered
and have bounded second moments. For independent centered vectors
\(\bv,\bw\),
\[
 \E(\bv^\top\bw)^2
 =\tr\{\Cov(\bv)\Cov(\bw)\}
 \le\|\Cov(\bw)\|_{\op}\E\|\bv\|_2^2.
\]
Applying this identity across samples, with a bulk vector in the
role of \(\bw\), yields
\begin{equation}\label{supp:eq:two-sample-bulk-cross-products}
 \langle\bl^{(1)},\bb^{(2)}\rangle,
 \quad \langle\bb^{(1)},\bl^{(2)}\rangle,
 \quad \langle\bb^{(1)},\bb^{(2)}\rangle
 =O_{\Pp}(d^{-1/2}).
\end{equation}
This argument allows different bulk subspaces in the two groups and
does not require independence of the leading and bulk parts within
a group.

Let \(\bA_{j,d}\) and \(\bC_{j,d}\) be the population versions of
\(\widehat{\bA}_j\) and \(\widehat{\bC}_j\) in
\eqref{eq:plugin-direction-matrices}, obtained by replacing
\(\widehat{\boldsymbol\Theta}_{j,d}\) and
\((\widehat\sigma_{u,jk}^{(r)})^{2}\) by
\(\boldsymbol\Theta_{j,d}\) and \((\sigma_{u,jk}^{(r)})^{2}\).
Stack \(\by=((\by^{(1)})^{\top},
(\by^{(2)})^{\top})^\top\). Then
\[
 \|\bl^{(1)}-\bl^{(2)}\|_2^2
 =\by^\top\bA_{j,d}\by,
 \qquad \Cov(\by)=\bC_{j,d}.
\]
The plug-in centering bound in
\eqref{supp:eq:direction-plugin-rates}, together with
\eqref{supp:eq:two-sample-angular-reduction}--\eqref{supp:eq:two-sample-bulk-cross-products},
therefore proves
\begin{equation}\label{supp:eq:two-sample-direction-quadratic}
 T_{u,j}=\by^\top\bA_{j,d}\by
       +O_{\Pp}\left[
          d^{-1/2}+\sum_{r=1}^2
          \{\sqrt{n_r}\rho^{(r)}+n_r^{-1/2}\}\right]
 =\by^\top\bA_{j,d}\by+o_{\Pp}(1).
\end{equation}

We now identify the population reference law. The finite-dimensional
central limit argument in Theorem~\ref{thm:one-sample-eigenvector},
applied to the two independent oracle sums, gives
\[
 \by\xrightarrow{\mathcal D}N(0,\bC_j^*),\qquad
 \bC_{j,d}\longrightarrow\bC_j^*
 =\Diag\left(
   (\pi^{(1)}(\sigma_{u,jk}^{(1),*})^2)_{k\in\Kcal_j^{(1)}},
   (\pi^{(2)}(\sigma_{u,jk}^{(2),*})^2)_{k\in\Kcal_j^{(2)}}\right)\succ0.
\]
The uniformly bounded fourth moments in
Lemma~\ref{supp:lem:moments} verify Lyapunov's condition for every
fixed linear combination. Define
\[
 \bK_{u,j,d}=\bC_{j,d}^{1/2}\bA_{j,d}\bC_{j,d}^{1/2},\qquad
 Q_{j,d}=\bg^\top\bK_{u,j,d}\bg,\qquad
 F_{j,d}(x)=\Pp(Q_{j,d}\le x),
 \quad \bg\sim N(0,\bI_{m_u}).
\]
The subscript \(d\) also includes dependence on \(n_1,n_2\).
Since \(\bA_{j,d}\) is a Gram matrix,
\[
 \bA_{j,d}\succeq0,\qquad \|\bA_{j,d}\|_{\op}\le2,
 \qquad \tr(\bA_{j,d})=m_u.
\]
Its diagonal entries equal one, so
\(\tr(\bK_{u,j,d})=\tr(\bC_{j,d})\) is bounded away from zero.
Consequently, \(\lambda_{\max}(\bK_{u,j,d})\ge c>0\)
eventually, even when this matrix is singular.

The bound \(\|\boldsymbol\Theta_{j,d}\|_{\op}\le1\) places the
cross-Gram matrices in a compact set of fixed dimension. Every
subsequence therefore has a further subsequence on which
\(\boldsymbol\Theta_{j,d}\to\boldsymbol\Theta_j^*\) and
\(\bA_{j,d}\to\bA_j^*\). Along it,
\eqref{supp:eq:two-sample-direction-quadratic} and the central limit
theorem give
\[
 T_{u,j}\xrightarrow{\mathcal D}Q_j^*,\qquad
 Q_{j,d}\xrightarrow{\mathcal D}Q_j^*,\qquad
 Q_j^*=\bg^\top(\bC_j^*)^{1/2}\bA_j^*(\bC_j^*)^{1/2}\bg.
\]
The coefficient matrix of \(Q_j^*\) is nonzero and positive
semidefinite, so the distribution function of \(Q_j^*\) is continuous by
Lemma~\ref{supp:lem:quadratic-cdf}. When the limiting distribution function
is continuous, convergence in distribution implies uniform convergence
of the distribution functions.
Both distribution functions above thus converge uniformly to that
of \(Q_j^*\) along this further subsequence. Since the initial
subsequence was arbitrary, we obtain
\begin{equation}\label{supp:eq:population-direction-calibration}
 \sup_x|\Pp(T_{u,j}\le x)-F_{j,d}(x)|\longrightarrow0
\end{equation}
without a convergence assumption on \(\boldsymbol\Theta_{j,d}\).

Next, we verify that the fitted reference estimates this population
law. For each group, let \(\bD^{(r)}\) be a diagonal sign matrix aligning
the columns of \(\widehat{\bGamma}_{-j}^{(r)}\bD^{(r)}\)
with their population counterparts, and put \(\bD=\Diag(\bD^{(1)},\bD^{(2)})\).
The vector expansion in
Theorem~\ref{thm:one-sample-eigenvector}, applied to the finitely
many leading columns, gives
\[
 \|\widehat{\bGamma}_{-j}^{(r)}\bD^{(r)}
          -\bGamma_{-j}^{(r)}\|_{\op}=O_{\Pp}(n_r^{-1/2}).
\]
Together with \eqref{supp:eq:direction-plugin-rates}, this implies,
for \(\delta_{12,n}=n_1^{-1/2}+n_2^{-1/2}\),
\[
 \|\bD\widehat{\bA}_j\bD-\bA_{j,d}\|_{\op}
 =O_{\Pp}(\delta_{12,n}),\qquad
 \|\widehat{\bC}_j-\bC_{j,d}\|_{\op}
 =O_{\Pp}(\delta_{12,n}).
\]
The diagonal entries of \(\bC_{j,d}\) are bounded away from zero,
so the same rate holds for
\(\|\widehat{\bC}_j^{1/2}-\bC_{j,d}^{1/2}\|_{\op}\).
All matrices have bounded operator norm in probability. Since
\(\bD\) commutes with the diagonal matrix \(\widehat{\bC}_j\),
\begin{equation}\label{supp:eq:direction-reference-matrix-consistency}
 \|\bD\widehat{\bK}_{u,j}\bD-\bK_{u,j,d}\|_{\op}
 =O_{\Pp}(\delta_{12,n}).
\end{equation}
Conjugating by \(\bD\) leaves the eigenvalues and hence the fitted
reference law unchanged. Applying
Lemma~\ref{supp:lem:quadratic-cdf} conditionally on the data now yields
\begin{equation}\label{supp:eq:conditional-direction-calibration}
 \sup_x|\widehat F_{u,j}(x)-F_{j,d}(x)|
 =O_{\Pp}(\delta_{12,n}^{1/3})=o_{\Pp}(1).
\end{equation}
Combining this with
\eqref{supp:eq:population-direction-calibration} proves
\eqref{eq:two-sample-eigendirection-calibration}.

Finally, we justify the p-value and critical-value conclusions
following the theorem. By Lemma~\ref{supp:lem:quadratic-cdf},
\(F_{j,d}\) is continuous and strictly increasing on
\([0,\infty)\). For \(v\in(0,1)\), let
\(x_{d,v}=F_{j,d}^{-1}(v)>0\). Then
\[
 \left|\Pp\{F_{j,d}(T_{u,j})\le v\}-v\right|
 =|\Pp(T_{u,j}\le x_{d,v})-F_{j,d}(x_{d,v})|
 \longrightarrow0
\]
by \eqref{supp:eq:population-direction-calibration}.
On \(\mathcal G_n\), \(\widehat{\bK}_{u,j}\) is nonzero and
positive semidefinite, so its reference law is also continuous.
Consequently,
\[
 p_{u,j}=1-\widehat F_{u,j}(T_{u,j}),\qquad
 \left|p_{u,j}-\{1-F_{j,d}(T_{u,j})\}\right|
 \le\sup_x|\widehat F_{u,j}(x)-F_{j,d}(x)|=o_{\Pp}(1).
\]
This comparison does not require independence between the statistic
and its estimated reference distribution. The exceptional-event
convention is immaterial, and Slutsky's theorem gives
\(p_{u,j}\xrightarrow{\mathcal D}\operatorname{Unif}(0,1)\).
Strict increase of the fitted distribution on its positive support
makes the rule \(T_{u,j}>\widehat q_{u,j}(1-\alpha)\) equivalent
to \(p_{u,j}<\alpha\) on \(\mathcal G_n\), proving asymptotic
level \(\alpha\).

The sign conjugation above also proves invariance of the reference
law to all non-target eigenvector signs; the squared inner product
handles the target signs. If \(m_1=m_2=1\), the leading vectors
\(\bl^{(1)},\bl^{(2)}\) are absent. The reduction
through \eqref{supp:eq:two-sample-direction-quadratic} then gives
\(T_{u,1}=o_{\Pp}(1)\), as stated in the main text, but the
nondegenerate calibration does not apply.
\end{proof}

\begin{proof}[Justification of Remark~\ref{rem:estimated-factor-numbers}]
Fix a target \(j\) covered by the corresponding theorem and let
\(\mathcal E_m=\{(\widehat m_1,\widehat m_2)=(m_1,m_2)\}\).
For either test and a fixed \(\alpha\in(0,1)\), let
\(\mathcal R_j^{\mathrm{sel}}(\alpha)\) and \(\mathcal R_j(\alpha)\)
denote the rejection events under the corresponding critical-value rule
with estimated and true factor numbers, respectively.
Using the same folds and other tuning choices, the two procedures
coincide on \(\mathcal E_m\). Hence
\[
 \left|\Pp\{\mathcal R_j^{\mathrm{sel}}(\alpha)\}
       -\Pp\{\mathcal R_j(\alpha)\}\right|
 \le\Pp(\mathcal E_m^c)\longrightarrow0.
\]
Under the respective null hypothesis, the rejection probability with
true factor numbers converges to \(\alpha\), so the same holds with
estimated factor numbers. No independence between rank selection and
testing is required.
\end{proof}

\subsection{Proofs of auxiliary lemmas}
\label{supp:subsec:auxiliary-lemma-proofs}

\begin{proof}[Proof of Lemma~\ref{supp:lem:angular-denominators}]
For a unit vector \(\bv\) and a positive integer \(q\), the Gaussian
representation \(\bs=\bg/\|\bg\|_2\), where
\(\bg\sim N_d(0,\bI_d)\), gives
\begin{equation}\label{supp:eq:spherical-coordinate-moments}
 \E(\bv^\top\bs)^{2q}
 =\frac{(2q-1)!!}{d(d+2)\cdots(d+2q-2)}.
\end{equation}
Indeed, \(\bs\) is independent of \(\|\bg\|_2\), and
\(\E\|\bg\|_2^{2q}=d(d+2)\cdots(d+2q-2)\).
The same argument with the Gaussian fourth moments gives
\begin{equation}\label{supp:eq:spherical-fourth-tensor}
 \E(s_as_bs_cs_e)
 =\frac{\ind(a=b)\ind(c=e)+\ind(a=c)\ind(b=e)
              +\ind(a=e)\ind(b=c)}{d(d+2)}.
\end{equation}
Both identities hold in every orthonormal basis.

Put \(W_a=\bu_a^\top\bs\) and \(p_a=\lambda_a/d\).
Since \(p_a\ge0\) and \(\sum_a p_a=1\), Jensen's inequality gives
\[
 \E\mathcal A^q
 =d^q\E\left(\sum_{a=1}^d p_aW_a^2\right)^q
 \le d^q\sum_{a=1}^d p_a\E W_a^{2q}
 \le(2q-1)!!\le(2q)^q.
\]
Also, \(\mathcal A\ge c_{\varepsilon}\) and \(\E\mathcal A=1\). Independence of
\(\xi\) and \(\bs\) yields
\[
 \zeta_1=\E(\xi^{-1})\E(\mathcal A^{-1/2}),\qquad
 1\le\E(\mathcal A^{-1/2})\le c_{\varepsilon}^{-1/2}.
\]
The inverse-radial condition therefore implies
\[
 \{\E\omega^q\}^{1/q}
 =\frac{\{\E(\xi^{-q})\}^{1/q}}
        {\E(\xi^{-1})\E(\mathcal A^{-1/2})}
 \le K_\xi\sqrt q,\qquad
 1\le q\le\lfloor c_\xi d\rfloor.
\]
In particular, \(\omega\) is independent of \(\mathcal A\), and all
fixed mixed moments needed below are bounded.
The angular moment bound holds for every positive integer \(q\),
without a dimension-dependent upper limit. Take
\(q_n=\lceil c\log n\rceil\) for a fixed \(c>1\). A union bound
and Markov's inequality give
\[
 \Pp\{\max_{i\le n}\mathcal A_i>M\log n\}
 \le n\left(\frac{2q_n}{M\log n}\right)^{q_n}\longrightarrow0
\]
for sufficiently large fixed \(M\). For the joint angular and
inverse-radial maximum, only fixed moments are needed: since \(d\to\infty\),
\(4\le\lfloor c_\xi d\rfloor\) eventually and
\(\E\omega^4\le C\). Independence and the angular moment bound give
\[
 \E\{(1+\mathcal A^{1/2})^4\omega^4\}
 =\E(1+\mathcal A^{1/2})^4\E\omega^4\le C.
\]
Consequently, for every \(M>0\),
\[
 \Pp\{\max_{i\le n}(1+\mathcal A_i^{1/2})\omega_i>M n^{1/4}\}
 \le\sum_{i=1}^n
 \frac{\E\{(1+\mathcal A_i^{1/2})^4\omega_i^4\}}{M^4n}
 \le\frac{C}{M^4},
\]
which proves the joint maximum bound without a
restriction on the relative growth of \(n\) and \(d\).

For the denominator calculation, write
\[
 \bz_i=\bSigma^{1/2}\bs_i,\qquad
 \bt_{\ell i}=\bdelta_{\mu,-\ell}/\xi_i,\qquad
 b_{\ell i}=\|\bt_{\ell i}\|_2
   =\zeta_1\|\bdelta_{\mu,-\ell}\|_2\omega_i.
\]
Then \(\bz_i^\top\bOmega \bz_i=1\), \(\|\bz_i\|_2^2=\mathcal A_i\), and
\[
 e_{\ell i}
 =\bz_i^\top\bDelta_{P,-\ell}\bz_i
  -2\bz_i^\top(\bOmega+\bDelta_{P,-\ell})\bt_{\ell i}
  +\bt_{\ell i}^\top(\bOmega+\bDelta_{P,-\ell})\bt_{\ell i}.
\]
On an event of probability tending to one,
\(\max_\ell\|\bOmega+\bDelta_{P,-\ell}\|_{\op}\le C\). Hence
\[
 |e_{\ell i}|
 \le \|\bDelta_{P,-\ell}\|_{\op}\mathcal A_i
       +C\mathcal A_i^{1/2}b_{\ell i}+Cb_{\ell i}^2.
\]
The nuisance rates and the joint maximum bound show that
\[
 \max_{\ell\le K}\max_{i\in I_\ell}
 (1+\mathcal A_i^{1/2})b_{\ell i}
 =O_{\Pp}(a_n n^{1/4})=o_{\Pp}(1).
\]
Substitution proves \eqref{supp:eq:uniform-denominator-control},
since \(a_n=o(n^{-1/4})\) implies both \(a_n\log n\to0\) and
\(a_n n^{1/4}\to0\).
Finally, conditional on a training sample, a held-out observation
equals its location estimate with probability zero: for each positive
radius, the spherical direction is nonatomic when \(d\ge2\).
Positive definiteness of the pilot thus makes each denominator
strictly positive almost surely, including outside the event used
for the uniform bounds.
\end{proof}

\begin{proof}[Proof of Lemma~\ref{supp:lem:raw-linear}]
We expand location and precision jointly at their population values.
Fix a fold and condition on \(\mathcal F_{-\ell}\). In this proof, write
\(\bDelta=\bDelta_{P,-\ell}\), \(\bh=\bdelta_{\mu,-\ell}\),
\(\delta=\|\bDelta\|_{\op}\), \(\bz=\bSigma^{1/2}\bs\), and
\(\bt=\bh/\xi\), with \(b=\|\bt\|_2\) and \(\mathcal A=\|\bz\|_2^2\).
The rescaled observation-level score is
\[
 S(\bz,\bt,\bDelta)
 =d\,\frac{(\bz-\bt)(\bz-\bt)^\top}{(\bz-\bt)^\top(\bOmega+\bDelta)(\bz-\bt)}.
\]
Write its denominator as \(1+e_1+e_2\), where
\[
 e_1=\bz^\top\bDelta \bz-2\bz^\top\bOmega \bt,\qquad
 e_2=\bt^\top\bOmega \bt-2\bz^\top\bDelta \bt+\bt^\top\bDelta \bt,
 \qquad e=e_1+e_2.
\]
Let \(\bN_0=\bz\bz^\top\), \(\bN_1=-\bz\bt^\top-\bt\bz^\top\), and
\(\bN_2=\bt\bt^\top\).
The identity \((1+e)^{-1}=1-e+e^2/(1+e)\) gives the exact expansion
\begin{equation}\label{supp:eq:joint-score-expansion}
 S(\bz,\bt,\bDelta)=d\bz\bz^\top+L_P(\bz,\bDelta)+L_\mu(\bz,\bt)
                    +\mathcal R(\bz,\bt,\bDelta),
\end{equation}
where
\[
 \begin{aligned}
 L_P(\bz,\bDelta)&=-d\bz\bz^\top(\bz^\top\bDelta \bz),\\
 L_\mu(\bz,\bt)&=d\{-\bz\bt^\top-\bt\bz^\top
                       +2\bz\bz^\top(\bz^\top\bOmega \bt)\},\\
 d^{-1}\mathcal R(\bz,\bt,\bDelta)
 &=\bN_2-\bN_0e_2-(\bN_1+\bN_2)e
                  +(\bN_0+\bN_1+\bN_2)\frac{e^2}{1+e}.
 \end{aligned}
\]
In particular, the expansion uses no inverse involving an estimated
matrix in its linear terms.

On \(\{|e|\le1/2,\ \delta\le1,\ b\le1\}\),
\[
 |e|\le C\{\delta\mathcal A+\mathcal A^{1/2}b+b^2\},
 \qquad |e_2|\le C(b^2+\delta\mathcal A^{1/2}b).
\]
Applying the rank-one bound following
\eqref{supp:eq:joint-matrix-norm} to each term in the exact remainder
therefore gives
\begin{equation}\label{supp:eq:joint-score-remainder-bound}
 \|\mathcal R(\bz,\bt,\bDelta)\|_\diamond
 \le C(1+\mathcal A)^3(\delta+b)^2.
\end{equation}
For example, the last term is bounded by
\[
 C(\mathcal A+\mathcal A^{1/2}b+b^2)
       (\delta\mathcal A+\mathcal A^{1/2}b+b^2)^2
 \le C(1+\mathcal A)^3(\delta+b)^2;
\]
the other terms satisfy the same bound.
Lemma~\ref{supp:lem:angular-denominators} ensures that the event
required for \eqref{supp:eq:joint-score-remainder-bound} holds for
all held-out observations with probability tending to one.
Since \(b=\zeta_1\|\bh\|_2\omega\) and \(\omega\) is independent of
\(\mathcal A\), the same lemma gives, conditionally on
\(\mathcal F_{-\ell}\),
\[
 \E\{(1+\mathcal A)^3(\delta+b)^2\mid\mathcal F_{-\ell}\}
 \le C(\delta+\zeta_1\|\bh\|_2)^2.
\]
Conditional Markov's inequality, localized to
\(\delta\le M a_n\) and \(\zeta_1\|\bh\|_2\le M a_n\),
then proves
\begin{equation}\label{supp:eq:empirical-score-remainder}
 \frac1{N_\ell}\sum_{i\in I_\ell}
       \|\mathcal R(\bz_i,\bt_{\ell i},\bDelta)\|_\diamond
 =O_{\Pp}(a_n^2).
\end{equation}
Taking \(M\) sufficiently large and then \(n,d\) large removes the
localization. The maximum bounds are used only to justify the
expansion; the averaged remainder rate follows from the fixed-moment
bound above. This argument bounds a polynomial majorant on the
high-probability denominator event, so it does not assume moment
bounds for the reciprocal of an estimated denominator on its complement.

Next consider the linear terms. Applying
\eqref{supp:eq:spherical-fourth-tensor} with
\(\bC_{\bDelta}=\bSigma^{1/2}\bDelta\bSigma^{1/2}\) gives
\[
 \E\{\bs\bs^\top(\bs^\top \bC_{\bDelta}\bs)\}
 =\frac{2\bC_{\bDelta}+\tr(\bC_{\bDelta})\bI_d}{d(d+2)}.
\]
Consequently,
\[
 \E(L_P\mid\mathcal F_{-\ell})
 =-\frac{2\bSigma\bDelta\bSigma
                 +\tr(\bSigma\bDelta)\bSigma}{d+2}.
\]
The location term is odd under \(\bs\mapsto-\bs\) for each fixed
\(\xi\); hence \(\E(L_\mu\mid\mathcal F_{-\ell})=0\).
Moreover,
\[
 \|L_P\|_\diamond\le C\delta\mathcal A^2,\qquad
 \|L_\mu\|_\diamond
 \le Cb(\mathcal A^{1/2}+\mathcal A^{3/2}).
\]
The angular and inverse-radial moment bounds imply
\begin{equation}\label{supp:eq:linear-score-second-moments}
 \E(\|L_P\|_\diamond^2\mid\mathcal F_{-\ell})\le C\delta^2,\qquad
 \E(\|L_\mu\|_\diamond^2\mid\mathcal F_{-\ell})
       \le C\zeta_1^2\|\bh\|_2^2.
\end{equation}
Because \(\|\cdot\|_\diamond\) is a Hilbert norm, conditional
independence and centering give, for either linear term \(L\),
\[
 \E\left[
  \left\|\frac1{N_\ell}\sum_{i\in I_\ell}
          \{L_i-\E(L_i\mid\mathcal F_{-\ell})\}\right\|_\diamond^2
       \,\middle|\,\mathcal F_{-\ell}\right]
 \le \frac1{N_\ell}\E(\|L\|_\diamond^2\mid\mathcal F_{-\ell}).
\]
Thus the empirical precision and location linear terms differ from
their conditional means by \(O_{\Pp}(a_n n^{-1/2})\) each.
Combining these bounds with
\eqref{supp:eq:empirical-score-remainder} and summing over the finitely
many folds proves \eqref{supp:eq:raw-matrix-expansion}, since
\(a_n^2+a_n n^{-1/2}\le\rho_n\).

For completeness, the conditional mean identities also give the
population derivatives stated in the lemma. For precision, at each
fixed \(d\), a sufficiently short line segment from \(\bOmega\)
has its relative denominator uniformly bounded away from zero, so
differentiation under the expectation is valid. For location,
truncate to \(\|\by\|_2>2\|\bh\|_2\), where the derivative is bounded
by \(C_d/\|\by\|_2\). On the complementary ball, boundedness of the
Tyler score and \(\E\|\by\|_2^{-2}<\infty\) give an
\(o(\|\bh\|_2)\) contribution to the expectation of the score
difference. The derivative on the truncated region converges by
dominated convergence, giving the zero location derivative.

Finally, \(d\bz\bz^\top\) has mean \(\bSigma\) and
\(\|d\bz\bz^\top\|_\diamond^2=2\mathcal A^2\), so
\[
 \left\|\frac1n\sum_{i=1}^n\bPhi(\bx_i)\right\|_\diamond
 =O_{\Pp}(n^{-1/2}).
\]
The norm of the derivative in \eqref{supp:eq:raw-tyler-derivative} is at most
\(C\|\bH\|_{\op}\), using \(\tr(\bSigma)=d\),
\(\|\bSigma\|_{\mathrm F}\le d\), and
\(\tr(\bSigma^2)\le d^2\).
Adding back the linear precision term proves the final rate in
Lemma~\ref{supp:lem:raw-linear}.
\end{proof}

\begin{proof}[Proof of Lemma~\ref{supp:lem:spectral-perturbation}]
Let \(\bP_j^\perp=\bI_d-\bu_j\bu_j^\top\) and write
\(\widetilde{\bu}_j=\alpha\bu_j+\bh\), with
\(\bh=\bP_j^\perp\widetilde{\bu}_j\) and \(\alpha\ge0\).
Weyl's inequality gives
\(|\widetilde\lambda_j-\lambda_j|\le\|\bE\|_{\op}\).
The projected eigen-equation is
\[
 \bh=\alpha \bG_j\bE\bu_j
       +\bG_j\{\bE-(\widetilde\lambda_j-\lambda_j)\bI_d\}\bh.
\]
Since \(\|\bG_j\|_{\op}=g_j^{-1}\), the event
\(\|\bE\|_{\op}\le g_j/4\) implies
\[
 \|\bh\|_2\le2\|\bG_j\bE\bu_j\|_2
             \le2\|\bE\|_{\op}/g_j\le\tfrac12,\qquad
 \alpha=\sqrt{1-\|\bh\|_2^2}\ge\sqrt3/2.
\]
Projection onto \(\bu_j\) yields
\[
 \widetilde\lambda_j-\lambda_j-\bu_j^\top\bE\bu_j
 =\alpha^{-1}\bu_j^\top\bE\bh,
\]
which proves \eqref{supp:eq:eigenvalue-perturbation}.
Also, \(1-\alpha=\|\bh\|_2^2/(1+\alpha)\), and hence
\[
 \begin{aligned}
 \|\widetilde{\bu}_j-\bu_j-\bG_j\bE\bu_j\|_2
 &\le |1-\alpha|(1+\|\bG_j\bE\bu_j\|_2)
      +\frac{2\|\bE\|_{\op}}{g_j}\|\bh\|_2\\
 &\le C\|\bE\|_{\op}^2/g_j^2.
 \end{aligned}
\]
This proves \eqref{supp:eq:eigenvector-perturbation}, including the
longitudinal normalization error.
\end{proof}

\begin{proof}[Proof of Lemma~\ref{supp:lem:moments}]
Write \(W_a=\bu_a^\top\bs\) and
\(\mathcal A=\sum_a\lambda_aW_a^2\).
The formula for \(\bPsi\) gives directly
\[
 \varphi_{\lambda,j}=dW_j^2-\mathcal A,\qquad
 \bu_k^\top\bq_j
 =\frac{(d+2)\sqrt{\lambda_j\lambda_k}}{\lambda_j-\lambda_k}
        W_jW_k\quad(k\ne j).
\]
Equation~\eqref{supp:eq:spherical-fourth-tensor} implies
\[
 \E W_a^2=\frac1d,\qquad
 \E(W_a^2W_b^2)=\frac{1+2\ind(a=b)}{d(d+2)}.
\]
Together with \(\sum_a\lambda_a=d\), these identities give
\[
 \begin{aligned}
 \Var(\varphi_{\lambda,j})
 &=\frac{3d^2-2d(d+2\lambda_j)
                    +d^2+2\sum_a\lambda_a^2}{d(d+2)}\\
 &=\frac{2}{d(d+2)}
       \left(d^2-2d\lambda_j+\sum_a\lambda_a^2\right).
 \end{aligned}
\]
Multiplying by \(\{(d+2)/d\}^2\) proves the eigenvalue variance
formula. The same fourth moments give the directional variances.
For distinct \(k,l\ne j\), the moment
\(\E(W_j^2W_kW_l)\) is zero by reflection symmetry, proving the
vanishing cross-covariances.

The bounded bulk eigenvalues imply
\(d^{-2}\sum_{a>m}\lambda_a^2=O(d^{-1})\).
It follows that
\[
 \sigma_{\lambda,j}^2\longrightarrow
 2\left(1-2\kappa_j+\sum_{a=1}^m\kappa_a^2\right)
 \ge2(1-\kappa_j)^2>0.
\]
For the last inequality, the bulk lower bound and trace constraint
give \(\sum_{a=1}^m\kappa_a\le1-c_{\varepsilon}\); hence \(\kappa_j<1\).
Equation~\eqref{supp:eq:spherical-coordinate-moments} gives
\(\E(dW_j^2)^4\le105\), and
Lemma~\ref{supp:lem:angular-denominators} gives
\(\E\mathcal A^4\le105\). Thus
\(\E|\varphi_{\lambda,j}|^4\le C\).

For the directional moments, the squared coefficient of \(W_jW_k\)
in \(\bq_j\) is \(O(d^2)\) for \(k\le m,\ k\ne j\),
and \(O(d)\) uniformly for \(k>m\). Consequently,
\[
 \|\bq_j\|_2^2
 \le C\left\{\sum_{\substack{k\le m\\k\ne j}}
                      (dW_jW_k)^2+dW_j^2\right\}.
\]
There are finitely many leading terms, and
\[
 d^4\E(W_j^4W_k^4)
 \le d^4\{\E W_j^8\,\E W_k^8\}^{1/2}\le105.
\]
These bounds prove
\(\E\|\bq_j\|_2^4\le C\).
They also imply
\(\sum_{k\ne j}\sigma_{u,jk}^2=O(1)\).

It remains to prove bulk concentration. Set
\(\bv_i=\bPi_{\mathrm{bulk}}\bq_j(\bx_i)\) and
\(\bC_{\mathrm{bulk}}=\Cov(\bv_i)\). The covariance calculation above gives
\[
 \bC_{\mathrm{bulk}}=\sum_{k>m}\sigma_{u,jk}^2\bu_k\bu_k^\top,\qquad
 \tr \bC_{\mathrm{bulk}}=b_{u,j}=O(1),\qquad
 \tr(\bC_{\mathrm{bulk}}^2)=\sum_{k>m}\sigma_{u,jk}^4=O(d^{-1}),
\]
since \(\max_{k>m}\sigma_{u,jk}^2=O(d^{-1})\).
For independent centered \(\bv_i\),
\[
 \left\|\frac1{\sqrt n}\sum_{i=1}^n\bv_i\right\|_2^2-b_{u,j}
 =\frac1n\sum_{i=1}^n(\|\bv_i\|_2^2-\E\|\bv_i\|_2^2)
   +\frac2n\sum_{i<h}\bv_i^\top\bv_h.
\]
The two terms are uncorrelated, as are the terms corresponding to
distinct unordered pairs in the second sum. Therefore
\[
 \begin{aligned}
 \Var\left\{\left\|\frac1{\sqrt n}\sum_{i=1}^n\bv_i\right\|_2^2\right\}
 &=\frac1n\Var(\|\bv_1\|_2^2)
       +\frac{2(n-1)}n\tr(\bC_{\mathrm{bulk}}^2)\\
 &\le\frac1n\E\|\bv_1\|_2^4+2\tr(\bC_{\mathrm{bulk}}^2)
 =O(n^{-1}+d^{-1}).
 \end{aligned}
\]
Chebyshev's inequality proves \eqref{supp:eq:bulk-concentration}.
\end{proof}

\begin{proof}[Proof of Lemma~\ref{supp:lem:quadratic-cdf}]
Diagonalizing \(\bK\), write
\(\bg^\top \bK\bg=\lambda g_1^2+R\), where
\(\lambda=\lambda_{\max}(\bK)\ge c\), \(R\ge0\), and \(R\) is
independent of \(g_1\). The density of \(g_1^2\) on \((0,\infty)\)
is \((2\pi x)^{-1/2}e^{-x/2}\), which is decreasing. Thus any
interval of length \(t\) has probability at most
\[
 \Pp(\lambda g_1^2\le t)
 \le\sqrt{\frac{2t}{\pi\lambda}}
 \le C\sqrt t.
\]
Conditioning on \(R\) proves the first bound in
\eqref{supp:eq:quadratic-cdf-bounds}, and hence continuity.
At least one weight is positive, so there is no mass at zero.
Independence of the entries of the standard Gaussian vector \(\bg\)
also gives positive
probability to every interval in \((0,\infty)\); this proves
strict increase on \([0,\infty)\).

For the second bound, couple both quadratic forms using the same
standard Gaussian vector. If \(\delta=\|\bK-\bL\|_{\op}\), then
\[
 |\bg^\top(\bK-\bL)\bg|\le\delta\|\bg\|_2^2,\qquad
 \Pp\{|\bg^\top(\bK-\bL)\bg|>t\}\le q\delta/t.
\]
The event inclusions for two random variables whose difference is
at most \(t\), together with the first bound, imply
\[
 \sup_x|F_{\bK}(x)-F_{\bL}(x)|\le C\sqrt t+q\delta/t.
\]
For \(\delta>0\), choose \(t=\delta^{2/3}\); for \(\delta=0\),
the laws are identical. This proves the claimed bound, without a
positive lower bound on the other eigenvalues of either matrix.
\end{proof}

\section{Additional Simulation Studies}
\label{supp:sec:simulations}

\subsection{Implementation details}
\label{supp:subsec:simulation-implementation}

We implement the elliptical model in Section~\ref{subsec:simulation-design}
using an equivalent Gaussian scale-mixture representation. For each
group \(r=1,2\), we generate
\(\bg_i^{(r)}\sim N_d(\boldsymbol0,\bSigma^{(r)})\) independently
across observations and groups. For the Gaussian distribution, we set
\(\bx_i^{(r)}=\bg_i^{(r)}\); for the elliptical \(t_\nu\) distribution,
we set
\[
 \bx_i^{(r)}=\sqrt{\frac{\nu-2}{V_i^{(r)}}}\,\bg_i^{(r)},
 \qquad V_i^{(r)}\sim\chi_\nu^2,\qquad
 V_i^{(r)}\perp\bg_i^{(r)},\qquad\nu\in\{8,3\}.
\]
The mixing variables are independent across observations and groups
and independent of all Gaussian vectors.
This gives the radial distributions in Section~\ref{subsec:simulation-design}
because a standard Gaussian vector has independent length and direction,
with squared length distributed as \(\chi_d^2\) and direction uniform
on \(\mathbb S^{d-1}\).

For the proposed method, we use \(K=3\) balanced folds in each group.
Let \(N_\ell^{(r)}\) be the size of evaluation fold \(\ell\), so that
the corresponding training sample has size
\(N_{-\ell}^{(r)}=n_r-N_\ell^{(r)}\), for \(\ell=1,2,3\).
On each training sample, we compute the spatial median by Weiszfeld
iteration, initialized at the coordinatewise median, with relative
tolerance \(10^{-6}\) and at most 100 iterations. We then construct the
POET--SS precision pilot as in Section~\ref{subsec:pilot-construction},
using hard thresholding of the off-diagonal residual entries at level
\[
 0.60\left\{
 \sqrt{\frac{\log d}{N_{-\ell}^{(r)}}}
 +\sqrt{\frac{\log N_{-\ell}^{(r)}}{N_{-\ell}^{(r)}}}
 +d^{-1/2}\right\}.
\]
The residual diagonal is unchanged, and we apply the positive-definite
adjustment in \eqref{eq:foldwise-poet-ss} before inversion. The threshold
constant \(C=0.60\) is fixed across all designs. The shape estimator and
tests are then computed as in Sections~\ref{subsec:orth-estimator}
and~\ref{sec:two-sample-tests}.

We evaluate Gaussian quadratic-form probabilities and quantiles using
Imhof's method \citep{supp:Imhof1961}, with absolute and relative tolerances
\(10^{-8}\). For the sample-covariance benchmark, we use the uncentered
sample second-moment matrix because the simulated populations have zero
location. The master random-number seed is 20260801.

For the power experiments, we use the alternatives defined in
Section~\ref{subsec:simulation-design}. The eigenvalue alternative varies
the multiplier over
\[
 h\in\{1.05,1.10,1.20,1.35,1.50,1.75,2.00,2.50\}.
\]
The eigendirection alternative varies the rotation angle over
\[
 \theta\in\{3^\circ,6^\circ,9^\circ,12^\circ,15^\circ,
            25^\circ,35^\circ,45^\circ,60^\circ\}.
\]
These changes are applied to the factor component. After adding
\(\bR\) and normalizing the trace, we compute \(\Delta_\lambda\) and
\(\Delta_u\) from the full shape matrices using \eqref{eq:simulation-effects}.
For the full shape matrices, the addition of \(\bR\) means that the
eigenvalue alternative can also change the eigendirections, and the
eigendirection alternative can also change the eigenvalues.

To obtain the size-adjusted power in Figure~\ref{fig:power-adjusted},
we calibrate critical values at level \(0.05\) separately for each
method, test, and distribution. We use 2,000 independent null
replications with \(\bL^{(1)}=\bL^{(2)}=\bL\),
\(d=n_1=n_2=250\), and \(m_1=m_2=3\), matching the settings for
the power experiments. Both methods use three factors in each group,
and the tests concern the first principal component. The resulting
critical values are held fixed across effect sizes and applied to
2,000 alternative replications at each point; the rejection proportion
gives the size-adjusted power. The null and alternative simulations
are independent. Within the alternative simulations, we reuse random
draws across effect sizes for each distribution.

\subsection{Null diagnostics under the Gaussian distribution}
\label{supp:subsec:gaussian-diagnostics}

We examine the null distributions of the statistics under the Gaussian
distribution, using \(d=n_1=n_2=500\), \(m_1=m_2=3\), and 5,000
replications. Both methods use three factors in each group, and we
consider the first principal component. Figure~\ref{supp:fig:qq-normal}
shows the Q--Q plots and histograms, complementing the elliptical
\(t_3\) results in Figure~\ref{fig:qq-t3}.

The statistics of both methods broadly follow their reference
distributions. For the proposed eigenvalue and eigendirection
statistics, the Q--Q slopes are \(1.010\) and \(1.067\), respectively.
The corresponding rejection probabilities are \(0.056\) and \(0.060\),
compared with \(0.051\) and \(0.053\) for the benchmark.
The eigendirection plots use oracle reference distributions evaluated
at the population parameters, with 50,000 quantiles computed by
Imhof's method. The pointwise Q--Q reference envelopes are obtained
from the beta order-statistic law. These plots assess the distributional
approximations; Table~\ref{tab:size-balanced} reports rejection
probabilities when the reference distributions are estimated from the data.

\begin{figure}[tb]
\centering
\includegraphics[width=0.85\textwidth]{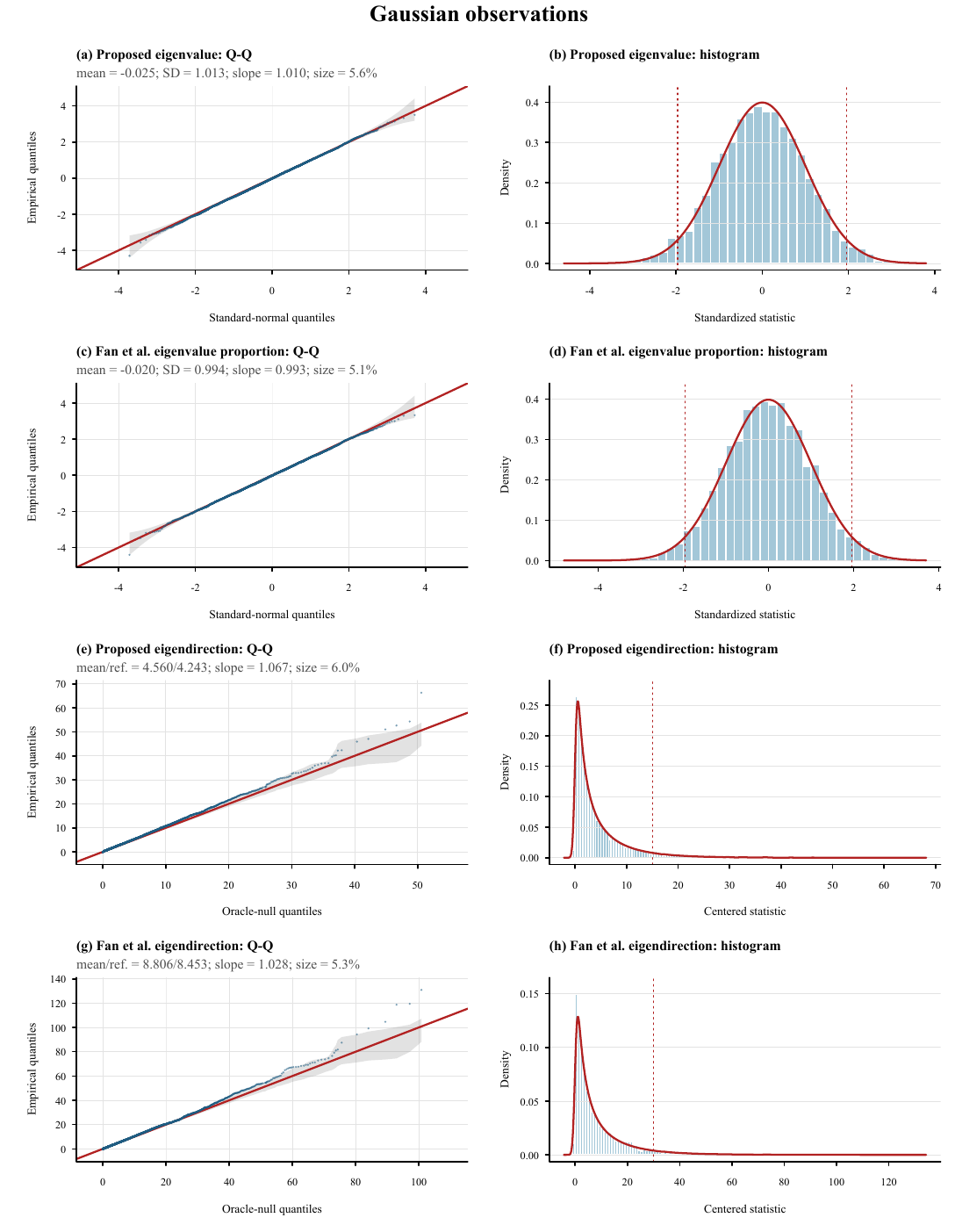}
\caption{Null diagnostics under the Gaussian distribution for the first
principal component, with \(d=n_1=n_2=500\), \(m_1=m_2=3\), and 5,000
replications. In the Q--Q plots, red lines show \(y=x\), and gray bands
are pointwise \(95\%\) reference envelopes for the empirical quantiles.
Each reported Q--Q slope is computed from quantile pairs with probability
levels between \(0.05\) and \(0.95\). In the histograms, red curves show
the reference densities, and dashed vertical lines indicate the critical
values at significance level \(0.05\).}
\label{supp:fig:qq-normal}
\end{figure}

\subsection{Rejection probabilities without size adjustment}
\label{supp:subsec:nominal-power}

We also report rejection probabilities using the critical values from
each method's reference distribution at nominal level \(0.05\).
Figure~\ref{supp:fig:power-nominal} uses the same settings and alternatives
as Figure~\ref{fig:power-adjusted}, with \(d=n_1=n_2=250\) and tests
for the first principal component.

Under the Gaussian distribution, the nominal and size-adjusted curves
are similar. Under the elliptical \(t_3\) distribution, however, the
null rejection probabilities are \(0.058\) and \(0.048\) for the
proposed eigenvalue and eigendirection tests, compared with \(0.008\)
and \(0.650\) for the benchmark. The large rejection probabilities
for the benchmark eigendirection test therefore partly reflect its
inflated null rejection rate. This explains the use of size-adjusted
power in the main text to compare the methods at the same empirical size.

\begin{figure}[tb]
\centering
\includegraphics[width=\textwidth]{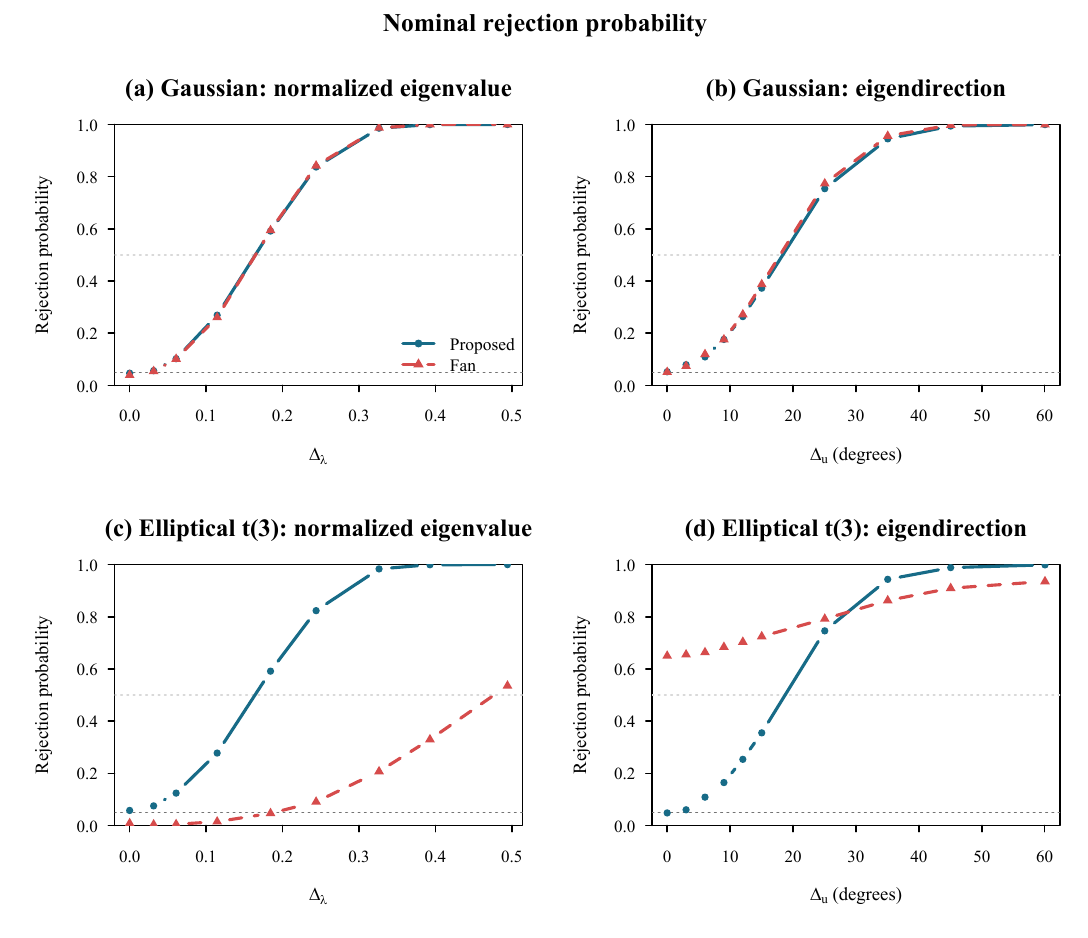}
\caption{Rejection probabilities at nominal level \(0.05\) for the first
principal component under the Gaussian and elliptical \(t_3\) distributions,
with \(d=n_1=n_2=250\), \(m_1=m_2=3\), and 2,000 replications per point.
Blue circles and red triangles denote the proposed method and the
benchmark, respectively. The dark dotted line marks the level \(0.05\).}
\label{supp:fig:power-nominal}
\end{figure}

\subsection{Sensitivity to the fitted factor number}
\label{supp:subsec:factor-sensitivity}

We assess sensitivity to the number of factors used in the tests,
denoted by \(\widetilde m\) for both methods in each group. We keep
the true factor numbers at \(m_1=m_2=3\), set \(d=n_1=n_2=500\), and
consider \(\widetilde m\in\{2,3,4\}\) under the Gaussian, elliptical
\(t_8\), and elliptical \(t_3\) distributions. Each setting uses
5,000 null replications.

Table~\ref{supp:tab:size-sensitivity} shows that retaining one additional
factor has little effect on the proposed rejection probabilities.
In contrast, using only two factors increases them to
\(0.247\)--\(0.406\). Omitting a pervasive factor leaves a strong
component in the POET--SS residual, which is inconsistent with the
sparse residual structure required by the precision pilot.
For the benchmark, changing \(\widetilde m\) affects the reference
distribution but leaves the first two sample-covariance eigenpairs
unchanged. Its rejection probabilities vary little under the Gaussian
distribution, while the distortions under the elliptical \(t_8\) and
\(t_3\) distributions persist.

\begin{table}[tb]
\centering
\footnotesize
\setlength{\tabcolsep}{6.5pt}
\renewcommand{\arraystretch}{1.00}
\begin{threeparttable}
\caption{Empirical rejection probabilities under the null at nominal
level \(0.05\) for different fitted factor numbers, with
\(d=n_1=n_2=500\) and true factor numbers \(m_1=m_2=3\).}
\label{supp:tab:size-sensitivity}
\begin{tabular}{lccrrrr}
\toprule
& & & \multicolumn{2}{c}{Eigenvalue}
& \multicolumn{2}{c}{Eigendirection} \\
\cmidrule(lr){4-5}\cmidrule(lr){6-7}
Distribution & \(\widetilde m\) & \(j\)
& \multicolumn{1}{c}{Proposed} & \multicolumn{1}{c}{Fan et al.}
& \multicolumn{1}{c}{Proposed} & \multicolumn{1}{c}{Fan et al.} \\
\midrule
\multirow{8}{*}{Gaussian} & \multirow{2}{*}{2} & 1 & 0.262 & 0.052 & 0.325 & 0.051 \\
 &  & 2 & 0.261 & 0.052 & 0.394 & 0.063 \\
\addlinespace[1pt]
 & \multirow{3}{*}{3} & 1 & 0.051 & 0.051 & 0.052 & 0.050 \\
 &  & 2 & 0.051 & 0.052 & 0.052 & 0.047 \\
 &  & 3 & 0.047 & 0.044 & 0.058 & 0.051 \\
\addlinespace[1pt]
 & \multirow{3}{*}{4} & 1 & 0.053 & 0.051 & 0.052 & 0.050 \\
 &  & 2 & 0.053 & 0.051 & 0.051 & 0.047 \\
 &  & 3 & 0.050 & 0.044 & 0.058 & 0.050 \\
\midrule
\multirow{8}{*}{\(t_8\)} & \multirow{2}{*}{2} & 1 & 0.255 & 0.033 & 0.327 & 0.121 \\
 &  & 2 & 0.260 & 0.036 & 0.398 & 0.152 \\
\addlinespace[1pt]
 & \multirow{3}{*}{3} & 1 & 0.049 & 0.032 & 0.051 & 0.117 \\
 &  & 2 & 0.051 & 0.034 & 0.049 & 0.126 \\
 &  & 3 & 0.051 & 0.030 & 0.058 & 0.264 \\
\addlinespace[1pt]
 & \multirow{3}{*}{4} & 1 & 0.051 & 0.031 & 0.049 & 0.117 \\
 &  & 2 & 0.052 & 0.034 & 0.049 & 0.126 \\
 &  & 3 & 0.052 & 0.030 & 0.059 & 0.263 \\
\midrule
\multirow{8}{*}{\(t_3\)} & \multirow{2}{*}{2} & 1 & 0.247 & 0.005 & 0.329 & 0.704 \\
 &  & 2 & 0.260 & 0.024 & 0.406 & 0.783 \\
\addlinespace[1pt]
 & \multirow{3}{*}{3} & 1 & 0.050 & 0.004 & 0.054 & 0.698 \\
 &  & 2 & 0.051 & 0.024 & 0.050 & 0.734 \\
 &  & 3 & 0.049 & 0.026 & 0.061 & 0.995 \\
\addlinespace[1pt]
 & \multirow{3}{*}{4} & 1 & 0.051 & 0.004 & 0.054 & 0.698 \\
 &  & 2 & 0.052 & 0.024 & 0.050 & 0.733 \\
 &  & 3 & 0.048 & 0.026 & 0.062 & 0.984 \\
\bottomrule
\end{tabular}
\begin{tablenotes}[flushleft]
\footnotesize
\item \(\widetilde m\) denotes the number of factors used by both methods
in each group. When \(\widetilde m=2\), only the first two principal
components are tested.
\end{tablenotes}
\end{threeparttable}
\end{table}

\subsection{Serial dependence and model misspecification}
\label{supp:subsec:serial-stress}

We further examine departures from temporal independence and the
elliptical distributional assumption, following the simulation design
of \citet{supp:FanLiXiaZheng2026}. In each group \(r=1,2\), we use the
standardized first-order autoregressive process
\[
 \bz_t^{(r)}=0.1\bz_{t-1}^{(r)}
 +\sqrt{1-0.1^2}\,\boldsymbol\varepsilon_t^{(r)},
\]
where the entries of each innovation vector
\(\boldsymbol\varepsilon_t^{(r)}\) are independent standardized \(t_5\)
variables, and the innovation vectors are independent across time.
We set \(d\in\{100,300,500\}\), \(n_1=d\), and \(n_2=1.5d\), and
consider the tests for the first principal component. Both methods use
three factors in each group, with 1,000 replications per setting.
The eigenvalue and eigendirection alternatives use \(h=1.20\) and
\(\theta=9^\circ\), respectively.

\begin{table}[tb]
\centering
\small
\begin{threeparttable}
\caption{Empirical rejection probabilities at nominal level \(0.05\)
for the first principal component in the AR(1)--\(t_5\) experiment,
with \(n_1=d\), \(n_2=1.5d\), and 1,000 replications.}
\label{supp:tab:serial-stress}
\setlength{\tabcolsep}{4.5pt}
\begin{tabular}{r*{8}{c}}
\toprule
& \multicolumn{4}{c}{Null size} & \multicolumn{2}{c}{Eigenvalue alternative} & \multicolumn{2}{c}{Direction alternative} \\ 
\cmidrule(lr){2-5}\cmidrule(lr){6-7}\cmidrule(lr){8-9}
$d$ & Prop. & Fan & Prop. & Fan & Prop. & Fan & Prop. & Fan \\ 
& \multicolumn{2}{c}{Eigenvalue} & \multicolumn{2}{c}{Direction} & & & & \\ 
\midrule
100 & 0.071 & 0.038 & 0.071 & 0.084 & 0.149 & 0.099 & 0.127 & 0.148 \\
300 & 0.074 & 0.036 & 0.126 & 0.145 & 0.373 & 0.262 & 0.273 & 0.298 \\
500 & 0.068 & 0.027 & 0.094 & 0.100 & 0.553 & 0.437 & 0.359 & 0.355 \\
\bottomrule
\end{tabular}
\begin{tablenotes}[flushleft]
\footnotesize
\item Prop. and Fan denote the proposed method and the benchmark,
respectively.
\end{tablenotes}
\end{threeparttable}
\end{table}

Table~\ref{supp:tab:serial-stress} shows size distortions for both
methods. For example, when \(d=300\), the null rejection probabilities
of the proposed and benchmark eigendirection tests are \(0.126\) and
\(0.145\), respectively. Both methods use reference distributions derived
under temporal independence, without a long-run variance adjustment.
Moreover, the independent univariate \(t_5\) innovations differ from
the elliptical construction in Section~\ref{subsec:simulation-design},
where a single random scale multiplies the whole Gaussian vector.
The results therefore reflect departures from both temporal independence
and the elliptical model. The rejection probabilities under the
alternatives should be interpreted alongside these size distortions.

\clearpage
\section{Additional Analysis of S\&P 500 Returns}
\label{supp:sec:empirical}

\subsection{Tail behavior of stock returns}
\label{supp:subsec:empirical-tails}

We further examine the tail behavior of the daily log returns in the
200-stock panel from Section~\ref{sec:application}. Each stock--year
series is centered and standardized separately. For legibility, the
Q--Q plot in Figure~\ref{supp:fig:sp500-tail-qq} uses a fixed random
subsample of at most 200,000 standardized observations; all numerical
summaries use the full sample.

Across the 3,600 stock--year series, the median sample excess kurtosis is
\(2.5\), and the 90th and 95th percentiles are \(10.355\) and \(15.795\),
respectively. In addition, \(43.2\%\) of the series have excess kurtosis
above \(3\). Among the pooled standardized returns, the proportion
with an absolute value exceeding \(4\) is \(0.00424\), compared with
\(2\Phi(-4)\approx0.00006\) under the standard normal distribution.
These summaries and the departures in both tails of the Q--Q plot
indicate heavier empirical tails than under a Gaussian model,
supporting the use of methods that accommodate heavy-tailed returns.

\begin{figure}[tb]
\centering
\includegraphics[width=0.68\textwidth]{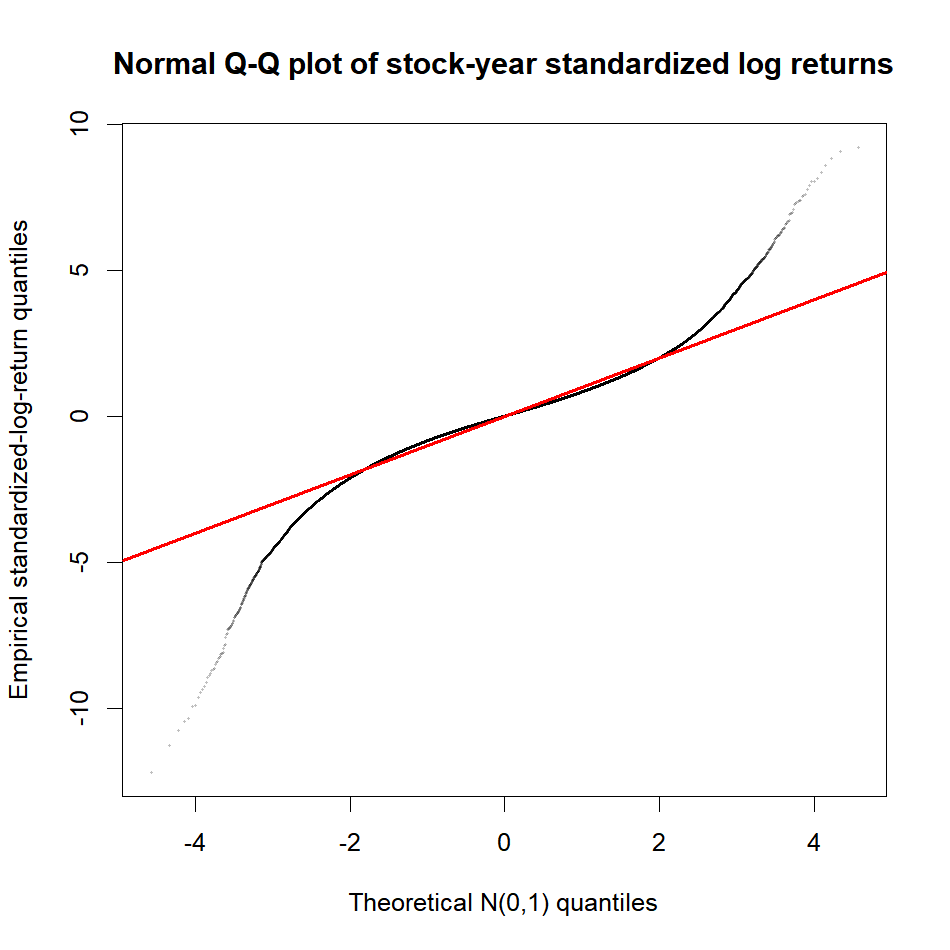}
\caption{Normal Q--Q plot of daily log returns for the fixed 200-stock
S\&P 500 panel, standardized within each stock--year series.
The red line shows \(y=x\).}
\label{supp:fig:sp500-tail-qq}
\end{figure}

\clearpage
\subsection{Comparison with the sample-covariance benchmark}
\label{supp:subsec:empirical-comparison}

Table~\ref{supp:tab:sp500-method-comparison} compares the proposed tests
with those of \citet{supp:FanLiXiaZheng2026}, using the same 200-stock panel,
sample windows, and selected factor numbers as in
Section~\ref{sec:application}. The benchmark results are computed for
our fixed panel; the stock identifiers and sampling seed used in their
published application are unavailable.

At level \(0.05\), the methods give the same decision for 28 of the
36 testable hypotheses concerning eigenvalue or eigendirection equality.
Both reject PC1 eigendirection equality at all seven dates. For PC2,
the eigenvalue decisions differ in June 2009, November 2016, and
October 2022. For example, in November 2016, the proposed eigenvalue
test rejects with a p-value of \(0.017\), while the benchmark does not
reject, with a p-value of \(0.801\). The eigendirection tests give the
opposite decisions, with p-values of \(0.292\) and \(0.001\) for the
proposed method and the benchmark, respectively. This comparison shows
that the choice of method can affect the conclusions for individual
components and dates.

\begin{table}[tb]
\centering
\small
\setlength{\tabcolsep}{4.2pt}
\begin{threeparttable}
\caption{The p-values from the proposed method and the benchmark for
testing the eigenvalues and eigendirections of the fixed 200-stock
S\&P 500 panel at the seven candidate dates.}
\label{supp:tab:sp500-method-comparison}
\begin{tabular}{lcc*{4}{c}}
\toprule
& & & \multicolumn{2}{c}{Eigenvalue} & \multicolumn{2}{c}{Eigendirection} \\
\cmidrule(lr){4-5}\cmidrule(lr){6-7}
Date & $(\widehat m_1,\widehat m_2)$ & \(j\) & Proposed & Fan & Proposed & Fan \\
\midrule
\multirow{3}{*}{2008-09-12} & \multirow{3}{*}{$(4,4)$} & 1 & \(\boldsymbol{<0.001}\) & \(\boldsymbol{<0.001}\) & \(\boldsymbol{<0.001}\) & \(\boldsymbol{<0.001}\) \\
 &  & 2 & \(0.185\) & \(0.352\) & \(\boldsymbol{<0.001}\) & \(\boldsymbol{<0.001}\) \\
 &  & 3 & \(0.639\) & \(0.684\) & \(0.430\) & \(0.074\) \\
\addlinespace[1.5pt]
\multirow{3}{*}{2009-06-22} & \multirow{3}{*}{$(4,2)$} & 1 & \(\mathbf{0.026}\) & \(\mathbf{0.034}\) & \(\boldsymbol{<0.001}\) & \(\boldsymbol{<0.001}\) \\
 &  & 2 & \(\mathbf{0.012}\) & \(0.696\) & \(\boldsymbol{<0.001}\) & \(\boldsymbol{<0.001}\) \\
 &  & 3 & NA & NA & NA & NA \\
\addlinespace[1.5pt]
\multirow{3}{*}{2011-12-21} & \multirow{3}{*}{$(3,2)$} & 1 & \(\boldsymbol{<0.001}\) & \(\boldsymbol{<0.001}\) & \(\boldsymbol{<0.001}\) & \(\boldsymbol{<0.001}\) \\
 &  & 2 & \(\mathbf{0.010}\) & \(\mathbf{0.002}\) & \(\boldsymbol{<0.001}\) & \(\boldsymbol{<0.001}\) \\
 &  & 3 & NA & NA & NA & NA \\
\addlinespace[1.5pt]
\multirow{3}{*}{2014-10-07} & \multirow{3}{*}{$(2,4)$} & 1 & \(\mathbf{0.001}\) & \(\mathbf{0.008}\) & \(\boldsymbol{<0.001}\) & \(\boldsymbol{<0.001}\) \\
 &  & 2 & \(\boldsymbol{<0.001}\) & \(\mathbf{0.004}\) & \(\boldsymbol{<0.001}\) & \(\boldsymbol{<0.001}\) \\
 &  & 3 & NA & NA & NA & NA \\
\addlinespace[1.5pt]
\multirow{3}{*}{2016-11-07} & \multirow{3}{*}{$(4,4)$} & 1 & \(\boldsymbol{<0.001}\) & \(\boldsymbol{<0.001}\) & \(\boldsymbol{<0.001}\) & \(\boldsymbol{<0.001}\) \\
 &  & 2 & \(\mathbf{0.017}\) & \(0.801\) & \(0.292\) & \(\mathbf{0.001}\) \\
 &  & 3 & \(0.364\) & \(0.433\) & \(0.350\) & \(\mathbf{0.001}\) \\
\addlinespace[1.5pt]
\multirow{3}{*}{2020-02-21} & \multirow{3}{*}{$(4,7)$} & 1 & \(\boldsymbol{<0.001}\) & \(\boldsymbol{<0.001}\) & \(\boldsymbol{<0.001}\) & \(\boldsymbol{<0.001}\) \\
 &  & 2 & \(0.228\) & \(0.826\) & \(\boldsymbol{<0.001}\) & \(\boldsymbol{<0.001}\) \\
 &  & 3 & \(0.583\) & \(0.944\) & \(\mathbf{0.012}\) & \(0.164\) \\
\addlinespace[1.5pt]
\multirow{3}{*}{2022-10-17} & \multirow{3}{*}{$(6,5)$} & 1 & \(\mathbf{0.035}\) & \(0.065\) & \(\boldsymbol{<0.001}\) & \(\boldsymbol{<0.001}\) \\
 &  & 2 & \(\mathbf{0.005}\) & \(0.103\) & \(0.152\) & \(\mathbf{0.027}\) \\
 &  & 3 & \(0.961\) & \(0.989\) & \(0.482\) & \(0.366\) \\
\bottomrule
\end{tabular}
\begin{tablenotes}[flushleft]\footnotesize
\item Fan denotes the sample-covariance benchmark.
Bold entries indicate rejection at \(0.05\);
NA denotes \(j>\min(\widehat m_1,\widehat m_2)\).
\end{tablenotes}
\end{threeparttable}
\end{table}


\end{document}